\title{Two for One, One for All:\! Deterministic LDC--based Robust Computation in Congested Clique}
\theoremstyle{plain}
\newtheorem{theorem}{Theorem}[section]
\newtheorem{lemma}[theorem]{Lemma}
\newtheorem{observation}[theorem]{Observation}
\newtheorem{corollary}[theorem]{Corollary}
\newtheorem{definition}[theorem]{Definition}
\algnewcommand\algorithmicparfor{\textbf{parallel for}}
\algnewcommand\algorithmicpardo{\textbf{do}}
\algnewcommand\algorithmicendparfor{\textbf{end\ parallel for}}
\algnewcommand\algorithmicforeach{\textbf{for each}}
\newcommand{\multiline}[1]{%
  \begin{tabularx}{\dimexpr\linewidth-\ALG@thistlm}[t]{@{}X@{}}
    #1
  \end{tabularx}
}
\DeclareMathOperator{\poly}{poly}
\newcommand{\F}{\mathbb{F}}
\newcommand{\Fp}{\mathbb{F}_p}
\newcommand{\last}{\mathsf{last}}
\DeclareMathOperator{\Hamm}{\mathrm{Hamm}}
\newcommand{\congest}{\textsf{Congest}\xspace}
\newcommand{\cliquefull}{\textsf{Congested Clique}\xspace}
\newcommand{\Allocate}{\textsc{Allocate}\xspace}
\newcommand{\Store}{\textsc{Store}\xspace}
\newcommand{\Retrieve}{\textsc{Retrieve}\xspace}
\newcommand{\BulkRetrieve}{\textsc{BulkRetrieve}\xspace}
\newcommand{\alive}{\mathcal{A}}
\newcommand{\LDCEnc}{\mathsf{Enc}\xspace}
\newcommand{\LDCDec}{\mathsf{Dec}\xspace}
\newcommand{\LDCDetDec}{\mathsf{DetDec}}
\newcommand{\inStep}{\textsc{\footnotesize AttemptDoubling}}
\newcommand{\outStep}{\textsc{\footnotesize NodeDoubling}}
\newcommand{\layerStep}{\textsc{\footnotesize{layer}}}
\newcommand{\layer}{\mathsf{layer}}
    \mathchardef\mhyphen="2D
    \newcommand{\fanin}{\mathsf{fan\mhyphen in}}
    \newcommand{\fanout}{\mathsf{fan\mhyphen out}}
    \newcommand{\fantotal}{\mathsf{fan}}
    \newcommand{\wireLoad}{\Lambda}
    \newcommand{\gates}{\mathsf{gates}}
    \newcommand{\wires}{\mathsf{wires}}
\newcommand{\maxWidth}{\omega}
\newcommand{\maxWaitRet}{\mathsf{maxRetTime}}
\newcommand{\maxWaitStore}{\mathsf{maxStoreTime}}
\begin{document}

\date{}
\author{
Keren Censor-Hillel \\
	\small Technion \\
	\small ckeren@cs.technion.ac.il \\
	\and
	Orr Fischer \\
	\small Bar-Ilan University \\
	\small orr.fischer@biu.ac.il \\
	\and
	Ran Gelles \\
	\small Bar-Ilan University \\
	\small ran.gelles@biu.ac.il \\
	\and
	Pedro Soto \\
	\small Virginia Tech \\
	\small pedrosoto@vt.edu \\}
\maketitle

\begin{abstract}
We design a deterministic compiler that makes any computation in the \textsf{Congested Clique} model robust to a constant fraction $\alpha<1$ of adversarial crash faults. 
In particular, we show how a network of~$n$ nodes can compute
any circuit of depth~$d$, width~$\omega$, and gate total fan~$\Delta$,  in
$d \cdot \lceil \frac{\omega}{n^2}+\frac{\Delta}{n} \rceil \cdot 2^{O(\sqrt{\log{n}}\log\log{n})}$
rounds in such a faulty model.
As a corollary, any $T$-round \textsf{Congested Clique} algorithm 
can be compiled into an algorithm that completes in $T^2 n^{o(1)}$ rounds in this model.

Our compiler obtains resilience to node crashes by coding information across the network, and its main underlying observation is that we can leverage locally-decodable codes (LDCs) to maintain a low complexity overhead, as these allow recovering the information needed at each computational step by querying only small parts of the codeword, instead of retrieving the entire coded message, which is inherent when using block codes.

The main technical contribution is that because erasures occur in known locations, which correspond to crashed nodes, we can \emph{derandomize} classical LDC constructions by deterministically selecting query sets that avoid sufficiently many erasures. 
Moreover, when decoding multiple codewords in parallel, our derandomization \emph{load-balances} the queries per-node, thereby preventing congestion and maintaining a low round complexity.

Deterministic decoding of LDCs presents a new challenge: the adversary can target precisely the (few) nodes that are queried for decoding a certain codeword.
We overcome this issue via an adaptive doubling strategy: 
if a decoding attempt for a codeword fails, the node doubles the number of its decoding attempts.
We employ a similar doubling technique when the adversary crashes the decoding node itself, replacing it dynamically with two other non-crashed nodes. 
By carefully combining these two doubling processes, we overcome the challenges posed by the combination of a deterministic LDC with a worst case pattern of crashes.
\end{abstract}


\widowpenalty=5000
\clubpenalty=5000
\linepenalty=500
\sloppy

\section{Introduction}
\label{sec:intro}
Robustness is a crucial component in the design of distributed algorithms, as faults lie at the heart of distributed computing environments. Thus, addressing various types of failures has been heavily studied, see, e.g., seminal results covered in classic books on distributed computing \cite{AttiyaWelch, Lynch96, peleg2000distributed}. 
In this paper, we focus on \emph{node crashes} in the \cliquefull model (introduced by Lotker et al.~\cite{LPPP05}), in which the computing devices use bandwidth-restricted point-to-point communication over a complete network graph. 
So far, despite abundant research in this model (see related work in \Cref{subsec:related}), relatively less attention was given to fault tolerance questions in this useful model~\cite{AMPV22,KMS22,KM23,MR23,FP25}.

In some settings, such as the \congest model or the work in \cliquefull of \cite{KMS22,KM23}, if node crashes are allowed then the requirement is that the output of the computation corresponds only to the inputs of non-crashed nodes. In contrast, Censor-Hillel and Soto~\cite{CS25} show how to avoid losing any information in the \cliquefull despite node crashes. They do this by having each node encode its input and the results of any subsequent local computations using erasure correction codes,  split the codewords to pieces and distribute them to the nodes of the network. 
Upon a node crash, the other nodes collect the pieces of the relevant codeword and decode it in order to continue the computation. 
This approach implies that any task
over a network of $n$ nodes (with the typical setup of $O(n\log{n})$ bits of input per node which we will also work with here)
can be computed in~$O(n)$ rounds despite crashes:
after each node distributes its inputs in an encoded way, all other nodes gather all the pieces and reconstruct the entire $O(n^2\log n)$-bit input, from which they can locally compute the output.
The work~\cite{CS25} beats this bound for certain tasks whose circuit representation has good properties. 
While their work can achieve fast resilient algorithms for well-behaved circuits, e.g., retain the $\tilde{O}(n^{1/3})$-round complexity for matrix multiplication~\cite{CKPLPS15} even with faults,
for a \emph{general circuit}, 
their resilient algorithm may incur a multiplicative overhead of $n$~rounds, which is worse than the solution that simply learns the entire encoded input.

In this paper, we provide a compiler that makes \cliquefull algorithms robust to a constant fraction of node crashes, by computing general circuits faster. This results in a compiler with a complexity of $T^2 n^{o(1)}$ rounds for a \cliquefull algorithm of $T$ rounds, and thus it  beats the ${O}(n)$-round solution when $T=o(n^{1/2-o(1)})$. In the crash model we consider, node crashes occur at the start of a round, and the adversary may crash up to $\alpha n$ nodes in total throughout the algorithm, for a fault parameter $\alpha \in [0,1)$.

\begin{theorem}
\label{cor:main-formal}
Let $\mathsf{ALG}$ be any \cliquefull algorithm that completes in $T$~rounds. 
Then, for any $\alpha\in[0,1)$, there is an equivalent algorithm $\mathsf{ALG}'$ that is resilient to an $\alpha$ fraction of crashes and
completes in $T^2n^{o(1)}$~rounds.
\end{theorem}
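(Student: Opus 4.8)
The plan is to derive this corollary from the circuit-compilation result stated in the abstract, namely that a circuit of depth $d$, width $\maxWidth$, and gate total fan $\maxFan$ can be computed resiliently in $d \cdot \lceil \frac{\maxWidth}{n^2} + \frac{\maxFan}{n}\rceil \cdot 2^{O(\sqrt{\log n}\log\log n)}$ rounds. The first step is to convert the given $T$-round \cliquefull algorithm $\mathsf{ALG}$ into a Boolean (or $\F_p$-arithmetic) circuit that captures one round of its execution per ``layer''. In a round of the \cliquefull model each of the $n$ nodes sends an $O(\log n)$-bit message to each of the other $n$ nodes and then performs an arbitrary local computation on the $O(n\log n)$ bits it receives together with its $O(n\log n)$-bit local state. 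Encoding one round as a circuit layer therefore gives a layer of width $\maxWidth = \tilde{O}(n^2)$ (one gate per message bit, i.e.\ $\Theta(n^2\log n)$ gates), and over $T$ rounds the total depth is $d = T \cdot \mathrm{poly}(n)$, where the $\mathrm{poly}(n)$ factor accounts for the depth of the local-computation sub-circuits; crucially this factor is absorbed into $n^{o(1)}$ only if we are careful, so I would instead keep the depth bounded by $d = T \cdot \tilde{O}(1)$ by charging the local computation to the width/fan budget of a bounded number of sublayers — see below.

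The key accounting step is to check that the per-layer cost $\lceil \frac{\maxWidth}{n^2} + \frac{\maxFan}{n}\rceil$ is $\tilde{O}(1)$ for the circuit produced from $\mathsf{ALG}$. For width: each layer has $\maxWidth = \Theta(n^2\log n)$ gates, so $\frac{\maxWidth}{n^2} = O(\log n)$. For fan: the communication pattern of one \cliquefull round is an all-to-all exchange, so each node's outgoing/incoming message bits induce a total fan of $\maxFan = \tilde{O}(n^2)$ summed over the layer, but per-gate fan is $O(\log n)$ (a message bit feeds into only the local computation of its recipient), and the relevant quantity $\maxFan/n$ in the bound is the total fan of the layer divided by $n$, giving $\tilde{O}(n)$ — wait, this must be re-examined: I would appeal to the precise definition of $\maxFan$ used in the body (``gate total fan'') and to whatever routing/Lenzen-style load-balancing lemma the paper establishes, which should show that an all-to-all exchange of $O(\log n)$-bit messages costs only $\tilde{O}(1)$ rounds even resiliently, so that the effective per-layer multiplier is $n^{o(1)}$ rather than $\tilde{O}(n)$. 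Concretely, I expect the body to bound the per-layer overhead by $2^{O(\sqrt{\log n}\log\log n)} = n^{o(1)}$ for any layer arising from a single \cliquefull round, precisely because Lenzen's routing shows one round's worth of communication is one ``unit'' of work in this model.

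Putting the pieces together: the total round complexity of $\mathsf{ALG}'$ is
\[
d \cdot \left\lceil \frac{\maxWidth}{n^2} + \frac{\maxFan}{n} \right\rceil \cdot 2^{O(\sqrt{\log n}\log\log n)}
\;=\; \bigl(T \cdot \tilde{O}(1)\bigr) \cdot \tilde{O}(1) \cdot n^{o(1)} \;=\; T \cdot n^{o(1)},
\]
which would actually give the stronger bound $T n^{o(1)}$. The reason the statement only claims $T^2 n^{o(1)}$ is, I believe, that the circuit for $T$ rounds cannot in general be built obliviously: the communication pattern of round $t$ depends on the outcomes of rounds $1,\dots,t-1$, so when compiling ``on the fly'' each node may need to retrieve up to $\tilde{O}(n)$ stored codeword pieces corresponding to the messages destined to it, and collecting these across $T$ rounds in the worst case costs an extra factor of $T$ (alternatively, the depth of the local-computation circuit between communication rounds can be as large as $T$ for a general algorithm, contributing $d = T^2\,\mathrm{poly}\log n$). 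So the final step is to track which of these two sources contributes the extra $T$ and bound it cleanly, yielding $d \le T^2 \cdot \tilde{O}(1)$ and hence total complexity $T^2 n^{o(1)}$.

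The main obstacle is the second step — verifying that the all-to-all communication of a single \cliquefull round, when realized resiliently through the LDC-based \Store/\Retrieve machinery, contributes only an $n^{o(1)}$ factor per layer rather than a $\mathrm{poly}(n)$ factor. This is exactly where the load-balanced deterministic decoding and the two doubling processes described in the abstract are needed, so the corollary is genuinely a repackaging of the main technical theorem; the remaining work is the bookkeeping that turns ``depth/width/fan'' into ``$T$'' and pinning down precisely where the squaring enters.
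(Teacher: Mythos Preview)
Your high-level plan is right---convert $\mathsf{ALG}$ to a circuit and apply \Cref{thm:circuit}---but you misidentify where the second factor of $T$ comes from, and both of your speculations about it are incorrect.

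The circuit the paper uses (\Cref{lem:cliq_to_circ}) has depth $d=2T+1$, width $\maxWidth=\Theta(Tn^2\log n)$, and max gate total fan $\Delta=O(Tn\log n)$. The reason $\maxWidth$ and $\Delta$ scale with $T$ is simple: after $t$ rounds, each node's state consists of its original input plus all $\Theta(tn\log n)$ bits it has received so far, and this state must be carried forward as wires in the circuit. Hence the layer corresponding to round $t$ has $\Theta(tn^2\log n)$ outgoing wires in total, and a gate representing a node's local computation at round $t$ has fan-in $\Theta(tn\log n)$. Taking the maximum over $t\le T$ gives the stated bounds. Plugging into \Cref{thm:circuit},
\[
d\cdot\left\lceil \frac{\maxWidth}{n^2}+\frac{\Delta}{n}\right\rceil\cdot 2^{O(\sqrt{\log n}\log\log n)}
\;=\;\Theta(T)\cdot\Theta(T\log n)\cdot n^{o(1)}
\;=\;T^2 n^{o(1)}.
\]

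Two specific errors in your accounting: first, $\Delta$ is the maximum total fan of a \emph{single gate}, not the total fan of a layer; second, neither of your proposed explanations for the extra $T$ applies. There is no ``on the fly'' compilation---the circuit is fixed once $\mathsf{ALG}$ is fixed---and the local computation contributes only $O(1)$ to the depth because the circuit model here uses arbitrary gates, so one round of $\mathsf{ALG}$ becomes $O(1)$ layers regardless of how complex the local computation is. The squaring arises entirely from the growth of per-node state, which inflates $\maxWidth/n^2+\Delta/n$ to $\Theta(T\log n)$.
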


A concrete example for an application of \Cref{cor:main-formal} is computing exact single-source-shortest-paths (SSSP) in weighted undirected graphs. The $\tilde{O}(n^{1/6})$-round algorithm of \cite{CDKL21} translates by our compiler to an algorithm that completes in  $n^{1/3+o(1)}$ rounds, even if any $\alpha n$ nodes may crash during its execution, for any $\alpha<1$.

At the heart of our compiler is a faster deterministic algorithm for computing general circuits in the faulty model. The following is our main technical contribution.

\begin{restatable}{theorem}{MainWireThm}
\label{thm:circuit}
Let $C$ be circuit of depth $d$, 
max total-fan $\Delta$, and width $\maxWidth$. 
Then, for any $\alpha\in[0,1)$
there exists a deterministic \cliquefull  algorithm for computing the output of~$C$ in the presence of~$\alpha n$ crashes, 
whose round complexity is $d \cdot \lceil \frac{\maxWidth}{n^2}+\frac{\Delta}{n} \rceil \cdot 2^{O(\sqrt{\log{n}}\log\log{n})}$.
\end{restatable}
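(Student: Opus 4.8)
The plan is to process the circuit layer by layer, maintaining the invariant that at the start of processing layer $i$, the values on all wires entering layer $i$ are stored across the network in a fault-tolerant encoded form, load-balanced so that each alive node holds roughly an equal share of each codeword. To set this up I would first fix a locally-decodable code $\LDCEnc$ with constant rate and polylogarithmic (or $n^{o(1)}$) query complexity, tolerating a constant fraction of erasures; the $2^{O(\sqrt{\log n}\log\log n)}$ factor in the bound is exactly what one pays for such LDC parameters together with the derandomization. Each wire value (or block of $\Theta(n\log n)$ such values) gets encoded into a codeword whose symbols are distributed across the $n$ nodes. Since a layer has width $\maxWidth$ and total fan $\Delta$, there are $O(\maxWidth)$ wire values to store and $O(\Delta)$ wire-to-gate incidences to serve; spreading these across $n^2$ point-to-point channels per round gives the $\lceil \maxWidth/n^2 + \Delta/n\rceil$ factor per layer, and $d$ layers gives the stated product.

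The core of the argument is the \emph{deterministic, load-balanced decoding} step, invoked once per layer: every gate in layer $i+1$ must learn its (up to $\maxFan$) input wire values, each of which lives in some codeword. Classical LDC decoding picks a random query set that misses the erasures with good probability; here the erasure set is \emph{known} (it is exactly the set of crashed nodes), so I would instead argue that for each codeword there are many disjoint query sets avoiding the $\le \alpha n$ crashed positions, and then set up a bipartite assignment problem: match (codeword, gate)-requests to query sets so that no alive node is overloaded. A greedy or flow-based argument shows such an assignment exists because each node participates in only a bounded fraction of query sets and each request has many admissible sets; this is what keeps the per-round congestion at the claimed level. The main obstacle — and the place I expect the real work to be — is that the adversary is adaptive: once the nodes commit to a query-set assignment for a codeword, the adversary can crash precisely those few queried nodes. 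The fix, following the doubling strategy flagged in the abstract, is that when a decoding attempt for a codeword fails (because too many of its queried nodes crashed mid-step), the responsible node \emph{doubles} the number of parallel query sets it uses for that codeword on the retry, and simultaneously, if the decoding node itself crashes, its role is handed to two fresh alive nodes who each restart. I would charge the cost of these doublings to the crash budget: each doubling step that is forced must be "paid for" by $\Omega(1)$ fresh crashes among the newly queried nodes (or on the decoding node), and since only $\alpha n$ crashes occur in total, the number of doublings — summed over the whole execution — is $O(n)$ and the geometric nature of doubling means the total extra work is within a constant factor, absorbed into the $n^{o(1)}$ slack. Making this amortization precise, and ensuring the two interleaved doubling processes (on queried sets and on decoding nodes) do not interfere, is the delicate part.

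Concretely, the steps in order are: (1) fix the LDC and its erasure-avoiding query-set family, establishing that for any set of $\le\alpha n$ erasures there are $\ge \Omega(n/q)$ pairwise "node-disjoint enough" decoding query sets of size $q = n^{o(1)}$; (2) describe the per-layer encoding/redistribution of wire values, verifying the $\lceil \maxWidth/n^2 + \Delta/n\rceil$ per-layer communication bound via Lenzen-style routing; (3) formulate and solve the load-balanced query-set assignment so that, absent new faults, one layer costs $\lceil \maxWidth/n^2+\Delta/n\rceil \cdot 2^{O(\sqrt{\log n}\log\log n)}$ rounds; (4) handle adaptive faults via the coupled doubling scheme — bound the total number of doublings by $O(n)$ using the crash budget, and conclude the cumulative overhead is an $n^{o(1)}$ multiplicative factor; (5) re-encode layer $i+1$'s outputs and restore the invariant, noting that decoding node crashes between layers are absorbed by the same replacement mechanism. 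Summing over $d$ layers yields the theorem. The hard part, to reiterate, is step (4): proving that the adversary cannot, by concentrating crashes, force super-constant amortized overhead per crash, given that it gets to see the (deterministic, hence predictable) query assignments before striking.
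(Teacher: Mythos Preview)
Your high-level architecture matches the paper's: layer-by-layer computation, LDC-encoded storage, derandomized query selection exploiting known erasure locations, and a two-level doubling scheme (retry attempts per wire, and replacement nodes per gate). Where the proposal has a genuine gap is the amortization in step~(4). The claim ``each doubling step that is forced must be paid for by $\Omega(1)$ fresh crashes'' is false in the direction you need: a single crashed node can be a queried position for up to $\Theta(\lceil \wireLoad q/n\rceil\log n)$ different \Retrieve attempts of a fixed decoder $v_j$ (this is exactly the load-balance bound from your step~(3)), so one crash can simultaneously invalidate one attempt for each of \emph{many} distinct wires. Hence the total number of forced doublings is not bounded by $O(n)$ via this charge, and the ``geometric series gives constant overhead'' conclusion does not follow. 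The paper does not amortize globally over the crash budget in this way. Instead it proves a per-step halving invariant: in the $\ell_2$-th inner iteration, $v_j$ makes $2^{\ell_2}$ attempts for each of its $|W_j|\le \wireLoad/2^{\ell_2-1}$ remaining wires (so the total number of attempts stays $\le 2\wireLoad$ throughout), and shows that $x$ new crashes can fully kill at most $x\cdot c_1\lceil \wireLoad q/n\rceil\log n/2^{\ell_2}$ wires; thus $|W_j|$ halves \emph{provided} $x\le c_f n/(q\log n)$. The case $x>c_f n/(q\log n)$ is handled by a separate mechanism you omit: the entire layer is \emph{restarted} (``overwhelmingly faulty'' repetition), and since each restart burns $\Omega(n/(q\log n))$ of the adversary's budget, there are only $O(q\log n)$ restarts over the whole execution. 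Without this restart-and-recharge step, the adversary can concentrate crashes to defeat your amortization.

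Two smaller points. First, ``constant rate and polylogarithmic query complexity'' LDCs with constant distance are not known (this is a famous open problem); the paper uses a Reed--Muller code with query complexity $q=2^{O(\sqrt{\log n})}$ and rate $\rho=2^{-O(\sqrt{\log n}\log\log n)}$, and the latter is the dominant source of the $2^{O(\sqrt{\log n}\log\log n)}$ overhead. Second, your derandomization via a flow/matching argument is a reasonable alternative, but note that the paper's probabilistic-method construction (random strings plus rejection on over-erased query sets, then Chernoff for load) yields a stronger guarantee than mere load balance: \emph{every individual} \Retrieve call succeeds absent new crashes. This per-attempt correctness is what makes the halving argument above go through cleanly, since the adversary must newly crash at least one queried node per failed attempt.
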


While our transformation also goes through computing circuits, our algorithm differs from that of \cite{CS25} in two main aspects, which we discuss next.

The first regards the initialization phase of the resilient algorithm.
Note that if the adversary fails even a single node before the start of the computation, then this node's input is lost forever.
The solution taken by~\cite{CS25} is  promising $1/(1-\alpha)$ \emph{quiet} rounds 
where no crashes can happen.
These quiet rounds can be used 
to encode and distribute the nodes inputs.
We take a different approach: we consider the inputs at the start of the computation 
as \emph{encoded versions} of the inputs to the \cliquefull model (similar to Spielman's coded computation~\cite{Spielman96}). 
Since our robust algorithm is such that the outputs are also encoded in this manner, we get \emph{composability} for free---we can simply start another computation after completing a former one.

Second, instead of using block error correction codes, we use \emph{locally decodable codes (LDCs)}~\cite{KT00,Yekhanin12}, which allow a node to query only a small number of other nodes in order to decode only the information it needs for its next local computation. This yields a dramatic improvement in the round complexity of computing a circuit because of its huge effect on congestion. 
Working with LDCs gives rise to new challenges, because decoding an LDC codeword can be easily targeted by the adversary which can crash so many nodes. The crux of the proof of \Cref{thm:circuit} shows how to overcome this, and that in fact it can be implemented in a \emph{deterministic} manner. %

\medskip

\Cref{cor:main-formal} is immediately established by combining our robust computation of circuits in  \Cref{thm:circuit} with the natural conversion of any \cliquefull algorithm to a circuit (see a formal proof in, e.g., \cite{CS25}).

\begin{restatable}{lemma}{CliqueToCirc}
\label{lem:cliq_to_circ}
Let $\mathsf{ALG}$ be a \cliquefull algorithm that computes a function $f$ in $T$ rounds.
Then, there is a  
circuit
with depth $d=2T+1$, width $\maxWidth = \Theta(Tn^2\log{n})$, and maximal gate total fan of $\Delta=O(Tn\log{n})$, whose inputs are all the $O(n^2\log n)$ input bits of the nodes and whose outputs are the outputs of the nodes  after running $\mathsf{ALG}$.
\end{restatable}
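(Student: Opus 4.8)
The plan is to simulate $\mathsf{ALG}$ round by round, representing the state of each node before round $t$ by a block of $O(n\log n)$ wires and building one ``layer'' of gates per round to carry out the local computation and the message exchange. First I would fix the promised encoding of the $O(n^2\log n)$ input bits as the circuit's input wires. Then, for each round $t=1,\dots,T$, I would insert two sublayers: a \emph{compute} sublayer in which, for every node $v$, a bounded-fan-in tree of gates takes $v$'s current state (and its incoming messages from the previous round) and produces (i) its new state and (ii) the $n$ outgoing messages $m_{t}(v\to u)$, each of length $O(\log n)$; and a \emph{routing} sublayer consisting of wires (or trivial identity gates) that deliver $m_t(v\to u)$ to the block associated with $u$. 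Because in \cliquefull a node performs unbounded local computation on $O(n\log n)$ bits and sends/receives $O(n\log n)$ bits per round, each per-round local function is computable by a Boolean circuit, and I would only need that such a circuit exists (of \emph{some} polynomial size) — the width and fan bounds will come from counting, not from the internal structure of that local circuit. After round $T$, designate the wires holding the nodes' outputs as the circuit's output wires.

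Next I would verify the three parameters. For \textbf{depth}: each round contributes a compute sublayer and a routing sublayer, giving $d=2T+1$ once the input layer is counted (the ``$+1$''). For \textbf{width}: at any layer we need the $n$ state blocks plus the $n^2$ message words, each $O(\log n)$ bits, so $\maxWidth=\Theta(Tn^2\log n)$ — the $T$ factor absorbs the gates internal to the per-round local circuits across all layers, or one simply notes the peak width of a single round is $\Theta(n^2\log n)$ and the stated bound is a safe over-count. For \textbf{total fan} $\Delta$ (fan-in plus fan-out of a single gate): a routing wire delivering $m_t(v\to u)$ has fan-in $1$ but its source — the bit of $v$'s message word — may need to be read by the decode/compute structure of $u$, and more importantly $v$'s state bits are consumed while producing $n$ outgoing messages, so any single state-bit wire can have fan-out $O(n\log n)$; symmetrically a gate in the next compute layer reads $O(n\log n)$ bits (its own state block plus $n$ incoming words). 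So $\Delta=O(Tn\log n)$, again with a slack $T$ factor. I would present these as three short paragraphs or a displayed list of the three bounds with a one-line justification each.

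The main obstacle, and the only place requiring care, is controlling the \emph{fan}: a naive compute sublayer that lets one gate directly compute a node's entire $O(n\log n)$-bit outgoing message from its $O(n\log n)$-bit state would have both fan-in and fan-out $\Theta(n\log n)$, which is fine for the $O(Tn\log n)$ target, but if the per-round local function were encoded carelessly some intermediate gate could have larger fan; the fix is to route every large-fan interaction through balanced fan-in-two (or fan-in-$O(\log n)$) trees of depth $O(\log n)$, which keeps each individual gate's total fan $O(\log n)$ except at the unavoidable fan-out $O(n\log n)$ points where a state word feeds $n$ distinct message computations. Since this only multiplies the per-round depth by $O(\log n)$, it is absorbed into the width/fan slack and the depth bound $2T+1$ still holds if one is slightly generous (or one re-derives $d=\Theta(T\log n)$, but the cited version in \cite{CS25} already fixes the accounting so I would just invoke it). I would close by remarking that correctness is immediate: by induction on $t$, the state block for $v$ before layer $2t-1$ holds exactly the state of $v$ after $t-1$ rounds of $\mathsf{ALG}$, since each layer faithfully reproduces one round's local computation and message delivery; hence the output wires equal the nodes' outputs of $\mathsf{ALG}$, and the function computed is $f$.
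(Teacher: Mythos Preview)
The paper does not give its own proof of this lemma; it simply cites \cite{CS25} for ``the natural conversion of any \cliquefull algorithm to a circuit''. So there is no paper argument to compare against, and I evaluate your sketch on its own merits.

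There is a genuine gap. You assume that ``the state of each node before round $t$'' fits in a block of $O(n\log n)$ wires. For a general $\mathsf{ALG}$ this is false: a node's behavior in round $t$ is determined by its input ($O(n\log n)$ bits) together with \emph{all} messages it received in rounds $1,\dots,t-1$ (another $O((t{-}1)\,n\log n)$ bits), and there is no reason this $O(tn\log n)$-bit transcript can be compressed to $O(n\log n)$ bits --- local computation in \cliquefull is unbounded, so the round-$t$ message function may depend arbitrarily on the entire history. This growing transcript is precisely where the $T$ factor in $\omega$ and $\Delta$ comes from: the gate simulating node~$v$ in round~$T$ must read $\Theta(Tn\log n)$ wires, and the total state carried at the last layer is $\Theta(Tn^2\log n)$. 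Your treatment of the $T$ factor as ``slack'' or a ``safe over-count'' is therefore not a harmless looseness; it is a symptom of the construction being under-specified for general algorithms.

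A second, related issue is the ``bounded-fan-in tree of gates'' you propose for each node's local computation. In the paper's circuit model (\Cref{sec:circuit}) gates are arbitrary, so one gate per node per sublayer suffices and no fan-reduction trees are needed. Decomposing into bounded-fan-in Boolean gates is not just unnecessary but dangerous: for an arbitrary function on $\Theta(n\log n)$ inputs there is no polynomial-size bounded-fan-in circuit in general, so your parenthetical ``of some polynomial size'' is unjustified, and you yourself note the trees would inflate the depth beyond $2T{+}1$. Invoking \cite{CS25} to ``fix the accounting'' after introducing this detour is not a proof. The clean construction uses a single arbitrary gate per node per sublayer whose inputs are the node's full transcript so far; the stated bounds then follow by direct counting.
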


\subsection{Technical Overview}
\label{sec:overview}
Our main theorem states that we can robustly compute a circuit~$C$ of depth~$d$ and arbitrary gates, despite a possible (worst-case) crash of an $\alpha$ fraction of the nodes.
Towards this end, the network computes the gates of~$C$, layer by layer, where each node is assigned some of the gates in each layer in a dynamic manner that adapts to node failures. 

As mentioned above, when a node crashes, it can no longer send messages, and thus any information that was privately held in that node's memory is lost forever. 
To be resilient to crashes and avoid losing important information, we need to store information ``in the network'' so that it can be retrieved despite a constant fraction of node crashes. Indeed, \cite{CS25} used block error correction codes (ECCs) to encode data and split the resulting codewords across the network. This way, each piece of information can be retrieved by querying all the nodes for their part of the codeword; the decoding succeeds even if 
a constant fraction of the nodes crash, where the constant depends on the strength of the ECC in use. The drawback is that even if a single bit of information is needed from a coded string, its entire codeword needs to be decoded.

Our starting point is that we replace block codes with Locally Decodable Codes (LDCs).
Informally, such codes allow decoding specific \emph{parts} of the message, rather than decoding the entire message. Furthermore, decoding does not require obtaining the entire (corrupted) codeword, but rather queries relatively small parts of it (a subpolynomial number of symbols), while still guaranteeing  correct decoding of the desired symbol with a high probability.

Hence, switching to LDCs benefits our algorithm in the sense that nodes make \emph{less queries} in order to retrieve \emph{exactly} the information they need for the computation. 
The advantage of LDCs over standard (i.e., block-codes) ECCs becomes clear when considering the case where some node is assigned to compute a gate with $n$~inputs,  each of which is stored in a different codeword. 
With standard ECCs, this means that the node must access all $n$ codewords and retrieve all information bits, i.e., $n^2$ bits, assuming $n$-bit codewords, which causes high congestion.

\subparagraph*{Algorithm Overview.}
The high-level idea of our robust circuit computation algorithm is as follows. 
Consider a circuit~$C$, whose inputs are distributed over the network in an encoded manner using some LDC code.
The network  computes~$C$ layer by layer. 
That is, let $\gates(1)$ be the first layer of gates in~$C$, i.e., all the gates whose inputs are the inputs of~$C$. 
We first distribute the tasks of computing these gates across the (non-crashed) nodes so that each node is assigned roughly the same number of gates to compute. Each node then attempts to compute all the gates allocated to it. 
To do this, the node first obtains the inputs for each gate assigned to it by decoding the corresponding inputs of~$C$, which are stored in the network using codewords of an LDC.
If this information retrieval is successful, the node computes the outputs of the gates allocated to it, and then ``stores'' them in the network by encoding them with an LDC and distributing the codewords to the nodes in the network. 
Once the first layer is computed and stored in the network, the nodes continue to compute the second layer of gates in~$C$, denoted $\gates(2)$, which includes all gates whose inputs are either the inputs of $C$ or the outputs of the gates in the first layer, $\gates(1)$. This continues until all the outputs of~$C$ are computed and stored in the network. 
Naturally, crashes that occur during the algorithm may prevent the network from completing the computation of a specific layer and  progressing to the next layer, which we discuss next.

\subparagraph*{Overcoming Crashes I.} 
In our robust circuit computation algorithm, each gate is assigned to a dedicated node responsible for its computation. 
If that node crashes, the gates assigned to it remain uncomputed. 
A trivial solution is to reassign any uncomputed gate in the current layer of~$C$ (whose original node is crashed) to a new node that is not crashed. 
However, the adversary could then crash this new node and eventually cause a delay of $\alpha n$ rounds, which is extremely expensive.
Our strategy is different: when a node crashes, we reassign its gate(s) to \emph{two} fresh nodes.
If both of those nodes crash, we again double the redundancy, forcing the adversary to double its effort to keep the gate uncomputed. 
After at most $\log n$ such iterations, every gate is guaranteed to be computed by at least one live node. 
This progressive doubling remains feasible without causing excessive congestion because of two factors: 
First, the nodes that do not crash successfully compute and store their assigned gates. These nodes are now available to take over the gates of the crashed nodes.
Second,  the adversary cannot (effectively) corrupt too many nodes in the same round:
If the adversary crashes too many nodes during a short period of time, we call this step \emph{overwhelmingly faulty}, and simply restart the computation of this layer with the remaining nodes. While this translates to no progress, it reduces the adversary's budget of crashes and hence cannot occur too many times.

\subparagraph*{Deterministic LDCs and Congestion.}
The decoding algorithm of LDCs is \emph{inherently random}. 
Indeed, if a fixed (small) number of codeword symbols are queried during a decoding attempt and these symbols are corrupted, then decoding is certainly impossible.
If so, the code cannot correct a constant fraction of corruptions, as would be normally expected. 
In particular, given a budget of  $\alpha n$ node crashes, an all-knowledgeable adversary may be able to crash a subset of nodes in a way that prevents any meaningful progress of the deterministic  computation.

Despite the above conundrum, our robust algorithm is fully deterministic.
In particular, we derandomize the LDC decodings performed throughout the computation while maintaining resilience to~$\alpha n$ node crashes.
Our derandomization relies on two important properties, specific to our model. First, when a node crashes, all other nodes are aware of this event because the crashed node does not send any messages  from the round in which it crashed. 
This allows the remaining nodes to maintain a  consistent view of the crashed and alive set of nodes.
Second, crashed nodes that are queried for their respective parts of the LDC codeword do not reply, and thus the LDC decoding algorithm is missing some parts of the queried codeword, known as \emph{erasure} corruptions. These are easier to correct than when the codeword contains incorrect information. 
The combination of erasure corruptions and knowledge of which nodes have crashed in each round allows a decoder to predict  whether a specific set of queries will result in successful decoding.
Thus the decoder can pick a set of queries that is guaranteed to succeed if no further nodes crash.

While the above idea derandomizes the (inherently random) LDC decoding algorithm, it creates a new challenge regarding the resulting congestion.
To illustrate this challenge, suppose that a node performs the above deterministic selection of queries \emph{separately} for each piece of information it wants to retrieve. 
Then, it may end up querying the same subset of nodes over and over again, thus causing a large congestion. 
Randomized decoding averts this issue by querying a set of nodes in a near-uniform distribution. However, even if we could deterministically replicate this querying distribution, we face again the issue mentioned above, where many of these queries are erased and do not lead to a correct decoding.

Nevertheless, our  analysis, which is based on the probabilistic method, shows that
it is possible to select query sets for multiple (independent) LDC-decoding instances in the presence of a constant fraction of erasures in positions known to the decoding algorithm, 
so that the following hold simultaneously:
(i) each decoding instance successfully decodes the correct information, and 
(ii) the congestion per queried node is small, i.e., the queries are well-distributed over the network. 
To show the latter, we analyze an equivalent bins-into-balls experiment, showing that the event  that too many balls (queries) aggregate in one specific bin (node) happens with small probability. Union-bounding over all nodes  keeps the probability of the bad event below~1, thus proving the existence of good query sets that avoid congestion.

\subparagraph*{Overcoming Crashes II.}
The above discussion implies that the adversary \emph{cannot} select a set of $\alpha n$ nodes to crash for invalidating many of the LDC decoding attempts throughout the computation, as long as these indices are known to the nodes. 
However, the adversary may decide to crash nodes \emph{after} a decoder fixes its selection of nodes to query in a given round, as this selection depends only on nodes that are crashed \emph{prior} to that round. 
To overcome this problem, the nodes dynamically increase the number of times they attempt to LDC decode each piece of information, according to corruptions made so far.
Namely, if the decoding of some LDC-encoded information fails due to \emph{new} corruptions of the queried nodes,
then the decoding node performs \emph{two} independent decoding attempts. 
These new queries depend on all the crashes so far,  and in particular, on the ``new'' crashes that invalidated the original decoding attempt.
If these two attempts  fail as well (due to new crashes that occur after the nodes queried by these two attempts are decided), the decoding node doubles its number of attempts again, and so on.
Overall, after a logarithmic number of doublings, this approach potentially causes a large, near-linear number of LDC decoding attempts, and  the adversary can only fail a constant fraction of them without exceeding its budget. 
Note that it only takes one successful attempt to move on, so the adversary must fail all attempts of a single codeword to prevent progress.

\subparagraph*{Recap.}
We can now summarize the overview of our robust circuit computation. 
For a given circuit~$C$, the computation goes layer by layer, where computing a layer of~$C$ means: (1) assigning uncomputed gates of the layer to non-crashed nodes; (2) retrieving the inputs to the gates of this layer, that are stored in the network via an LDC during the computation of previous layers; (3) computing the gates; (4) storing the outputs of the gates via an LDC. 
Technically speaking, this computation of each layer is done in two nested loops:
The external loop doubles, in each iteration, the number of nodes  responsible for computing some uncomputed gate (we call this loop \emph{the $\outStep$ loop}).
The internal loop doubles, in each iteration, the number of independent decoding attempts each node makes for each uncomputed gate assigned to it (we call this loop \emph{the $\inStep$ loop}). 
\Cref{sec:Simulation} fully details our algorithm, \Cref{sec:analysis} analyzes its correctness and complexity, and \Cref{sec:derandom} shows our derandomization of LDCs in our setup.

\subsection{Additional Related Work}
\label{subsec:related}
Since its introduction for faster MST computation \cite{LPPP05}, the \cliquefull model has been extensively explored during recent decades for various tasks. The MST complexity was eventually shown to be constant \cite{Nowicki21a} following a beautiful line of work \cite{HegemanPPSS15, GhaffariP16, Korhonen16, Jurdzinski018}. Additional examples include routing \cite{Lenzen13}, coloring \cite{Parter18, ParterS18, CFGUZ19, BKM20, CzumajDP21, CoyCDM23}, subgraph finding \cite{DLP12, IzumiG17, Pandurangan0S18, FischerGKO18, Censor-HillelFG20, Censor-HillelGL20, Censor-HillelFG22}, and many more. Hardness of obtaining lower bounds in this model is established in \cite{DKO13}.

Fault-tolerance in the \cliquefull model was explored by 
\cite{KMS22,KM23} for graph realization problems under crash-faults,
by \cite{AMPV22,MR23} for recognizing connectivity and 
hereditary properties under Byzantine faults, and by~\cite{FP23,FP25} for general computations under edge faults. In particular, \cite{FP25} employs LDCs as means to concentrate information from many nodes into few. 

Coding theory, in various forms, has been extensively used in many other areas of distributed computing, including: distributed zero-knowledge \cite{GPP25, BKO22}, proof labeling schemes \cite{fischer2022explicit, Censor-HillelH24}, the beeping model \cite{Davies23a, AshkenaziGL22,GKKM25arXiv}, and distributed interactive proofs \cite{NPY20}.

\section{Preliminaries}
\label{sec:prelim}
    For an integer $n\ge1$ we denote $[n]=\{1,2,\ldots,n\}$. All logarithms are taken to base~2 unless otherwise mentioned. 
    We say that an event occurs with high probability (in~$n$, which is usually implicit) if its probability is at least $1-1/n^{10}$.
    For a string $x$ and for any $i\in [|x|]$, let $x[i]$ denote the $i$-th symbol of~$x$.

\subsection{Computation Model}
\label{sec:prelim:model}
Suppose a \cliquefull network, where $n$~nodes, $v_1,\dots, v_n$, communicate in synchronous rounds by exchanging $b\log{n}$-bit messages in an all-to-all fashion, for some constant $b\in\mathbb{N}$. 
Throughout the computation, an adversary may choose to crash up to $\alpha n$ nodes, where the constant $\alpha\in[0,1)$ is a parameter of the model.
A crashed node does not send any messages starting from the round in which it is crashed. 
The corruption is  \emph{worst case}:  
an all-knowledgeable adversary bases its decision on which nodes to fail on all its available information, including the algorithm that the nodes execute, their inputs, and their local randomness (if any).
Note that all nodes know which nodes are non-faulty at the end of each round, denoted as the set~$\alive$ (to indicate that they are \emph{alive}).

The compiler of \Cref{alg:main} is completely deterministic, in the sense that it does not add any new randomness. 
In particular, 
The robust algorithm $\mathsf{ALG'}$ remains  deterministic if $\mathsf{ALG}$ was deterministic, and similarly, it is  randomized if $\mathsf{ALG}$ was so. In the latter case, we assume that the randomness of $ALG$ is given to the representing circuit~$C$ as input.

\subparagraph*{Coded inputs and outputs.}
In the \cliquefull model, it is common to refer to problems in which each node holds a private input $x$ of $O(n\log n)$~bits. 
As explained in the introduction, in our faulty model, the inputs must be coded in a way that prevents them from being lost if some node crashes before the first round of communication.
We thus consider inputs as encoded via an LDC code (see \Cref{def:LDC} in \Cref{sec:prelim:LDC} below), and the respective codeword is distributed across the nodes of the network.
We employ a $(q,\delta,\epsilon)$-LDC code with block length~$n$, whose parameters will be specified later. 
Such a code guarantees that every bit of the input~$x$ (of each node) can be retrieved by querying $q$ nodes with probability $1-\epsilon$ over a uniform choice of the randomness string, even if $\delta n$ of the nodes have crashed. It will be the case that $\alpha<\delta$. 

We require the output to be stored in the network in a similar way: the outputs should be encoded via an LDC code whose resulting codewords are split among the nodes, so that it is possible to retrieve each bit of the output despite $\delta n$ crashes.  
This choice allows the composition of computations, where the outputs of one computation are the inputs of the next. 

\subsection{Layered Circuits}
\label{sec:circuit}
We identify a circuit~$C$     with a directed acyclic graph $C=(V_C,E_C)$ in which every gate is associated with a node, and every wire connecting gates is associated with an edge.
Each bit-input to the circuit (an input gate) is associated with a leaf node and each output of the circuit (an output gate) with a root node. Other nodes are associated with the computational gates of the circuit. 
We say that a node~$g\in V_C$ \emph{depends} on a node $g'\in V_C$ if there is a directed path from $g'$ to~$g$.
The notion of gate dependencies induces \emph{layers} in the circuit, where all input gates are in layer~$0$, and a gate~$g$ is placed in layer~$i$ if $i$ is the minimal integer such that $g$ depends only on nodes in layers at most~$i-1$.
For a gate~$g\in V_C$, we denote by $\layer(g)$ the layer of~$g$, and let $\gates(i)=\{g\in V_C \mid \layer(g)=i\}$ 
be the set of all the gates in layer~$i$. 
The depth of~$C$, denoted $d=d(C)$, is defined to be $\max_{g\in V_C} \layer(g)$.
We denote by $\wires(i)=\{(u,v)\in E_C \mid \layer(u)=i \}$ 
the set of all wires that go out of the gates in layer~$i$.
Note that $\wires(0)$ are the inputs to the circuit. 
For a gate~$g$, denote by $\fanin(g)$ its in-degree and by $\fanout(g)$ its out-degree; let $\fantotal(g)=\fanin(g)+\fanout(g)$ be its \emph{total fan}.
The \emph{width} of the circuit, $\maxWidth = \maxWidth(C)$, is defined as the maximal number of outgoing wires of any layer,
$\maxWidth = \max_i (\wires(i))$. We assume throughout that all parameters of the circuit are polynomial in the size of the network, i.e. $d,\maxWidth,\Delta = O(\poly(n))$. This fits the case where $C$ represents a \cliquefull algorithm with $O(n\log n)$ bits of input per node. 
We note, however, that the statement of  \cref{thm:circuit} holds, up to logarithmic terms, for any parameters~$d$, $\maxWidth$, and~$\Delta$.

    \begin{SCfigure}[0.6]
        \begin{tikzpicture}[yscale=0.75,
  node distance=3ex and 5ex,
  gate/.style={draw, rectangle, minimum size=5ex},
  input/.style={draw, circle, minimum size=2ex},
  output/.style={draw,  circle, minimum size=2ex},
  wire/.style={->, thick},
  wire2/.style={dashed,->, very thick}
]

\node[input] (in1) at (0,0) {in$_1$};
\node[input] (in2) at (1,0) {in$_2$};
\node[input] (in3) at (2,0) {in$_3$};
\node[input] (in4) at (3,0) {in$_4$};

\node[gate] (g1) at (0,2.5) {$g_1$};
\node[gate] (g2) at (2.5,2.5) {$g_2$};

\node[gate] (g3) at (2,5) {$g_3$};

\node[output] (out2) at (2,7) {out$_2$};
\node[output] (out1) at (0,5)  {out$_1$};

\draw[wire] (in1) -- (g1);
\draw[wire] (in2) -- (g2);
\draw[wire2] (g1) -- (g3);
\draw[wire2] (g2) -- (g3);
\draw[wire] (g3) -- (out2);
\draw[wire] (in1) -- (g2);
\draw[wire] (in3) -- (g2);
\draw[wire] (in4) -- (g2);
\draw[wire] (in2) -- (g3);
\draw[wire2] (g1) -- (out1);

\node at (-1.75, 0) {Layer 0};
\node at (-1.75, 2.5) {Layer 1};
\node at (-1.75, 5) {Layer 2};
\node at (-1.75, 7) {Layer 3};

\draw[Latex-,thick] (3,1) -- (4,1);
\node at (5,1) {$\wires(0)$};

\draw[dashed,Latex-,very thick] (3,3.85) -- (4,3.85);
\node at (5,3.85) {$\wires(1)$};

\end{tikzpicture}
        \caption{An example of  a circuit $C$ of depth~$d=3$ and width $\maxWidth=6$. We have 
        $\gates(1)=\{g_1,g_2\}$ and $\gates(2)=\{g_3\}$. The gate $g_2$ has $\fanin=4$ and $\fanout=1$ giving $\fantotal=5$, while the gate in$_1$ has $\fanout=2$ and $\fantotal=2$. The set $\wires(0)$ and $\wires(1)$ are indicated on the figure.}
        \label{fig:circuit}
    \end{SCfigure}
     
	\subsection{Error Correcting Codes and Locally Decodable Codes}
    \label{sec:prelim:LDC}

    For an alphabet~$\Sigma$, the Hamming distance of two strings $x,y\in (\Sigma \cup \bot)^*$ of the same length, i.e., $|x|=|y|$,  is the number of indices for which $x$ and $y$ differ and is denoted by $\Hamm(x,y)={|\{ i \mid x[i] \ne y[i]\}|}$. For two strings $x \in \Sigma^*$, $y \in (\Sigma \cup \{\bot\})^*$ and value $c\in (0,1)$, we say that $y$ can be obtained by a $c$-fraction of erasures from $x$ if $|x| = |y|$, and for all $i \in [|x|]$, it holds that either $x[i] = y[i]$ or $y[i] = \bot$, where the latter case happens at most $c|x|$ times. An index $i$ in which $y[i] = \bot$ is called an erasure.
    
    For a prime power $p$, we denote by~$\Fp$ the finite field of size~$p$. An \emph{error correcting code} is a mapping $\mathsf{Enc}: \Fp^K \to \Fp^N$ that takes $K$ symbols of the alphabet $\Fp$ into $N$ symbols of the alphabet~$\Fp$.\footnote{We can map $[p]$ and $\Fp$ with a fixed isomorphism, so that $\LDCEnc:[p]^K \to [p]^N$. }
    The value $N$ is called the  block length of the code. The ratio $K/N$ is called the \emph{rate} of the code. The \emph{relative distance} of a code is the normalized Hamming distance between any two codewords, denoted $\delta=\min_{ m\ne m'} \tfrac1N {\Hamm(\mathsf{Enc}(m),\mathsf{Enc}(m'))}$. 

    Next, we formally define the notion of locally decodable codes. For the purposes of our work, we only consider the \emph{erasure} setting, in which we are given access to a possibly corrupted codeword~$y$, obtained by erasing at most $\delta$-fraction of some  $\LDCEnc(x)$ for some $x \in \Fp^K$, and an index $i \in [K]$, and our goal is to find the $i$-th symbol of $x$.

	\begin{definition}[Locally Decodable Codes (LDCs) for erasures]
        \label{def:LDC}
		An error correcting code $\LDCEnc: \Fp^K \rightarrow \Fp^N$ is said to be a $(q,\delta,\epsilon)$-LDC  if there exists a randomized decoding algorithm $\LDCDec$ that receives as input a string $y\in (\Fp \cup \{\bot\})^N$ and an index $i\in[K]$, performs at most $q$ queries to $y$, and outputs a value with the following guarantee: 
        if there exists $x \in \Fp^K$ such that $y$ can be obtained by at most a $\delta$-fraction of erasures from $\LDCEnc(x)$, then $\Pr(\LDCDec(y,i) = x[i]) \geq 1-\epsilon$.
	\end{definition}

       \begin{definition}[Non-adaptiveness]
        \label{def:non_adaptivity}
            An LDC is called \emph{non-adaptive} if for every call to its decoding algorithm~$\LDCDec$, the set of queries it performs given input index~$i$ is only a function of the randomness and the index~$i$. In particular, a query does not depend on the outcome of previous queries.
        \end{definition}
        We can think of the decoding algorithm of a non-adaptive LDC code~$\LDCDec(\cdot,i)$ as an algorithm with oracle access to the codeword, that first generates $q$~indices to query, and then, once provided these (possibly corrupt) $q$~symbols, returns the decoded message symbol.

    The following smoothness property of LDCs means that decoding an index requires querying the codeword in a ``smooth'' (near-uniform) way. This property is important in order to avoid congestion when a node decodes multiple values.  
    \begin{definition}[Smoothness]
        \label{def:smoothness}
            An LDC is called \emph{$s$-smooth} if there exists a decoding algorithm~$\LDCDec$, such that during any call to~$\LDCDec$, any entry $j \in [N]$ of the codeword is queried with probability at most~$s$.
    \end{definition}

    The following theorem suggests that smoothness is an inherent property of LDCs, since any decoding algorithm can be transformed into a smooth one.
        \begin{theorem}[\protect{\cite[Theorem~1]{KT00}}]\label{thm:allLDCsmooth}
        Every $(q,\delta,\epsilon)$-LDC of block length~$N$ is $q/\delta N$-smooth.
        \end{theorem}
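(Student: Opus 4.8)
The plan is the classical averaging argument of Katz and Trevisan: starting from an arbitrary $(q,\delta,\epsilon)$-LDC decoder~$\LDCDec$, I would single out the ``heavy'' positions — those that $\LDCDec$ queries with too large a probability — show that there are few of them, and obtain a smooth decoder by simply pretending those positions are erased. Concretely, fix a target index $i\in[K]$ and run $\LDCDec(\cdot,i)$ with its internal randomness on an uncorrupted codeword $\LDCEnc(x)$; for each position $j\in[N]$ let $p^{(i)}_j$ be the probability (over the coins of~$\LDCDec$) that $j$ lies in the set of at most $q$ positions queried by $\LDCDec(\cdot,i)$. Since that query set always has size at most~$q$, summing the per-position probabilities yields $\sum_{j\in[N]} p^{(i)}_j \le q$. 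Call a position $j$ \emph{heavy for $i$} if $p^{(i)}_j > q/(\delta N)$, and let $H_i$ be the set of heavy positions. A one-line Markov-type bound then gives $|H_i| < \delta N$: were $|H_i|\ge \delta N$, we would have $\sum_{j\in[N]} p^{(i)}_j \ge \sum_{j\in H_i} p^{(i)}_j > \delta N \cdot q/(\delta N) = q$, a contradiction.

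Next I would define the smooth decoder $\LDCDec'$: on input $(y,i)$ it simulates $\LDCDec(\cdot,i)$, but whenever the simulation is about to query a position $j\in H_i$ it does \emph{not} access~$y$ and instead feeds the erasure symbol $\bot$ to~$\LDCDec$ as the answer to that query; for $j\notin H_i$ it queries $y[j]$ and passes the symbol through unchanged; finally it returns whatever $\LDCDec$ outputs. Smoothness is then immediate: $\LDCDec'$ never queries a heavy position, and each light position~$j$ is queried by $\LDCDec'$ with probability exactly $p^{(i)}_j \le q/(\delta N)$; hence $\LDCDec'$ is $q/(\delta N)$-smooth, as required. For correctness, observe that the simulated copy of~$\LDCDec$ behaves as if it were run on~$y$ with all positions in $H_i$ additionally set to~$\bot$; since $|H_i| < \delta N$, when $y$ is itself an uncorrupted codeword $\LDCEnc(x)$ the simulated~$\LDCDec$ sees a codeword corrupted by fewer than $\delta N$ erasures, and thus still outputs $x[i]$ with probability at least $1-\epsilon$. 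The construction is carried out index by index, so the same rule yields a $q/(\delta N)$-smooth decoder for every $i\in[K]$, which establishes the theorem. (If in addition one wants $\LDCDec'$ to tolerate genuine erasures in~$y$, one splits the $\delta N$ erasure allowance between the heavy positions and the real erasures; this refinement is what gets used later for load-balancing, but the bare statement needs only the clean-codeword case.)

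The argument is short, and the only point demanding care is the bookkeeping of the erasure budget: one must use the \emph{strict} inequality defining ``heavy'' to conclude $|H_i| < \delta N$, and then verify that turning those (fewer than $\delta N$) positions into erasures keeps the simulated decoder inside the regime where its $(1-\epsilon)$-decoding guarantee holds. Everything else is a direct unwinding of the definitions of the query distribution and of $s$-smoothness, and the transformation visibly preserves non-adaptivity when $\LDCDec$ is non-adaptive, since the set of simulated queries — before any substitution by~$\bot$ — depends only on the randomness and on~$i$.
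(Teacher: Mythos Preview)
The paper does not actually prove this theorem: it is stated with a citation to \cite{KT00} and used as a black box. So there is no ``paper's own proof'' to compare against. Your argument is precisely the standard Katz--Trevisan averaging argument and is correct: the bound $\sum_j p^{(i)}_j \le q$ together with the strict threshold $p^{(i)}_j > q/(\delta N)$ gives $|H_i| < \delta N$, and replacing those positions by $\bot$ yields a $q/(\delta N)$-smooth decoder that still succeeds with probability at least $1-\epsilon$ on a clean codeword, since the simulated $\LDCDec$ sees fewer than $\delta N$ erasures. Your parenthetical remark about splitting the erasure budget is also apt: as stated, the smooth decoder you build is only guaranteed on an uncorrupted $\LDCEnc(x)$, not on an already-erased $y$, but the paper only invokes smoothness heuristically (to motivate why randomised decoding spreads out queries) and builds its own explicit Reed--Muller decoder with the needed guarantees, so the bare-statement version suffices here.
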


\section{Computing a Circuit in the presence of crashes}
\label{sec:Simulation}

We show how to efficiently and deterministically compute a specified circuit~$C$ in the \cliquefull model, in the presence of up to $\alpha n$ crashes.  
We start, in \Cref{sec:primitives}, by describing the procedures $\Store$, $\Retrieve$, and $\BulkRetrieve$ used to store and retrieve information in our algorithm.
In section \Cref{sec:allocate} we describe another procedure, 
$\Allocate$, that assigns gates to nodes in a balanced-manner.
Finally, in \Cref{sec:TheAlgorithm}, we describe our circuit computation algorithm based on these procedures.

\subsection{\texorpdfstring{The $\Store$, $\Retrieve$, and $\BulkRetrieve$ Procedures}{The Store, Retrieve and BulkRetrieve Procedures}}
\label{sec:primitives}

As mentioned above, any information that may get lost due to node crash is stored in the network via an LDC. We first describe the LDC we use and then detail the store and retrieve procedures.

\paragraph*{The LDC instantiation.}
We assume a fixed LDC code, described in \Cref{sec:derandom} (\Cref{lem:LDC_muller_augmented}). 
Specifically, for some power of prime $q=2^{O(\sqrt{\log n})}$ and $\rho$ such that $\rho^{-1}=2^{O(\sqrt{\log n}\log \log n)}$, our LDC has an encoding function $\LDCEnc: [q]^{\rho n} \to [q]^{n}$ and a decoding function~$\LDCDec$ which, given an input index $i \in [n]$, smoothly queries $q$ indices of the codeword and decodes correctly even if up to  $\delta$-fraction of the $q$ queried symbols are erased, for some predetermined constant distance~$\alpha<\delta<1$. 
We assume that $n=q^r$ for some integer~$r$ (specified in the detailed construction);
this assumption can be lifted using standard methods. 
Note that \Cref{def:LDC} says that $\LDCDec$ is randomized, but our goal is to compute the circuit deterministically. We thus treat any randomness used by $\LDCDec$ as originating from some \emph{randomness string}, but our implementation of obtaining such randomness strings will be \emph{deterministic} rather than randomized which will render our implementation of $\LDCDec$, and hence our circuit computation algorithm, deterministic.

\paragraph*{The $\Store$ Procedure.} 
The $\Store$ procedure ``saves'' information in the network in a robust way, by encoding it with an LDC and distributing the codeword among the nodes of the network.

Each node $v_j$ begins the \Store procedure with a bit-string~$U_j$ it wishes to store in the network. It first splits $U_j$ into parts of size~$\rho n \lfloor \log{q} \rfloor$ bits each (so that each can be represented by a string of~$\rho n$ symbols over the alphabet~$[q]$), padding with zeros as necessary. 
Set $\last_j=\lceil |U_j|/(\rho n \lfloor \log{q} \rfloor)\rceil$ and denote these parts $U_j^1,\dots,U_j^{\last_j}$. 
The node~$v_j$ then encodes each $U^i_j$ using $\LDCEnc$ to obtain a codeword $\LDCEnc(U^i_j) = L^i_j$ of size $|L^i_j| = n$ symbols.  

Next, $v_j$ distributes the codewords to the network nodes. Specifically, in round $i = 1,\dots,\last_j$, it sends
the symbols of~$L_j^i$---one symbol to each node in the network. 
This takes a single round of communication because each symbol in the LDC codeword comes from the alphabet~$[q]$ with $q=2^{O(\sqrt{\log n})}$, hence it can be encoded in $\log q=O(\log n)$ bits. 
The formal description is depicted in~\Cref{alg:store}.

\begin{algorithm}[htp]
\caption{\Store ~(for node $v_j$)}\label{alg:store}
\begin{algorithmic}[1]
\Statex {\textbf{Input:}}
A bit-string~$U_j$.
\Statex
\State Partition $U_j$ into consecutive parts of size~$\rho n \lfloor \log{q} \rfloor$ bits, padding the last part if necessary; denote these substrings $U_j^1,\dots,U_j^{\last_j}$.  \Comment{$\last_j=\lceil |U_j|/(\rho n\lfloor \log{q} \rfloor)\rceil$}
\For{$i=1,\ldots,\last_j$}
    \State $L_j^i \gets \LDCEnc(U_i^j)$.
    \State For all $t\in[n]$ in parallel, send~$L_j^i[t]$ to~$v_t$.  
\EndFor
\label{line:store:record_alive}

\end{algorithmic}
\end{algorithm}

We say that $v_j$ \emph{stored}~$U_j$ in the network if $v_j$ completed the 
$\Store$ procedure without crashing.
The following is straightforward from \Cref{alg:store}.
\begin{observation}
\label{lem:store_time}
    Storing a string $U_j$ takes $O(\lceil |U_j|/(\rho n\lfloor \log{q} \rfloor) \rceil)$ rounds of communication.
\end{observation}

\paragraph*{The $\Retrieve$ and $\BulkRetrieve$ Procedures.} 
In the $\Retrieve$ procedure, a node $v_j$ is given as input an index $w$ of some string $U$ that was previously stored in the network using the \Store procedure. Node $v_j$ is additionally given a string $R$ called the randomness string. One can think of this procedure as randomized, with $R$ as its randomness, but in all our invocations of $\Retrieve$, the string $R$ is set deterministically, as will be explained later.

The goal of the $\Retrieve$ procedure is to retrieve the value of the $w$-th bit of the previously stored~$U$. 
To that end, $v_j$ first identifies the codeword that contains the bit~$w$: 
recall that the $\Store$ procedure splits $U$ into parts of size $\approx \rho n \log q$ bits. Denote by~$U^i$ the respective part and by~$i'$ the index of the symbol in~$U^i$ that contains the bit-value $U[w]$ in which we are interested. 
In the following, we say ``decode $U[w]$'' to actually mean decoding the respective index~$i'$ of the possibly corrupted codeword $\LDCEnc(U^i)$ that contains the respective value.

To retrieve the value of~$U[w]$ from the (stored) codeword $\LDCEnc(U^i)$,  
the node $v_j$~executes $\LDCDec(.)$ using the randomness string~$R$ and obtains indices of random $q$ symbols of $\LDCEnc(U^i)$ needed for the decoding. 
It is possible to learn these $q$ indices in advance because the LDC is non-adaptive (\Cref{def:non_adaptivity}).
The node $v_j$ then queries the respective nodes for their stored symbols and provides $\LDCDec(.)$ with their replies.
Note that crashed nodes do not reply, which translates to erasures given to~$\LDCDec(.)$. 
Further, in hindsight, the randomness string~$R$ will be derandomized, which has the effect that all nodes know~$R$. It will follow that $v_j$ does not actually need to send any message in order to query any node; the $q$ nodes will know that they are the nodes that should give information back to~$v_j$, since they will know the identity of~$i',U^i$ and the value of~$R$ to begin with. See \Cref{lem:bookkeeping}.

Under some circumstances, we allow the retrieval to fail, in which case the output is $\bot$. 
This could happen in two cases: (i) when there are too many erasures and $\LDCDec(.)$ returns $\bot$, or  
(ii) if one of the nodes queried during this $\Retrieve$ invocation crashes during the execution of this~$\Retrieve$ invocation. 
For the former, our derandomization (\Cref{sec:derandom}) will guarantee that this event cannot happen if at most $\alpha n$ nodes have crashed. 
For the latter, in this case we set the output to be~$\bot$ even if the decoding $\LDCDec$ successfully retrieves the symbol. 
This decision does not affect the correctness of the algorithm, but rather simplifies its analysis.

\begin{algorithm}[ht]
\caption{\Retrieve ~(for node $v_j$)}\label{alg:retrieve}
\begin{algorithmic}[1]
\Statex {\textbf{Inputs:}}
An index~$w$ to some string $U$ previously stored in the network via the $\Store$ procedure and a randomness string~$R$.

\Statex

\State Identify the part $U^i$ used by $\Store$ to encode $U[w]$ and the respective index~$i'$ in it that contains that value, i.e., the symbol~$U^i[i']$ contains the bit~$U[w]$.  
\State 
Execute $\LDCDec(\cdot,i')$ with randomness string~$R$, to obtain the $q$~indices  needed to decode $U^i[i']$.  Set $S\subset V$ to be the nodes that hold these respective symbols.

\State Query the respective nodes in~$S$. Treat symbols associated with crashed nodes as erasures.
\label{line:retrieve:query}

\Statex
\Statex \textbf{Output:} 
\State
If some node $v\in S$ crashes \textbf{during the execution of Line~\ref{line:retrieve:query}}, output $\bot$. 
\State Otherwise, output the bit $U[w]$ contained in  $\LDCDec(\LDCEnc(U^i,i'))$ by providing the $q$ replies (including erasures) when $\LDCDec$ queries the codeword~$\LDCEnc(U^i)$.
\end{algorithmic}
\end{algorithm}

\medskip

The $\BulkRetrieve$ procedure generalizes the above to allow retrieving multiple previously stored bits.
Now, $v_j$ is given as input a \emph{collection} of indices~$W_j$, where each index $w\in W_j$ refers to some (predetermined)~$U_{(w)}$ that was previously stored in the network via a~$\Store$. The strings $U_{(w)}$ may be different for different values of $w$, or they may be the same. 
Additionally, the procedure gets a multiplicity parameter~$\ell$. The goal is to output the value of the bit in index $w$ of~$U_{(w)}$, for each index $w\in W_j$. 

Towards this goal, 
all nodes in the network first \emph{deterministically} compute a set of randomness strings $\mathcal{R}(v_j) = \{R_{v_j,w,i} \mid i \in [2^\ell],w \in W_j\}$ for each $v_j \in V$, with \emph{good} properties, which we define and discuss in detail later in the section (see \Cref{def:good_seed}).
The deterministic generation of these strings is given by \Cref{lem:good_seeds_computation}.
After this step, all nodes $v\in V$ know $\mathcal{R}(v_j)$ for every~$v_j$.

Next, $v_j$ performs, for each index $w \in W_j$, a batch of $2^{\ell}$ 
$\Retrieve$ procedures, where the $i$-th invocation uses randomness string $R_{v_j,w,i}$. 
Similar to the case of $\Retrieve$, we allow some retrieves to fail and output~$\bot$.
If at least one of the $2^{\ell}$ invocations of $\Retrieve(w,R_{v_j,w,i})$ succeeds, its output becomes the output of $\BulkRetrieve$ for the index~$w$; otherwise, the respective output is~$\bot$. 

All the  $|W_j|\cdot 2^{\ell}$ \Retrieve invocations are executed in parallel. However, in order to avoid congestion, $v_j$~pipelines requests targeted to the same node. That is, it sends at most one query to any node in any given round. 
Similar to above, the set of strings $\{U_{(w)}\}_{w\in W_j}$ is predetermined and known to all nodes, 
and the identities of the $q$ nodes that are queried in a specific  $\Retrieve(w,R_{v_j,w,i})$ are generated using $R_{v_j,w,i}$, which is also known to all nodes. 
Hence, each queried node can infer the respective $U_{(w)}$ for each query, without the need for $v_j$ to communicate this data. 
The formal procedure is depicted in \Cref{alg:bulkretrieve}.

\begin{algorithm}[htp]
\caption{$\BulkRetrieve$ (for node $v_j$)}\label{alg:bulkretrieve}
\begin{algorithmic}[1]
\Statex {\textbf{Inputs:}}
An ordered collection $W_j$ of indices, where each $w\in W_j$ refers to a predetermined string~$U_{(w)}$, that  was previously stored by some node executing~$\Store$. 
A multiplicity parameter~$\ell$.

\Statex
\State Compute a collection of randomness strings $\mathcal{R}(v_j) = \{R_{v_j,w,i} \mid w \in W_j, i \in [2^\ell]\}$ using \Cref{lem:good_seeds_computation}.
\ParFor{each $w \in W_j$} \label{line:bulk:foreachW}
\ParFor{$i = 1,\ldots,2^{\ell}$ times} \label{line:bulk:foreachEll2}
\State \parbox[t]{\textwidth-3\dimexpr\algorithmicindent}{Run $\Retrieve(w,R_{v_j,w,i})$. 
If $v_j$ needs to query the same node multiple times in all of these parallel instances, send at most one query per round, until all queries are sent.\strut }
\label{line:BulkRet:retrieve}
\EndParFor
\EndParFor
\Statex
\State \textbf{Output:} 
For each $w$, the output is the output of the first successful $\Retrieve(w,R_{v_j,w,i})$, if any, or~$\bot$ otherwise.
\end{algorithmic}
\end{algorithm}

Consider a specific instance of $\BulkRetrieve(W_j,\ell)$.
The selection of randomness strings~$\mathcal{R}(v_j)$ that $v_j$ uses has a tremendous effect on the induced congestion.
Indeed, assume that $\mathcal{R}(v_j)$ is such that \emph{all} $\Retrieve$s query some $v_i$, implying $2^\ell |W_j|$ rounds of communication where in each round a single LDC symbol is communicated. 
This should be contrasted with the fully randomized case, where each node is queried in a near-uniform distribution (implied by the smoothness of LDC codes, \Cref{thm:allLDCsmooth}), implying that each $v_i$ is queried $2^\ell |W_j|\cdot q/n$ times, in expectation. Standard tail bounds show that the number of queries of the maximal node  (and hence the round complexity) is bounded by 
$2^\ell |W_j|\cdot q/n\cdot O(\log n)$. This gives that there \emph{exists} a way to select the randomness strings~$\mathcal{R}(v_j)$ while maintaining the same round complexity.

However, while the above gives uniform query locations for the goal of controlling congestion, it does not address the problem that many locations might be erased. 
To illustrate this point, assume that out of the $2^\ell$ $\Retrieve$ instances, all but one are querying mostly erased symbols (crashed nodes), and only one $\Retrieve$ correctly decodes the value. The output of $\BulkRetrieve$ would be correct in this case, but in this scenario the adversary needs to crash only a single additional node in order to fail it. 

Indeed, what we show is even stronger than mimicking a fully randomized case by some na\"ive load balancing. The following \Cref{lem:good_seeds_computation} shows that we can find a set of randomness strings that maintains a similar round complexity even if each codeword has  $\alpha n$ symbols erased, and furthermore, \emph{each individual $\Retrieve$ succeeds decoding the respective value}, as long as no new crashes happened during that $\Retrieve$.
In other words, we can de-randomize the random sampling of codeword symbols to query while (i) maintaining complexity (by controlling congestion) and (ii) performing only ``useful'' queries, hence maintaining our resilience to an all-knowledgeable adversary. 
Our choice of randomness strings guarantees that the adversary must waste $2^\ell$ of its crashing budget in order to fail the $\BulkRetrieve$, which is crucial for the correctness proof.

We now define the notion of good randomness strings, namely, strings that provide the above properties for the $\BulkRetrieve$ procedure.
\begin{restatable}[Good randomness strings]{definition}{GoodSeedDef}
\label{def:good_seed}
Fix a node $v_j$, parameters 
$W_j,\ell$, and the set of non-crashed nodes~$\alive$.
A collection of randomness strings $\mathcal{R}(v_j) = \{R_{v_j,w,i}\}_{w \in W_j, i \in [2^\ell]}$ 
is called \emph{good} for an instance of $\BulkRetrieve(W_j,\ell)$,
if the following holds: 
    \begin{enumerate}[(1)]
        \item \label{item:good_seed_cong} Each node is queried at most $O(\lceil \frac{2^\ell |W_j| q}{n} \rceil \log{n})$ times in total by $v_j$, 
        and
        \item \label{item:good_seed_correctness} 
        For all $w \in W_j$ and $i \in [2^\ell]$,  
        the invocation of $\Retrieve(w, R_{v_j,w,i})$ succeeds (given no further changes in~$\alive$). 
    \end{enumerate} 
\end{restatable}
In \Cref{sec:derandom} we prove the following.
\begin{restatable}{lemma}{GoodQueryCompute}
\label{lem:good_seeds_computation}
    There is a zero-round deterministic algorithm which, given the set of non-crashed nodes~$\alive$, a collection of indices~$W_j$, and a multiplicity parameter~$\ell$, 
    computes a good collection of randomness strings $\mathcal{R}(v_j)$ for~$v_j$. 
\end{restatable}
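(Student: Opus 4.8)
The plan is to establish the lemma via the probabilistic method: we show that a uniformly random choice of the randomness strings $\mathcal{R}(v_j) = \{R_{v_j,w,i}\}$ satisfies both conditions of Definition~\ref{def:good_seed} with positive probability, and then observe that the whole argument can be carried out by each node \emph{deterministically} and with no communication, since every node knows $\alive$, $W_j$, and $\ell$, and hence can locally compute the same good collection by brute-force search over the (explicitly bounded) space of randomness strings. So the work is entirely in the existence proof.

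For condition~(\ref{item:good_seed_correctness}), correctness of a single $\Retrieve(w, R_{v_j,w,i})$: recall the LDC of Lemma~\ref{lem:LDC_muller_augmented} has erasure-decoding radius $\delta$ with $\alpha < \delta$, and is $q/\delta n$-smooth by Theorem~\ref{thm:allLDCsmooth}. Fix $w$ and $i$. Over a uniformly random string $R_{v_j,w,i}$, the decoder $\LDCDec(\cdot, i')$ queries a set $S$ of $q$ codeword positions, and by the LDC guarantee the decoding fails only if too many positions of $S$ land on the $\le \alpha n$ crashed nodes. Since $\alpha < \delta$ strictly, I would argue that the \emph{probability} that a random $R_{v_j,w,i}$ yields a failing query set is bounded by some constant $\beta = \beta(\alpha,\delta) < 1$ — this follows because the decoder succeeds with probability $1-\epsilon$ when exactly a $\delta$-fraction is erased and we are below that; more carefully one uses that for the particular ($q,\delta,\epsilon$)-LDC in use (Reed--Muller based), the failure probability as a function of the erasure fraction is bounded away from $1$ for any erasure fraction below $\delta$. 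Union-bounding over the $|W_j| \cdot 2^\ell$ pairs does \emph{not} immediately work since $\beta$ is only a constant, so instead I treat the $2^\ell$ copies for a fixed $w$ as independent: the probability that \emph{none} of the $2^\ell$ copies succeeds is at most $\beta^{2^\ell}$... but that still gives failure of $\BulkRetrieve$ on $w$, not failure of an individual $\Retrieve$. Re-reading condition~(\ref{item:good_seed_correctness}): it demands that \emph{every} $\Retrieve(w,R_{v_j,w,i})$ succeeds, so we genuinely need all $|W_j|\cdot 2^\ell$ of them good simultaneously. The way around this is that ``success'' is a property of the query set relative to the \emph{fixed} erasure pattern (the crashed set $\alive^c$), and for the LDC in use there \emph{exist} query sets entirely avoiding any fixed $\alpha n$-subset when $\alpha < \delta$ — indeed smoothness plus a counting/averaging argument shows a $(1-\alpha/\delta)$-fraction of random query sets hit at most a $\delta$-fraction of erasures, hence decode correctly; so each individual $R_{v_j,w,i}$ is ``good for correctness'' with probability $\ge 1 - \alpha/\delta$ bounded away from zero, and we may simply restrict attention, per pair $(w,i)$, to the nonempty set of good-for-correctness strings and choose within it.

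Thus I would decompose as follows. Step 1: Define $\mathcal{G}_{w,i}$ = set of randomness strings $R$ for which the query set of $\LDCDec(\cdot,i')$ decodes correctly under the erasure pattern $\alive^c$; show $|\mathcal{G}_{w,i}| \ge (1-\alpha/\delta)\cdot(\text{total})$ by a smoothness/averaging argument (each of the $\le \alpha n$ erased positions is queried with probability $\le q/\delta n$, so the expected number of erased positions queried is $\le \alpha q/\delta < q$; then either use that the RM-based decoder tolerates \emph{any} erasure fraction $<\delta$ and a Markov-type bound, or invoke the specific structure of the code from Lemma~\ref{lem:LDC_muller_augmented}). Step 2: Sample each $R_{v_j,w,i}$ independently and uniformly from $\mathcal{G}_{w,i}$; condition~(\ref{item:good_seed_correctness}) then holds with probability $1$. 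Step 3 (congestion, condition~(\ref{item:good_seed_cong})): model the queries as a balls-into-bins experiment — there are $2^\ell |W_j|$ retrieves, each throwing $q$ balls into the $n$ bins (nodes) according to a near-uniform ($q/\delta n$-smooth) distribution, even after conditioning on membership in $\mathcal{G}_{w,i}$ (the conditioning changes each marginal probability by at most a constant factor, so smoothness degrades only by a constant). The expected load of any fixed bin is $O(2^\ell |W_j| q / n)$; a Chernoff/Bernstein bound gives that the load exceeds $O(\lceil 2^\ell |W_j| q/n\rceil \log n)$ with probability at most $n^{-2}$, say; union over the $n$ bins keeps the bad-congestion probability below $1/n < 1$. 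Step 4: Since the correctness event has probability $1$ and the congestion event fails with probability $<1$, a good collection $\mathcal{R}(v_j)$ exists; each node can find one by exhaustive search over the bounded string space, using only its knowledge of $\alive, W_j, \ell$, in zero communication rounds. The output of all nodes agrees because the search is a deterministic function of $(\alive, W_j, \ell)$ (fix a canonical order).

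The main obstacle I expect is Step~1: pinning down that a constant fraction of random query sets decode correctly under a worst-case $\alpha n$-erasure pattern with $\alpha<\delta$. The clean way is to lean on the \emph{explicit} Reed--Muller-type construction of Lemma~\ref{lem:LDC_muller_augmented} — for such codes the decoder reads $q$ values along a structured (e.g. affine-line or curve) sample and recovers the symbol as long as strictly fewer than some threshold of them are erased, and the smoothness makes the number of erased samples concentrate below $q$ — rather than to argue abstractly from the $(q,\delta,\epsilon)$ black box, where $\epsilon$ being merely a small constant at erasure-fraction exactly $\delta$ does not by itself give a constant success fraction at fraction $\alpha<\delta$. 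A secondary subtlety is making sure the conditioning in Step~3 (restricting to $\mathcal{G}_{w,i}$) does not destroy smoothness; I would handle this by noting $\Pr[\text{bin }b \mid \mathcal{G}_{w,i}] \le \Pr[\text{bin }b]/\Pr[\mathcal{G}_{w,i}] \le (q/\delta n)/(1-\alpha/\delta)$, i.e. smoothness with a worse-by-a-constant parameter, which is all the balls-into-bins bound needs.
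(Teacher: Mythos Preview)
Your approach is correct and follows essentially the same probabilistic-method skeleton as the paper, but with a genuinely different decomposition. The paper does \emph{not} condition on correctness: instead, for each of the $P=2^\ell|W_j|$ decodings it samples $M=O(\log n)$ \emph{unconditioned} uniform randomness strings, takes the first one whose query set hits at most $\delta(q-1)$ erasures (this exists w.h.p.\ by Property~(\ref{item:rm_expectation}) of Lemma~\ref{lem:reed_muller} plus Markov), and then bounds congestion over \emph{all} $PM$ uniform samples---since the chosen $R_j$ is one of the $R_j^t$, the congestion of the selected strings is dominated by the congestion of the full $PM$-sample. This sidesteps any reasoning about the conditional distribution. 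Your route is more direct (sample once per decoding from the good set $\mathcal{G}_{w,i}$, so correctness holds surely), at the cost of the conditional-smoothness step $\Pr[\text{bin }b\mid\mathcal{G}_{w,i}]\le \Pr[\text{bin }b]/\Pr[\mathcal{G}_{w,i}]$, which you handle correctly. Either approach yields the same $O(\lceil 2^\ell|W_j|q/n\rceil\log n)$ bound.

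One quantitative point to tighten in Step~1: invoking only the generic $q/\delta n$-smoothness of Theorem~\ref{thm:allLDCsmooth} gives expected erased queries $\le \alpha q/\delta$, and Markov against the threshold $\delta(q-1)$ then yields failure probability $\approx \alpha/\delta^2$, not $\alpha/\delta$; this is $<1$ only when $\alpha<\delta^2$. To get a constant success fraction for all $\alpha<\delta$ you should use the tighter $(q/2N)$-smoothness (Property~(\ref{item:rm_smoothness})) or, as the paper does, Property~(\ref{item:rm_expectation}) of Lemma~\ref{lem:reed_muller} directly, which gives $\Pr[\mathcal{G}_{w,i}]\ge 1-(1+o(1))\alpha/\delta$. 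You already flag the fallback to ``the specific structure of the code'', so this is a minor fix rather than a gap.
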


With the above, the following is immediate.

\begin{lemma}
\label{lem:nodeLoad}
$\BulkRetrieve(W_j,\ell)$  takes $O(\lceil \frac{2^\ell |W_j| q}{n} \rceil \log{n})$ rounds. 
\end{lemma}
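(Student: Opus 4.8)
The plan is to derive this directly from \Cref{lem:good_seeds_computation} and the structure of \Cref{alg:bulkretrieve}. First I would invoke \Cref{lem:good_seeds_computation} to obtain, in zero rounds, a collection of randomness strings $\mathcal{R}(v_j)$ that is \emph{good} for the instance $\BulkRetrieve(W_j,\ell)$ in the sense of \Cref{def:good_seed}. By property~\eqref{item:good_seed_cong} of goodness, every node in the network is queried by $v_j$ at most $O(\lceil \tfrac{2^\ell |W_j| q}{n} \rceil \log n)$ times in total across all the parallel $\Retrieve$ invocations spawned on \Cref{line:BulkRet:retrieve}.

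Next I would account for the communication cost. Each individual $\Retrieve(w,R_{v_j,w,i})$ performs exactly $q$ queries, each query asking a single node for a single LDC symbol, which (as noted when the LDC is instantiated, since $q=2^{O(\sqrt{\log n})}$ and symbols lie in $[q]$) fits in $O(\log n)$ bits and hence in one message. The total number of queries issued is $|W_j|\cdot 2^\ell\cdot q$, but what governs the round complexity is \emph{congestion}, i.e., the maximum number of queries that any single node must answer, since \Cref{alg:bulkretrieve} pipelines requests directed at the same node one per round. By property~\eqref{item:good_seed_cong} this maximum is $O(\lceil \tfrac{2^\ell |W_j| q}{n} \rceil \log n)$, so after that many rounds every query has been sent and (since a queried node replies in the next round with a single $O(\log n)$-bit symbol) answered. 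The zero-round computation of $\mathcal{R}(v_j)$ adds nothing, and the final output-selection step is local. Summing, the procedure completes in $O(\lceil \tfrac{2^\ell |W_j| q}{n} \rceil \log n)$ rounds.

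The only subtlety — and the one place the argument needs a sentence of care rather than a one-line appeal — is that congestion must be counted symmetrically: not just queries $v_j$ sends to a fixed target, but also that no single node is overwhelmed acting as a target across the whole execution, and that replies themselves do not congest. Since all $|W_j|\cdot 2^\ell\cdot q$ queries originate at the single node $v_j$, the outgoing side is bounded by $v_j$ sending at most one message per round to each of the $\le n$ destinations, which is dominated by the per-destination bound above; and each reply is a single symbol triggered by a single query, so replies inherit the same bound. Thus the per-node load bound from \Cref{def:good_seed}\eqref{item:good_seed_cong} is exactly the round complexity, and there is no hidden factor. I do not expect a genuine obstacle here: the real work has been pushed into \Cref{lem:good_seeds_computation}, and the present lemma is essentially bookkeeping once that is in hand.
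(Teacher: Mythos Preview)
Your proposal is correct and follows essentially the same approach as the paper: the paper simply states that the lemma is ``immediate'' from \Cref{lem:good_seeds_computation} and \Cref{def:good_seed}\eqref{item:good_seed_cong}, and your argument just spells out this immediacy (the per-node query bound is the per-round pipelined congestion, and replies inherit the same bound). Your extra paragraph handling the reply direction is more careful than the paper's one-line treatment, but there is no substantive difference in strategy.
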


\subsection{\texorpdfstring{The $\Allocate$  Procedure}{The Allocate Procedure}}
\label{sec:allocate}
The purpose of the $\Allocate$ procedure is to assign a set of given gates that need to be computed to non-crashed nodes. 
The procedure is deterministic and runs locally on each node, without any communication. 
However, all nodes reach the same allocation, since they all have the same knowledge regarding crashed nodes and regarding failed LDC queries, where the latter is due to the randomness strings being generated deterministically by all nodes and known to all (due to \Cref{lem:good_seeds_computation} above and the upcoming \Cref{lem:bookkeeping} which essentially says that the nodes are able to keep a consistent view of all of these variables by careful bookkeeping).

In more detail, the procedure 
is given as input the current set of non-crashed nodes~$\alive$, a set of gates~$G\subseteq V_C$ (of the circuit $C=(V_C,E_C)$, known to all), and a multiplicity parameter~$\ell_1$.
For each gate $g \in G$, $\Allocate$ assigns $g$ to a set of $\min(2^{\ell_1},|\alive|)$ nodes from~$\alive$ using the following  sequential ``greedy'' process: 
Sort the gates in~$G$ by their total fan (denoted $\fantotal$), in descending order. 
Assign the gates one by one to a set of $\min(2^{\ell_1},|\alive|)$ distinct nodes in~$\alive$ whose loads are minimal (break symmetry by node IDs).
The \emph{load} of a node $v$, denoted $\lambda(v)$, is defined as the sum of the total $\fantotal$ of all  gates assigned to it so far during this $\Allocate$ instance. 
See \Cref{alg:allocate}. 

\begin{algorithm}
\caption{\Allocate ~(for node $v_j$)}\label{alg:allocate}
\begin{algorithmic}[1]
\Statex {\textbf{Inputs:}} 
A set~$G$ of gates and a multiplicity parameter~$\ell_1$.

\Statex
\State Let $g_1,\ldots,g_{|G|}$ be the gates of~$G$, sorted in descending order of~$\fantotal$ (break ties consistently).
\State Set $\lambda(v) \gets 0$ for all nodes $v \in \alive$.
\State $G_j \gets \emptyset$.
\State $L \gets \min(2^{\ell_1},|\alive|)$. 
\For{$i = 1,\ldots,|G|$}
\State Let $u_1,\ldots,u_L$ be the $L$ nodes in~$\alive$ with the minimal loads (break ties by IDs).
    \ForEach {node $u \in \{u_1,\dots,u_L\}$}
      \State Assign~$g_i$ to the node~$u$.
      \State $\lambda(u) \gets \lambda(u)+\fantotal(g_i)$.
      \If{$u =v_j$}
        $G_j \gets G_j \cup \{g_i\}$.
      \EndIf
    \EndFor
\EndFor
\Statex

\State \textbf{Output:} The set $G_j$ of gates assigned to~$v_j$.
\end{algorithmic}
\end{algorithm}

The assignment can be computed locally in a consistent manner across all non-crashed nodes without any communication, since all relevant information, namely $G,\ell_1$, and~$\alive$, is known to all nodes in~$\alive$.
Note that since this is a local computation procedure, we can assume that set $\alive$ does not change throughout the computation, and that all nodes use the same set~$\alive$ representing the non-crashed nodes at the beginning of that round.

The following lemma bounds the load assigned to each node.
\begin{restatable}{lemma}{wireLoadBalanceLem}
\label{lem:alloc_load_balance}
Let $L=\min(2^{\ell_1},|\alive|)$ and $P=\sum_{g\in G}\fantotal(g)$. Further, assume $\max_{g\in G} \fantotal(g) \le \Delta$.
Then, $\Allocate(G,\ell_1)$ puts a maximal load of $\max(4PL/|\alive|,\Delta)$ on each node.
\end{restatable}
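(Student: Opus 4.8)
The plan is to analyze the greedy assignment in the standard way one analyzes list-scheduling / greedy load balancing, separating two regimes according to whether the load is dominated by one heavy gate or by the accumulated "small" gates. Let $m=|\alive|$, $L=\min(2^{\ell_1},m)$, and $P=\sum_{g\in G}\fantotal(g)$; note each gate is replicated onto exactly $L$ nodes, so the total load dispensed across all nodes is exactly $PL$. The first observation is that if $L=m$, every gate goes to every live node, so every node's load is exactly $P$, and since $PL/m = P$ in that case the claimed bound $\max(4PL/m,\Delta)$ holds trivially. So assume $L<m$ for the rest.

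First I would set up the inductive invariant on the processing order $g_1,\dots,g_{|G|}$ (sorted by $\fantotal$ descending): after gate $g_i$ has been assigned, the maximal node load is at most $\max\!\bigl(\tfrac{2P_i L}{m} + \fantotal(g_i),\ \Delta\bigr)$, where $P_i=\sum_{t\le i}\fantotal(g_t)$ — or something in this spirit; I will tune the constant so the final bound comes out to $4PL/m$. The key step: when $g_i$ is about to be placed, it is copied onto the $L$ currently-least-loaded live nodes. The sum of all current loads is $\sum_{t<i}\fantotal(g_t)\cdot L = P_{i-1}L$, so the average load is $P_{i-1}L/m$, hence the $L$-th smallest load is at most... here I need the slightly stronger fact that the $L$ smallest loads are each at most twice the average, which holds because $L<m/2$ would give it directly, but for $L$ up to $m$ one argues via: the $L$ smallest are each $\le$ the average of the top $m-L$... actually the cleanest route is: the minimum load is at most the average $P_{i-1}L/m \le PL/m$, and any node receiving $g_i$ had load at most (min load) because we pick the $L$ minimal — wait, the $L$-th minimal can exceed the min. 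I would instead bound: a node $u$ receiving a copy of $g_i$ had, just before, load $\lambda(u)$ at most the $L$-th smallest load; and the sum of the $L$ smallest loads is at most $\tfrac{L}{m}\cdot P_{i-1}L$ only if loads were equal. The robust fix is the classical one: since all $L$ copies of each prior gate were placed on then-minimal nodes, the loads stay "balanced" in the sense that the max load among any node never exceeds $\tfrac{2 P_{i-1} L}{m}$ plus the last increment $\fantotal(g_i)$; I would prove this balance invariant by induction, using that placing $L$ copies on the $L$ minimal nodes cannot make any of them exceed the current average by more than one gate's worth.

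Next I would close the two cases at the end ($i=|G|$, so $P_i=P$). If $\fantotal(g_i)\le 2PL/m$, the invariant gives max load $\le 2PL/m + 2PL/m = 4PL/m$. If instead $\fantotal(g_i) > 2PL/m$, then because gates are sorted in descending $\fantotal$, \emph{every} gate processed has fan-in plus fan-out exceeding $2PL/m$; but there are at most $P / (2PL/m) = m/(2L) $ such gates total (since their $\fantotal$'s sum to at most $P$), so each of the $L$ recipients of $g_i$ has received at most $m/(2L)$ gates each of value $\le\Delta$ — hmm, this gives $m\Delta/(2L)$, not $\Delta$, so I'd need a sharper count. The right statement in this regime: if the heavy gates are so few that even spread onto $L$ machines each lands on a machine at most once — which needs $\#\{\text{heavy}\}\le ?$ — then the load from heavies on any node is at most one heavy gate plus the light tail, i.e.\ $\Delta + 2PL/m$. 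I expect the genuine obstacle to be exactly this bookkeeping: cleanly showing that the contribution of "big" gates (those with $\fantotal$ comparable to or exceeding the target average) to any single node is at most $\Delta$, which should follow from the descending-order processing together with the fact that there cannot be many such gates relative to $L$ and $m$ before their cumulative fan exhausts $P$. Once that regime-split is pinned down, combining the two cases yields $\max(4PL/m,\Delta)$, and together with the trivial $L=m$ case this completes the proof.
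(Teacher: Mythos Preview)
Your regime split (gates with $\fantotal \ge 2PL/m$ versus $< 2PL/m$) is exactly the right idea and matches the paper's proof, but your case analysis on $L$ is too coarse, and this is why your heavy-gate argument collapses into ``$m\Delta/(2L)$, not $\Delta$''. The paper splits at $L < m/2$ versus $L \ge m/2$, not at $L=m$. When $L \ge m/2$ the trivial bound $\lambda(v)\le P \le 2PL/m$ already suffices. The real work is for $L < m/2$, and that inequality is precisely what makes both phases go through.

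In the heavy phase ($d_i \ge 2PL/m$, processed first since gates are sorted by descending $\fantotal$), the paper observes that at the moment $g_i$ is about to be placed, there must be at least $m/2$ nodes with load zero: otherwise more than $m/2$ nodes each carry load $\ge d_i \ge 2PL/m$, giving total load $> PL$, contradicting $\sum_v \lambda(v) \le PL$. Since $L < m/2$, all $L$ copies of $g_i$ land on empty nodes, so their load becomes $d_i \le \Delta$. This is the one-line fix for the spot where you got stuck; you do not need to count how many heavy gates there are or how many a single node could receive. In the light phase ($d_i < 2PL/m$), a straight averaging on $\sum_v \lambda(v)\le PL$ shows at least $m/2 > L$ nodes have load $\le 2PL/m$, so the $L$ recipients of $g_i$ each had load $\le 2PL/m$ and end with $\le 4PL/m$. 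Combining the two phases gives $\max(4PL/m,\Delta)$. Your inductive invariant and the attempted bound on the $L$-th smallest load are unnecessary once the threshold on $L$ is set correctly.
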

\begin{proof}
    Set $r = |G|$. let $g_1,\dots,g_r$ be the gates in~$G$ sorted in a decreasing order of their~$\fantotal$.
    For any $i\in[r]$, set $d_i = \fantotal(g_i)$, then,
    \[
    d_r \leq d_{r-1} \leq \ldots \leq d_1 \leq \Delta\text{.}
    \]

    First, we analyze the case where $|\alive|/2 \leq L \leq |\alive|$. We notice that the load of each vertex~$v$ is trivially bounded by $\lambda(v) \leq \sum_{i=1}^r d_i$ (which is obtained if all gates are allocated to~$v$). On the other hand, by assumption that $|\alive|/2 \leq L$ it follows that
    \[
    \lambda(v) \leq \sum_{i=1}^r d_i = P \leq 2PL/|\alive|\text{,}
    \]
    and the claim for the case of $|\alive|/2 \leq L \leq |\alive|$ follows. 
    
    We assume in the remainder of the proof that $L < |\alive|/2$. First, we make the very simple observation that at all times during the procedure, it holds that
    \begin{equation}
    \label{eq:alloc_max_total_load}
    \sum_{v \in V} \lambda(v) \leq \sum_{i=1}^r d_i \cdot L = PL.
    \end{equation}
    
    We split the analysis into two phases of the greedy allocation procedure: we say that the procedure is in its first phase while $d_i \geq 2PL/|\alive|$ and we say that it is in its second phase while $d_i < 2PL/|\alive|$. In each phase, we show that following a gate being allocated to batch of $L$ vertices, all vertices have load at most $\max(4PL/|\alive|,\Delta)$.

    While $d_i \geq 2PL/|\alive|$, we must have at least $|\alive|/2$ vertices with no load. Assume otherwise, then we notice that since the gates $g_1,\dots,g_r$ are sorted in descending order according to $d_1,\dots,d_r$, then each vertex with an allocated gate has load at least $d_i \geq 2PL/|\alive|$. Summing the loads of all vertices, we conclude that the total load is at least $\frac{2PL}{|\alive|}(\frac{|\alive|}{2}+1)  > PL$, contradicting \Cref{eq:alloc_max_total_load}. Hence, after any allocation where $d_i \geq 2PL/|\alive|$, we only pick vertices with $\lambda(v) = 0$, and the load of any such vertex $v$ is therefore at most $\Delta$.
    
     Next, we consider the phase where $d_i \leq 2PL/|\alive|$. By an averaging argument on \Cref{eq:alloc_max_total_load}, at any point of time of the procedure we have at least $|\alive|/2$ vertices with load at most $2PL/|\alive|$. Since $L < |\alive|/2$, the new load of a vertex $v$ that is assigned a new gate is at most 
     \[
     \lambda(v) \leq d_i + \frac{2PL}{|\alive|} \leq \frac{4PL}{|\alive|},
     \]
     and the claim follows.
\end{proof}

\subsection{The Circuit Computation Algorithm}
\label{sec:TheAlgorithm}

We can now complete the description of our circuit computation algorithm, presented in \Cref{alg:main}.
The algorithm takes as input a circuit~$C$ whose inputs (the wires~$\wires(0)$) are already stored in the network.

The algorithm computes the gates of~$C$ layer by layer, in a sequence of $d$ steps referred to as \emph{$\layerStep$-steps}. For $\layerStep$-step $i=1,\ldots,d$, we assume that $\wires(0),\ldots,\wires(i-1)$ have already been stored by previous iterations, and the goal is to compute and store~$\wires(i)$. 
To this end, 
the nodes execute $\Allocate$, which assigns to each non-crashed node~$v_j$ a set of gates ~$G_j\subseteq \gates(i)$ to compute and store their output wires (line~\ref{line:main:allocate}).

Then, each node $v_j$ tries to retrieve the input wires of the gates $G_j$ assigned to it via the $\BulkRetrieve$ procedure (line~\ref{line:main:retrieve}). If successful, the node computes the gates assigned to it (line~\ref{line:main:compute}) and obtains the values of all output wires~$U_j$ of the gates~$G_j$. The node then stores these wires in the network (line~\ref{line:main:store}).

However, crashes that occur during this computation may hinder the computation of some wires.
To overcome this issue, the computation of layer~$i$ consists of two nested loops.
The outer loop, which we call the  $\outStep$-loop, iterates over 
$\ell_1 = 1,\dots,\lceil \log{n} \rceil$ and doubles the number of nodes that try to compute a given gate.
The inner loop, called the $\inStep$-loop, iterates over
$\ell_2 = 1,\dots,\lceil \log{\wireLoad} \rceil$ for some parameter $\wireLoad$ (fixed in the analysis), and doubles the number of retrieval attempts a given node performs for each input wire assigned to it.

If during the computation of $\layerStep$~$i$, more than $c_f n/(q \log n)$ 
new crashes have occurred, 
for some sufficiently small constant $c_f > 0$ determined later,
we re-start the computation of that layer with the remaining nodes. Namely, we maintain a counter $f_{i,rep}$ of newly crashed nodes in the $\layerStep$-step, which is initialized at the start of the $\layerStep$-step to be $0$. Once it passes $c_f n/(q \log n)$, we reset the counter to~0, reset the multiplicity parameters $\ell_1,\ell_2$ to~$1$, and retry to compute and store all remaining unstored wires in~$\wires(i)$. 
This action is captured in lines~\ref{line:main:rep_start}--\ref{line:main:rep_end}, assisted by the variable~$rep$, that counts the number of repetition attempts of computing layer~$i$. We call such a repetition of a $\layerStep$-step \emph{overwhelmingly faulty}:
\begin{definition}
\label{def:overfaulty_step}
Let $c_f > 0$ be a sufficiently small constant. 
A repetition of a $\layerStep$-step is called overwhelmingly faulty if $c_f n/(q \log n)$ new crashes occur during this step. See \Cref{eq:overfaulty_constant_def} in \Cref{sec:analysis} for the exact definition of $c_f$.
\end{definition}

\begin{algorithm}[ht!]
\caption{\protect Robust Circuit Computation   
(for node~$v_j$) %
}\label{alg:main}
\begin{algorithmic}[1]
\Statex {\textbf{Inputs:}}
\Statex
A globally known circuit $C$.

\Statex The inputs $\wires(0)$ to the circuit~$C$ are stored in the network via the $\Store$ procedure.
\Statex Global parameters $\wireLoad$, $\maxWaitRet$ and $\maxWaitStore$  (to be set later).

\Statex

 \For {Layer $i = 1,\ldots, d$} 
 \label{line:main:LayerLoop}
 \Comment{The $\layerStep$-loop}
 \State $rep \gets 1$, $S \gets \emptyset$
 \label{line:main:init_i_rep}
 \For {$\ell_1 = 1,\ldots,\lceil \log{n} \rceil$}\Comment{The $\outStep$-loop}\label{line:main:l1loop}
 \State $G \gets \gates(i)\setminus S$
 \label{line:main:updateG}
\State $G_j \gets \Allocate(G,\ell_1)$ 
 \label{line:main:allocate}

\State Let $W_j$ be the set of wires required for computing all gates in $G_j$.
 \For {$\ell_2 = 1,\ldots, \lceil \log{\wireLoad} \rceil$}\Comment{The $\inStep$-loop}
    \State Execute $\BulkRetrieve(W_j,\ell_2)$. 
    \label{line:main:retrieve}
    \Comment{Idle  until  $\maxWaitRet$ rounds pass}
    \State Remove from $W_j$ all wires that were successfully retrieved in line~\ref{line:main:retrieve}.

\If{$W_j = \emptyset$}
\Comment{Otherwise, idle} \label{alg:main:ifAllWjSuccess}
    \State Let $U_j$ be the output wires of $G_j$. \State Locally compute the values of~$U_j$ using the retrieved values. 
    \label{line:main:compute}
    \State Execute $\Store(U_j)$. 
    \label{line:main:store}\Comment{Idle  until $\maxWaitStore$ rounds pass }
\EndIf
\State Update $S$ to include gates whose output wires were stored by at least one~node. 
\label{line:main:updateS}

\Statex
\Comment{If too many failures have occurred, repeat   layer~$i$}
\State Let $f_{i,rep}$ be the number of crashes since the last execution of line~\ref{line:main:init_i_rep} or \ref{line:main:repeat}.%
\label{line:main:rep_start}
\If{$f_{i,rep} > c_f n/(q\log{n})$} 
\State $rep\gets rep+1$
\State\textbf{continue from} line~\ref{line:main:l1loop}, re-setting $\ell_1 \gets 1$. 
\label{line:main:repeat}
\EndIf
\label{line:main:rep_end}

\EndFor
\EndFor

\EndFor
\end{algorithmic}
\end{algorithm}

The next lemma captures the following observation: the nodes are capable of `bookkeeping' the progress of the computation at any given round of \Cref{alg:main}.
This bookkeeping information
includes gates that were computed and stored, gates that still need to be computed, $\Retrieve$ calls that succeeded and those that failed, etc.
In particular, when a node needs to access some wire~$w$, that was previously stored, the node knows exactly which LDC codeword contains it, and which index of that codeword it should decode in order to retrieve~$w$. 
\begin{lemma}[Bookkeeping]
\label{lem:bookkeeping}
Any non-crashed node knows, at the start of any round,
the following information:
(1) the set~$S$ of gates whose outputs were stored in the network, and 
(2) for any $g\in S$ and any output~$w$ of~$g$, the LDC codeword that contains $w$ (namely, the node~$v_j$ that stored it and the round in which it was stored) and the index of~$w$ in the string~$U_j$ that $v_j$ stored.
\end{lemma}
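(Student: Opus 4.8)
The plan is to prove this by induction on the round number, tracking a shared invariant: all non-crashed nodes agree on the full "transcript" of the computation so far. First I would fix the invariant precisely. At the start of round $t$, every non-crashed node knows (a) the set $\alive$ of currently non-crashed nodes, (b) which $\layerStep$-step $i$, repetition count $rep$, and loop indices $\ell_1,\ell_2$ are currently active, (c) the set $S$ of gates whose outputs have been stored, and for each such gate and output wire $w$, the identity of the storing node $v_j$ and the round in which it was stored (hence, by \Cref{alg:store}, which part index and symbol index of $U_j$ carries $w$), and (d) for every prior $\BulkRetrieve$ invocation, which randomness strings $\mathcal R(v_j)$ were used and which individual $\Retrieve$ calls succeeded or failed. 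Claims (1) and (2) of the lemma are sub-statements of this invariant, so it suffices to establish the invariant.

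The base case is immediate: initially $S=\emptyset$, $\alive=[n]$, all loop counters equal $1$, and no $\BulkRetrieve$ has run, and these facts are part of the globally known inputs (the circuit $C$ and the promise that $\wires(0)$ are stored are globally known). For the inductive step I would argue that each mechanism by which the state changes preserves global agreement. The key observations are: (i) a node crash is observed by every non-crashed node at the start of the next round, since a crashed node sends nothing — this keeps $\alive$ and the failure counter $f_{i,rep}$ consistent across all nodes, and since the decision to restart a layer (lines~\ref{line:main:rep_start}--\ref{line:main:rep_end}) depends only on $\alive$ and $f_{i,rep}$, all nodes agree on it; (ii) $\Allocate$ is a zero-communication deterministic function of $(G,\ell_1,\alive)$ (\Cref{alg:allocate}), and $G=\gates(i)\setminus S$ is determined by the circuit and $S$, so by the inductive hypothesis all nodes compute the same assignment $G_j$ for every $v_j$; (iii) the randomness strings used inside $\BulkRetrieve$ are produced by the zero-round deterministic algorithm of \Cref{lem:good_seeds_computation} on inputs $(\alive,W_j,\ell)$, all of which are already common knowledge, so every node knows $\mathcal R(v_j)$ for every $v_j$ and hence — since the LDC is non-adaptive (\Cref{def:non_adaptivity}) — knows exactly the query set $S\subseteq V$ that each $\Retrieve(w,R_{v_j,w,i})$ uses, together with whether that call succeeds (a call fails precisely when one of its queried nodes crashes during the query round or too many of its symbols are erased, and both conditions are determined by $\alive$, which all nodes track); (iv) when a node $v_j$ executes $\Store(U_j)$, the round in which it does so is determined by the (globally known) schedule of idling via $\maxWaitRet,\maxWaitStore$ and by which $\Retrieve$ calls succeeded — so every node can compute, from the transcript, the round and the layout of $U_j$, which by \Cref{alg:store} pins down the codeword and index for each output wire $w$; and (v) whether $v_j$ itself crashes mid-$\Store$ is observed by all, so $S$ is updated consistently in line~\ref{line:main:updateS}. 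Composing these, the invariant holds at the start of round $t+1$.

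The only subtlety — and the place I'd be most careful — is bookkeeping the \emph{timing}: a node needs to know not just \emph{that} some wire $w$ was stored but \emph{in which concrete round}, because that round determines which of $v_j$'s $\last_j$ codewords carries $w$ (the inner loop of \Cref{alg:store} sends one codeword per round). This requires that the entire round-by-round schedule of \Cref{alg:main} be a deterministic function of the observable transcript of crashes and $\Retrieve$-outcomes; concretely, that the idling durations governed by $\maxWaitRet$ and $\maxWaitStore$ are fixed global constants (or at worst computable from $\alive$), and that the pipelining of queries in line~\ref{line:BulkRet:retrieve} proceeds by a deterministic rule (at most one query per node per round, in a fixed order) so that all nodes agree on when each $\BulkRetrieve$ finishes. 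Granting that — which is how the procedures are specified — every state transition in \Cref{alg:main} is a deterministic function of information all non-crashed nodes share, so induction closes and the lemma follows.
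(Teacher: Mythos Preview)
Your proposal is correct and follows essentially the same approach as the paper's proof: induction on the round number, leveraging that crashes are globally observable, that $\Allocate$ and the randomness-string generation of \Cref{lem:good_seeds_computation} are deterministic functions of common knowledge, and that non-adaptivity of the LDC makes the success/failure of every $\Retrieve$ call inferable from~$\alive$. Your treatment is, if anything, slightly more careful than the paper's in making the per-round timing invariant explicit (the paper phrases the inductive step at the granularity of $\outStep$-steps and leaves the round-level synchrony somewhat implicit).
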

\begin{proof}
Recall that, by definition, since a crashed node does not send any messages starting from the round in which it crashes, all nodes know the set~$\alive$ of non-crashed nodes at the beginning of every round. 

We prove Items (1) and (2) by induction on the round number.
At initialization (the beginning of the first round, $r=1$), the inputs to the circuit~$C$ are assumed to be stored, hence (1) and (2) hold for the input gates~$\gates(0)$. 

Next, we assume the statement holds at the beginning of some $\outStep$-step, and we show it holds in every round until the end of this $\outStep$-step.
Note that all nodes execute \Cref{alg:main} in synchrony and, specifically, they all perform $\BulkRetrieve$ or $\Store$ at the same rounds (other actions do not involve communication as they are purely computational and thus take zero rounds).

If round $r$ is not the final round of a $\Store$ procedure, then the set of stored wires is unchanged.  
Otherwise, each node knows the set~$S$ of stored gates at the beginning of the $\Store$, by the induction hypothesis, and thus it also knows the set $G = \gates(i)\setminus S$.
Since $\Allocate$ is deterministic and depends only on $C$, $\ell_1$, $G$, and~$S$, then all nodes 
learn the same output of $\Allocate(G,\ell_1)$. 
In particular, they all learn  
the gates~$G_j$ that each $v_j\in \alive$ is assigned to compute in this $\outStep$-step.

With this knowledge, all nodes 
can (locally) generate the good randomness strings that are used by some~$v_j$ for each of its $\BulkRetrieve$ invocations (\Cref{lem:good_seeds_computation}). 
Further, all nodes have the same knowledge 
about $\Retrieve$ calls that failed in previous rounds due to new crashes. They can thus infer which nodes $v_j$ have successfully retrieved all input wires of $G_j$ (i.e., those for which $W_j = \emptyset$) and satisfy the condition of in \Cref{alg:main:ifAllWjSuccess}. Only these nodes perform the $\Store$ that completes in that round~$r$. 

Out of the nodes that perform $\Store$, any node~$v_j$ that 
does not crash before round~$r$,  succeeds in storing all the wires in~$U_j$ (i.e., all the output wires of~$G_j$).

Therefore, at the start of round $r+1$,  all the nodes in~$\alive$ learn the set of nodes that performed a  successful $\Store$. They also know the set $U_j$ of each $v_j$ that completed a $\Store$, in particular, which wires it contains and their internal order.\footnote{While $U_j$ is a set of wire-values, the $\Store$ procedure expects a bitstring, hence we induce some standard order on the set~$U_j$ that maps it into a bitstring, and this ordering is known by all nodes.} 
This implies Item~(2).
It also follows that all nodes in~$\alive$ can update their set~$S$ of stored gates in a consistent manner (line~\ref{line:main:updateS}), which implies Item~(1).
\end{proof}

As a result of this careful bookkeeping, $v_j$ does not  need to send any ``metadata'' information to the nodes it needs to query---they already have all the needed information (in the notations of $\BulkRetrieve$, they know $w$ and $U_{(w)}$, the randomness strings~$\mathcal{R}(v_j)$, and the specific round(s) in which $v_j$ queries them (i.e., is expecting a symbol from them). 

\section{Analysis}
\label{sec:analysis}
\medskip

In this section we prove that our circuit computation algorithm satisfies the requirements of our main theorem, restated here for convenience. 
\MainWireThm*

We prove the main bulk of \Cref{thm:circuit} via the following  \Cref{lem:layer_step_main}, stating that  once \Cref{alg:main} completes its $i$-th \layerStep-step (i.e., the $i$-th iteration of the loop in line~\ref{line:main:LayerLoop}),
the wires of the $i$-th layer,  $\wires(i)$,  are successfully stored in the network. This immediately leads to the correctness of the algorithm, since after $d$ \layerStep-steps, the algorithm completes computing and storing all the wires of the circuit~$C$, including all of its output wires.

\begin{lemma}
\label{lem:layer_step_main}
   After the $i$-th $\layerStep$-step ends,
   all $\wires(i)$  are stored in the network.
\end{lemma}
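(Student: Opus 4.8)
The plan is to prove \Cref{lem:layer_step_main} by tracking the evolution of the set $S$ of gates in $\gates(i)$ whose outputs have been stored, across the iterations of the two nested loops ($\outStep$ over $\ell_1$, $\inStep$ over $\ell_2$), and across the at most $\lfloor \alpha n/(c_f n/(q\log n))\rfloor +1$ possible repetitions of the $\layerStep$-step. First I would set up the accounting: partition the execution of $\layerStep$-step $i$ into \emph{runs}, where a new run begins each time line~\ref{line:main:repeat} restarts the layer (i.e., after an overwhelmingly faulty repetition). Since each overwhelmingly faulty repetition consumes $c_f n/(q\log n)$ fresh crashes from the adversary's budget of $\alpha n$, there are at most $O(\alpha q \log n)$ overwhelmingly faulty repetitions \emph{in total across the whole algorithm}, hence certainly at most that many within $\layerStep$-step $i$. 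So it suffices to show that within a single run that is \emph{not} overwhelmingly faulty---i.e., during which fewer than $c_f n/(q\log n)$ crashes occur---the set $S$ grows to all of $\gates(i)$ by the time the run completes all its $\outStep$ and $\inStep$ iterations. Because the $\layerStep$-step only terminates when such a non-overwhelmingly-faulty run finishes, this gives the claim.

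Next I would analyze a single good (non-overwhelmingly-faulty) run. Fix an $\outStep$ iteration $\ell_1$ within this run and a gate $g\in\gates(i)\setminus S$ still uncomputed at its start. By \Cref{alg:allocate}, $g$ is assigned to $\min(2^{\ell_1},|\alive|)$ distinct alive nodes. The key quantitative claim is: \emph{if $2^{\ell_1}$ exceeds twice the number of crashes remaining in this run, then at least one node assigned to $g$ survives the entire $\outStep$ iteration, and moreover (in its innermost $\inStep$ iteration $\ell_2 = \lceil\log\maxWidth\rceil$) it successfully $\BulkRetrieve$s all its input wires and $\Store$s all its outputs, so $g$ enters $S$.} For the survival part, the adversary has at most $c_f n/(q\log n)$ crashes for the whole run, so once $2^{\ell_1} \ge 2 c_f n/(q\log n)$ (which happens by $\ell_1 = O(\log n)$, within range) a surviving node exists. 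For the successful retrieval part I would invoke the doubling structure of $\BulkRetrieve$ together with the goodness of the randomness strings (\Cref{def:good_seed}, \Cref{lem:good_seeds_computation}): each individual $\Retrieve(w,R_{v_j,w,i})$ succeeds \emph{unless a new crash occurs among the $q$ queried nodes during that $\Retrieve$}; with $2^{\ell_2}$ independent attempts per wire, failing wire $w$ for $v_j$ costs the adversary $2^{\ell_2}$ fresh crashes (since the query sets are disjoint-enough / each attempt that fails needs a distinct new crash). Once $2^{\ell_2}$ exceeds the crash budget remaining in the run, every wire is retrieved, $W_j=\emptyset$, the node computes $U_j$ and calls $\Store(U_j)$, and since the node did not crash (it is the surviving node), the $\Store$ completes and $g\in S$ after line~\ref{line:main:updateS}. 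Here I also need that $\maxWidth$ (hence $\lceil\log\maxWidth\rceil$) is large enough that $2^{\ell_2}$ can exceed $c_f n/(q\log n)$; since $\maxWidth \ge n$ is essentially forced by the input sizes in our setting (and the theorem is stated up to log terms otherwise), this is fine.

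The main obstacle, which deserves the most care, is the \emph{interleaving of the two doubling processes against a single adaptive adversary budget}. The subtlety is that a crash used by the adversary to kill an assigned node in the $\outStep$-loop and a crash used to spoil a $\Retrieve$ in the $\inStep$-loop both draw from the same pool, and the adversary can move crashes around in time; I must argue that it cannot "reuse" crashes to simultaneously defeat all surviving nodes' retrievals for a fixed $g$. The clean way is: condition on the run being non-overwhelmingly-faulty, so \emph{total} fresh crashes in the run are $< c_f n/(q\log n)$; then take $\ell_1^\star = \lceil \log(4 c_f n/(q\log n))\rceil$ and $\ell_2^\star = \lceil \log(4 c_f n/(q\log n))\rceil$, both $O(\log n)$ and hence within the loop ranges (using $\maxWidth = \Omega(n)$), and observe that once $\ell_1 \ge \ell_1^\star$, for each still-uncomputed $g$ the set of assigned alive nodes has size $> 4 c_f n/(q\log n)$ which strictly exceeds the total crash budget of the run, so some assigned node $v$ never crashes; and once additionally $\ell_2 \ge \ell_2^\star$, node $v$ makes $> 4 c_f n/(q\log n) \cdot q/\ldots$ retrieval attempts per wire, which by goodness (each failure charging a distinct new crash, and congestion being controlled so the attempts actually execute within $\maxWaitRet,\maxWaitStore$ rounds as guaranteed by \Cref{lem:nodeLoad}) cannot all fail. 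Thus $g$ is stored. Since this holds for \emph{every} uncomputed $g$ simultaneously in that $(\ell_1,\ell_2)$ iteration, $S = \gates(i)$ when the run reaches it, and the run completes with all $\wires(i)$ stored, establishing the lemma. I would close by noting the bookkeeping lemma (\Cref{lem:bookkeeping}) ensures all nodes agree on $S$, $\alive$, and the randomness strings throughout, so these claims are consistent across the network.
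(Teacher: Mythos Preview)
Your high-level decomposition into runs and the argument for the $\outStep$ loop (that once $2^{\ell_1}$ exceeds the run's crash budget, some assigned node survives) are fine and match the paper's structure. The gap is in your treatment of the $\inStep$ loop.

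You assert that ``failing wire $w$ for $v_j$ costs the adversary $2^{\ell_2}$ fresh crashes (since the query sets are disjoint-enough / each attempt that fails needs a distinct new crash).'' The goodness property (\Cref{def:good_seed}) does \emph{not} give this. Item~(\ref{item:good_seed_cong}) only bounds the \emph{total} number of times any node is queried across all wires and all attempts; it says nothing about the $2^{\ell_2}$ query sets for a single fixed wire~$w$ being disjoint, and the derandomization in \Cref{lem:compute_randomness_deter} does not enforce any such per-wire structure. A single newly crashed node can therefore appear in many of the $2^{\ell_2}$ query sets for~$w$ and fail all of them at once. With your threshold $2^{\ell_2^\star}\approx c_f n/(q\log n)$ and no bound on~$|W_j|$, the congestion bound $O(\lceil 2^{\ell_2^\star}|W_j|q/n\rceil\log n)$ can be large, so one crash can wipe out a huge number of \Retrieve invocations, and your charging argument collapses.

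The paper closes this gap differently: instead of jumping to a fixed $\ell_2^\star$, it proves by induction (\Cref{lem:l2iteration}) that $|W_j(\ell_1,\ell_2)|\le \wireLoad/2^{\ell_2-1}$ at every step, so that $2^{\ell_2}|W_j|\le 2\wireLoad$ stays bounded throughout. This caps the congestion at $O(\lceil \wireLoad q/n\rceil\log n)$, hence caps the number of \Retrieve calls one crash can kill, and a pigeonhole shows that failing half the remaining wires would force more than $c_f n/(q\log n)$ fresh crashes, contradicting non-overwhelming-faultiness. That halving invariant in turn requires the load condition $\lambda_j(\ell_1)\le\wireLoad$ (\Cref{lem:l1invariant}), which you also do not establish; without it you cannot bound $|U_j|$ either, so you have no guarantee that \Store completes within $\maxWaitStore$ rounds. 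Your plan needs both of these inductive invariants (or an alternative per-wire disjointness guarantee, which would require strengthening \Cref{def:good_seed} and redoing \Cref{lem:good_seeds_computation}).
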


The complexity of the algorithm, as stated in \Cref{thm:circuit}, 
is proved via the following lemma, whose proof can be found in 
\Cref{subsec:rounds}.

\begin{restatable}{lemma}{LemmaRounds}
\label{lem:round_complexity_main}
    The round complexity of 
    \Cref{alg:main}
    is $d \lceil\maxWidth/n^2+\Delta/n \rceil)2^{O(\sqrt{\log{n}}\log\log{n})}$.
\end{restatable}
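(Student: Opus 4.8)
I will bound the running time as a product of two quantities: the total number of $\inStep$-iterations (i.e.\ executions of the innermost loop body of \Cref{alg:main}) performed over the whole run, and the number of rounds a single such iteration costs. By \Cref{lem:good_seeds_computation} the generation of the good randomness strings is free (zero rounds), and $\Allocate$ is a purely local computation, so the only communication inside one $\inStep$-iteration is one call to $\BulkRetrieve$ (for $\maxWaitRet$ rounds) and at most one call to $\Store$ (for $\maxWaitStore$ rounds); hence one iteration costs exactly $\maxWaitRet+\maxWaitStore$ rounds, and it suffices to bound the iteration count and these two global parameters separately.

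\textbf{Iteration count.} A single repetition of a $\layerStep$-step runs the $\outStep$-loop over $\lceil\log n\rceil$ values of $\ell_1$ and, for each, the $\inStep$-loop over $\lceil\log\wireLoad\rceil$ values of $\ell_2$, i.e.\ $\lceil\log n\rceil\lceil\log\wireLoad\rceil$ iterations; since $\wireLoad$ is set in the analysis of \Cref{lem:layer_step_main} to a value with $\log\wireLoad=2^{O(\log\log n)}$, this product is $2^{O(\log\log n)}$. The number of repetitions over the whole run is $d$ (one per layer) plus the number of \emph{overwhelmingly faulty} sub-steps (\Cref{def:overfaulty_step}); each of the latter consumes at least $c_f n/(q\log n)$ of the adversary's budget of $\alpha n$ crashes, so there are at most $\alpha n/(c_f n/(q\log n))=O(q\log n)=2^{O(\sqrt{\log n})}$ of them. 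Thus the total number of $\inStep$-iterations is $(d+2^{O(\sqrt{\log n})})\cdot 2^{O(\log\log n)}$.

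\textbf{Cost of one iteration.} I will bound $\maxWaitStore$ and $\maxWaitRet$ by combining the load bound of \Cref{lem:alloc_load_balance} with the progress structure of a $\layerStep$-step established in the proof of \Cref{lem:layer_step_main}. First note that once $\gates(i)\setminus S=\emptyset$, every remaining $\outStep$/$\inStep$-iteration of layer $i$ is free ($W_j=\emptyset$ for all $j$), so it suffices to bound the load while some gate is still unstored. For that, the key structural fact is that the product (number of still-unstored gates) $\times\,2^{\ell_1}$ stays $O(\max(\maxWidth,\,n/(q\log n)))$ throughout a repetition: a gate survives an $\outStep$-iteration at redundancy level $L=\min(2^{\ell_1},|\alive|)$ only if all $L$ nodes that were assigned it have crashed, $\Allocate$ spreads the at most $\maxWidth$ gates of the layer over essentially disjoint blocks of $L$ nodes, and the number of crashes within a repetition is capped at $c_f n/(q\log n)$ before a restart is forced. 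Plugging this into \Cref{lem:alloc_load_balance} (with $P=\sum_{g\in\gates(i)}\fantotal(g)=O(\maxWidth)$ for the layered circuit) shows that every node is assigned at most $O(\maxWidth/n+1)$ gates and total fan at most $O(\maxWidth/n+\Delta)$, in every $\outStep$-iteration. By \Cref{lem:store_time}, storing the $O(\maxWidth/n+1)$ one-bit gate outputs costs $O(\lceil(\maxWidth/n)/(\rho n\lfloor\log q\rfloor)\rceil)$ rounds, which since $q=2^{O(\sqrt{\log n})}$ and $\rho^{-1}=2^{O(\sqrt{\log n}\log\log n)}$ equals $\lceil\maxWidth/n^2\rceil\cdot 2^{O(\sqrt{\log n}\log\log n)}$; by \Cref{lem:nodeLoad}, a $\BulkRetrieve$ with multiplicity $2^{\ell_2}\le 2\wireLoad$ of at most $O(\maxWidth/n+\Delta)$ wires costs $O(\lceil 2^{\ell_2}(\maxWidth/n+\Delta)q/n\rceil\log n)$ rounds, which (with $\wireLoad=2^{O(\sqrt{\log n}\log\log n)}$, so $\wireLoad q\log n=2^{O(\sqrt{\log n}\log\log n)}$) equals $\lceil\maxWidth/n^2+\Delta/n\rceil\cdot 2^{O(\sqrt{\log n}\log\log n)}$. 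Hence $\maxWaitRet+\maxWaitStore=\lceil\maxWidth/n^2+\Delta/n\rceil\cdot 2^{O(\sqrt{\log n}\log\log n)}$.

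\textbf{Combining, and the hard part.} Multiplying the iteration count by the per-iteration cost gives $(d+2^{O(\sqrt{\log n})})\cdot 2^{O(\log\log n)}\cdot\lceil\maxWidth/n^2+\Delta/n\rceil\cdot 2^{O(\sqrt{\log n}\log\log n)}$; since $d\ge 1$, $\lceil\maxWidth/n^2+\Delta/n\rceil\ge 1$, and the additive $2^{O(\sqrt{\log n})}$ and multiplicative $2^{O(\log\log n)}$ are absorbed by $2^{O(\sqrt{\log n}\log\log n)}$, this simplifies to $d\cdot\lceil\maxWidth/n^2+\Delta/n\rceil\cdot 2^{O(\sqrt{\log n}\log\log n)}$, as claimed. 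I expect the main obstacle to be the load bound in the third paragraph: because the redundancy $2^{\ell_1}$ can grow to nearly $n/(q\log n)$, a naive estimate makes the per-node load, hence $\maxWaitRet$ and $\maxWaitStore$, blow up by a factor of almost $n$; the point is that the number of still-unstored gates shrinks in inverse proportion to the redundancy, and making this rigorous requires inspecting exactly how $\Allocate$ distributes gates over (roughly) disjoint node-blocks and tracking the product of this count with $2^{\ell_1}$ against the per-repetition crash cap of $c_f n/(q\log n)$ — which is precisely the bookkeeping already carried out in the correctness proof of \Cref{lem:layer_step_main}, so this part of the argument will invoke and reuse that analysis.
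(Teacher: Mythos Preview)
Your overall decomposition (iteration count $\times$ per-iteration cost) matches the paper's, and your counting of repetitions is fine. The gap is in the per-iteration cost, specifically the $\BulkRetrieve$ bound.

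You assert that $\wireLoad=2^{O(\sqrt{\log n}\log\log n)}$, and use this to absorb the multiplicity factor $2^{\ell_2}\le 2\wireLoad$ into the sub-polynomial overhead. But $\wireLoad$ is defined as $\max\bigl(8\maxWidth/((1-\alpha)n),\,\Delta,\,n\bigr)$, so $\wireLoad\ge n$; it is certainly not $n^{o(1)}$. (You even say earlier that $\log\wireLoad=2^{O(\log\log n)}$, i.e.\ $\wireLoad$ can be polynomial in $n$, which is inconsistent with the later claim.) With the correct value of $\wireLoad$, your estimate $O(\lceil 2^{\ell_2}(\maxWidth/n+\Delta)q/n\rceil\log n)$ is, in the worst case $\ell_2=\lceil\log\wireLoad\rceil$, of order $\wireLoad\cdot(\maxWidth/n+\Delta)\cdot q/n$, i.e.\ essentially $\wireLoad^2 q/n$ rather than $\wireLoad q/n$ --- a polynomial blow-up.

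What you are missing is the halving invariant from the correctness analysis (\Cref{lem:l2iteration}): in a non-overwhelmingly-faulty repetition one has $|W_j(\ell_1,\ell_2)|\le\wireLoad/2^{\ell_2-1}$, so the product $2^{\ell_2}\,|W_j|$ that governs the $\BulkRetrieve$ cost stays $\le 2\wireLoad$ \emph{uniformly in $\ell_2$}. Combined with \Cref{lem:nodeLoad} this gives $\maxWaitRet=O(\lceil\wireLoad q/n\rceil\log n)$, and with $\wireLoad=\Theta(\maxWidth/n+\Delta+n)$ this is $\lceil\maxWidth/n^2+\Delta/n\rceil\cdot 2^{O(\sqrt{\log n})}$. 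The paper's proof is exactly this: it invokes \Cref{lem:l1invariant} and \Cref{lem:l2iteration} to bound $|U_j|\le\wireLoad$ and $2^{\ell_2}|W_j|\le 2\wireLoad$, then reads off $\maxWaitStore$ and $\maxWaitRet$ from \Cref{lem:store_time} and \Cref{lem:nodeLoad}. Your ``structural fact'' about the product of unstored gates with $2^{\ell_1}$ is the $\outStep$-level analog (and is indeed what \Cref{lem:l1invariant} establishes), but you need the $\inStep$-level analog too, and that is where your argument breaks.
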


The above two lemmas allow us to prove our  main theorem.
\begin{proof}[Proof of \Cref{thm:circuit}]
    By \Cref{lem:layer_step_main}, at the end of 
    $\layerStep$-step $i$, the wires $\wires(i)$ are stored in the network. Hence, after the last $\layerStep$-step,  i.e.\@ layer~$d$, 
    all $\wires(i)$ for $i\le d$ are stored in the network.
    These include all the outputs of the circuit~$C$, which proves the correctness of the algorithm.
    By \Cref{lem:round_complexity_main}, the round complexity of the algorithm is $d \lceil\maxWidth/n^2+\Delta/n \rceil)2^{O(\sqrt{\log{n}}\log\log{n})}$. 
\end{proof}

The main technical effort is in proving \Cref{lem:layer_step_main}. 
The crux of the argument relies on showing that at the end of each $\outStep$-step, if there were not too many new crashes, then each non-crashed node successfully stores all its allocated wires. 
Recall that a certain repetition of a $\layerStep$-step is called 
overwhelmingly faulty if there are more than $c_f n / (q\log n)$ newly crashed nodes during that repetition (\Cref{def:overfaulty_step}).

\begin{lemma}
  \label{lem:l2iteration_main}
      At the end of any $\outStep$-step in a non-overwhelmingly faulty repetition of a $\layerStep$-step, the following holds for all $j \in [n]$: if $v_j$ is not crashed, then $v_j$ succeeds in storing the all the outputs of all the gates in~$G_j$, namely, it succeed in storing~$U_j$.
  \end{lemma}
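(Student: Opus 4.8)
The plan is to induct on the outer loop parameter $\ell_1$ within a fixed non-overwhelmingly faulty repetition of a $\layerStep$-step, tracking the set $S$ of gates whose outputs are already stored. I would show the following strengthened invariant: at the end of the $\outStep$-step with parameter $\ell_1$, every gate $g \in \gates(i) \setminus S$ that is assigned (by $\Allocate(G,\ell_1)$) to a set of $\min(2^{\ell_1},|\alive|)$ live nodes has had \emph{at least one} of those assigned nodes successfully store all of $g$'s output wires --- and moreover, each individual non-crashed node $v_j$ with $W_j = \emptyset$ at the end of the $\inStep$-loop indeed completes its $\Store(U_j)$ without crashing. The base of the argument is the key per-node claim: if $v_j$ is assigned gate set $G_j$ with total wire-load bounded (via \Cref{lem:alloc_load_balance}) by roughly $\max(4PL/|\alive|,\Delta)$, then inside the $\inStep$-loop $v_j$ retrieves all of $W_j$. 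This is where \Cref{lem:good_seeds_computation} and \Cref{def:good_seed} do the work: each $\Retrieve(w,R_{v_j,w,i})$ either succeeds or fails only because a \emph{new} crash hit one of its $q$ queried nodes during that round; since good randomness strings make all $2^{\ell_2}$ copies for a fixed $w$ succeed absent new crashes, the adversary must spend at least $2^{\ell_2}$ fresh crashes to kill index $w$ for $v_j$. Summing, across one $\inStep$-loop the number of wires still unretrieved after parameter $\ell_2$ is at most (new crashes so far)$/2^{\ell_2}$, which drops below $1$ once $2^{\ell_2}$ exceeds the crash count --- and because the repetition is \emph{not} overwhelmingly faulty, the total new crashes are at most $c_f n/(q\log n)$, which for the right choice of $\wireLoad$ (so that $\lceil\log\wireLoad\rceil$ exceeds $\log(c_f n/(q\log n))$) forces $W_j = \emptyset$ before the $\inStep$-loop exhausts.

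Next I would handle the $\outStep$-loop aggregation. A gate $g$ fails to be stored after iteration $\ell_1$ only if \emph{every} one of its $\min(2^{\ell_1},|\alive|)$ assigned live nodes either crashed, or survived but failed to retrieve some input wire of $g$. By the per-node claim, a surviving node only fails to empty its $W_j$ if the repetition were overwhelmingly faulty --- excluded by hypothesis --- or, in the transitional rounds, if it itself is hit by a new crash \emph{during} one of its $\Retrieve$ or $\Store$ calls. So, up to the handful of nodes touched by new crashes, every gate whose full assigned batch does not get crashed is stored. Thus a still-uncomputed gate after iteration $\ell_1$ witnesses $\min(2^{\ell_1},|\alive|)$ distinct new crashes dedicated to it (the nodes in its batch are disjoint across... here I need care, since different gates share nodes). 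The cleaner accounting: after iteration $\ell_1 = \lceil\log n\rceil$ we have $2^{\ell_1} \ge n \ge |\alive|$, so every uncomputed gate is assigned to \emph{all} of $\alive$, and since the repetition is not overwhelmingly faulty at least one node in $\alive$ survives and, by the per-node claim, stores the gate. Hence at the end of the last $\outStep$-step of a non-overwhelmingly faulty repetition, $S = \gates(i)$ and every surviving $v_j$ with a nonempty assignment has stored $U_j$; a fortiori the statement of the lemma holds at the end of \emph{every} $\outStep$-step of such a repetition, because the invariant I track is monotone and the per-node conclusion ($W_j=\emptyset \Rightarrow \Store$ succeeds) is established at each $\ell_2$-loop exit regardless of $\ell_1$.

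The main obstacle I anticipate is the bookkeeping of \emph{which} crashes are charged to which failures without double-counting: a single new crash can simultaneously invalidate a $\Retrieve$ for many different nodes $v_j$ (if the crashed node was queried by all of them) and can itself be the crash of an assigned node. The resolution is to separate the two doubling mechanisms cleanly --- the $\inStep$-loop's exponential growth means the $2^{\ell_2}$ \emph{internal, per-$(v_j,w)$ disjoint} copies each need a \emph{fresh} crash on their (distinct, by smoothness/load-balancing of good strings) query sets, so charging is injective within a fixed $(v_j,\ell_1)$; and across the $\outStep$-loop, one rides on the $\ell_1 = \lceil\log n\rceil$ full-assignment step to avoid having to argue disjointness of gate batches at all. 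I would also need to confirm that $\maxWaitRet$ and $\maxWaitStore$ are set large enough (via \Cref{lem:nodeLoad} and \Cref{lem:store_time} together with the load bound from \Cref{lem:alloc_load_balance}) that a non-crashing node always has time to finish its $\BulkRetrieve$ and $\Store$ within the allotted idle window --- this is routine once the load bound is in hand, but it is a prerequisite for the per-node claim to even make sense.
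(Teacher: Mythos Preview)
Your overall architecture is close to the paper's, but there is a genuine gap in the central accounting step. You claim that to kill a wire $w$ at $\inStep$-parameter $\ell_2$, the adversary ``must spend at least $2^{\ell_2}$ fresh crashes'' because the $2^{\ell_2}$ copies have ``distinct, by smoothness/load-balancing of good strings'' query sets, so ``charging is injective.'' This is false as stated. \Cref{def:good_seed} does \emph{not} guarantee disjoint query sets across the $2^{\ell_2}$ copies of a fixed $w$; it only bounds the total multiplicity of any single index to $c_1\lceil 2^{\ell_2}|W_j|q/n\rceil\log n$. A single new crash can therefore invalidate up to that many $\Retrieve$ copies for $v_j$, possibly several of them for the \emph{same} $w$. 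Your bound ``(new crashes so far)$/2^{\ell_2}$'' on the number of still-unretrieved wires is off by exactly this congestion factor.

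This missing factor is not cosmetic: it is why the paper cannot do a one-shot count but instead proves the halving invariant $|W_j(\ell_1,\ell_2)|\le \wireLoad/2^{\ell_2-1}$ by induction on $\ell_2$ (\Cref{lem:l2iteration}). The induction keeps $2^{\ell_2}|W_j|\le 2\wireLoad$, which in turn keeps the congestion bound at $O(\lceil\wireLoad q/n\rceil\log n)$ uniformly in $\ell_2$; only then does the crash budget $c_f n/(q\log n)$ suffice to force the halving. And for this to start correctly at $\ell_2=1$, one needs the load condition $\lambda_j(\ell_1)\le\wireLoad$ at \emph{every} $\ell_1$, which itself requires an induction on $\ell_1$ (\Cref{lem:l1invariant}): the total fan of gates surviving into $\outStep$-step $\ell_1+1$ is controlled because each such gate was assigned to $2^{\ell_1}$ now-crashed nodes, each of load $\le\wireLoad$. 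You invoke \Cref{lem:alloc_load_balance} but never establish this inductive control on the remaining total fan $P$, so your per-node claim is unfounded for $\ell_1>1$.

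Finally, your second paragraph (the $\ell_1=\lceil\log n\rceil$ full-assignment argument) is aimed at the wrong target: that reasoning proves \Cref{lem:layer_step_main}, not the present lemma. \Cref{lem:l2iteration_main} is purely a per-node, per-$\outStep$-step statement and follows immediately once the load condition and the $\ell_2$-halving are in place; no aggregation over gates or over $\ell_1$ is needed.
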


Given \Cref{lem:l2iteration_main}, \Cref{lem:layer_step_main} follows immediately:
\begin{proof}[Proof of \Cref{lem:layer_step_main}]
    By \Cref{lem:l2iteration_main}, at the end of a $\outStep$-step every non-crashed node $v_j$ stores $U_j$. Moreover, we notice that for $\ell_1 = \lceil \log{n} \rceil$, each unstored wire of $\wires(i)$ is allocated to all non-crashed nodes. Since the total number of node crashes throughout the algorithm is at most $\alpha n < n$, it follows that at least one node is non-crashed after this step ends, and thus all remaining wires are stored following this step.
\end{proof}

Our primary goal in this section is therefore to prove \Cref{lem:l2iteration_main}. 
Our path is as follows. We prove the following using induction: (a) the number of wires belonging to gates allocated to each node at the start of a $\outStep$-step is not too large, specifically, not larger than $\wireLoad = \Theta(\max(\maxWidth/n,\Delta,n))$, and (b) at each $\inStep$-step, some upper bound on the number of wires $|W_j|$ that node $v_j$ still needs to retrieve in order to compute its allocated gates~$G_j$ decreases by at least half (with respect to the previous iteration). These will lead to the lemma.

For the remainder of the section, we associate a step of the algorithm with four values $(i,rep,\ell_1,\ell_2)$, denoting that we refer to repetition $rep$ of $\layerStep$-step $i$, where the inner loops are set to $\outStep$-step $\ell_1$ and $\inStep$-step $\ell_2$, respectively.

\subsection{Proof of Lemma~\ref{lem:l2iteration_main}}

Throughout  this subsection, we fix some $\layerStep$-step and a repetition $rep$. We assume that it is \emph{not} overwhelmingly faulty (\Cref{def:overfaulty_step}): If it is, then it is restarted (a new repetition), and we do not account for any progress during this repetition. Nevertheless, 
the remaining budget of failures allowed in future repetitions
significantly decreases.

     Recall that $\maxWidth $ and $ \Delta$ denote the width of~$C$ and the maximum fan of a gate, respectively, and that $\alpha$ is the fraction of crashes allowed for the adversary. 
     We set 
     \begin{equation}
     \wireLoad = \max\left(\frac{8\maxWidth}{(1-\alpha) n},\Delta,n\right)\text{,}
     \end{equation}
     which, informally, is a rough upper bound on the number of wires a node needs to retrieve in order to be able to compute its allocated gates, and on the number of wire it needs to store. Additionally, we set $c_f$ from \Cref{def:overfaulty_step}. Let $c_1 > 0$ be the constant of \Cref{def:good_seed}(1), then:
     \begin{equation}
     \label{eq:overfaulty_constant_def}
         c_f = \min\left(\frac{1-\alpha}{16},\frac{1}{4c_1}\right).
     \end{equation}

    The variables of the algorithm (e.g. $U_j,W_j,G_j,\alive,G,S$) change in the course of the algorithm. For any such variable $X$, we denote by $X(\ell_1,\ell_2)$ the value $X$ holds at the start of $\inStep$-step $\ell_2$ of $\outStep$-step $\ell_1$, and we set $X(\ell_1) = X(\ell_1,1)$. For example, $U_j(2)$ is defined as the value of $U_j$ at the start of the first $\inStep$-step of the second $\outStep$-step. 
    
    Denote by $\lambda_j(\ell_1) := |U_j(\ell_1)|+|W_j(\ell_1,1)|$ the load of node $v_j$ at the start of the first $\inStep$-step of $\outStep$-step $\ell_1$, i.e., the total fan of all gates allocated to node $v_j$ at the start of the corresponding $\outStep$-step.
    
    We say that a $\outStep$-step satisfies the \emph{load condition} if the total fan of all gates allocated to all nodes is bounded by $\wireLoad$. In other words, the load condition is satisfied in $\outStep$-step $\ell_1$ if and only if
    \begin{equation}
    \label{eq:load_condition}
       \max_{j \in [n]} \lambda_j(\ell_1) \leq \wireLoad. 
    \end{equation}
    A key part of our argument is to show that the load condition holds for all $\outStep$-steps, which we prove in \Cref{lem:l1invariant}.

    The following lemma shows that given that the $\outStep$-step satisfies the load condition, we can bound the number of not-yet-retrieved wires in~$|W_j(\ell_1,\ell_2)|$ for every node~$v_j$ by $\wireLoad /2^{\ell_2}$ for every $\inStep$-step $\ell_2$.
 \begin{lemma}
 \label{lem:l2iteration}
     If a $\outStep$-step in a non-overwhelmingly faulty repetition of a $\layerStep$-step satisfies the load condition (Eq.~\eqref{eq:load_condition}), then at the start of any of its $\inStep$-steps it holds that if $v_j\in \alive$, 
     then $|W_j(\ell_1,\ell_2)|\le \wireLoad /2^{\ell_2-1}$.
 \end{lemma}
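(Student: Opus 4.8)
I would prove the bound $|W_j(\ell_1,\ell_2)|\le\wireLoad/2^{\ell_2-1}$ by induction on the $\inStep$-step index $\ell_2\in\{1,\dots,\lceil\log\wireLoad\rceil\}$, simultaneously over all nodes $v_j$ that are alive at the start of $\inStep$-step $\ell_2$. The base case $\ell_2=1$ is precisely the load condition: $W_j(\ell_1,1)$ and $U_j(\ell_1)$ are the input- and output-wires of the gates allocated to $v_j$ at the start of this $\outStep$-step, so $|W_j(\ell_1,1)|\le|U_j(\ell_1)|+|W_j(\ell_1,1)|=\lambda_j(\ell_1)$, and by~\eqref{eq:load_condition} this is at most $\wireLoad=\wireLoad/2^{0}$.

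For the inductive step, I would fix $\ell_2$ with $\ell_2+1\le\lceil\log\wireLoad\rceil$ and a node $v_j$ alive at the start of $\inStep$-step $\ell_2+1$. Since crashes are permanent, $v_j$ is alive throughout $\inStep$-step $\ell_2$ and runs $\BulkRetrieve(W_j(\ell_1,\ell_2),\ell_2)$ to completion, making $2^{\ell_2}$ independent $\Retrieve$ attempts for each $w\in W_j(\ell_1,\ell_2)$. By \Cref{lem:good_seeds_computation} the randomness strings it uses are \emph{good}, so by \Cref{def:good_seed}(2) a given $\Retrieve(w,R_{v_j,w,i})$ returns the correct value unless $\alive$ changes during its execution. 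The two failure modes of $\Retrieve$ (too many erasures, or a queried node crashing mid-execution) both force one of the $q$ queried nodes to lie in the set $\mathcal F$ of nodes that crash \emph{after} the good strings are fixed at the start of $\inStep$-step $\ell_2$ --- an "extra" erasure likewise comes from such a crash, since the good strings already tolerate all erasures present when they are chosen. A wire survives into $W_j(\ell_1,\ell_2+1)$ only if all $2^{\ell_2}$ of its $\Retrieve$ attempts fail; charging each failed attempt to a node in $\mathcal F$ it queries, and using the congestion bound of \Cref{def:good_seed}(1) (which caps the number of queries any single node gets from $v_j$ by $c_1\lceil 2^{\ell_2}|W_j(\ell_1,\ell_2)|q/n\rceil\log n$), I obtain
\[
|W_j(\ell_1,\ell_2+1)|\;\le\;\frac{|\mathcal F|\cdot c_1\,\lceil 2^{\ell_2}|W_j(\ell_1,\ell_2)|q/n\rceil\,\log n}{2^{\ell_2}}.
\]

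To close the induction I would split on the size of $W_j(\ell_1,\ell_2)$. If $|W_j(\ell_1,\ell_2)|\le\wireLoad/2^{\ell_2}$, then since $W_j$ only shrinks, $|W_j(\ell_1,\ell_2+1)|\le|W_j(\ell_1,\ell_2)|\le\wireLoad/2^{\ell_2}$. Otherwise $|W_j(\ell_1,\ell_2)|>\wireLoad/2^{\ell_2}\ge n/2^{\ell_2}$ (using $\wireLoad\ge n$), hence $2^{\ell_2}|W_j(\ell_1,\ell_2)|q/n>q\ge1$, the ceiling costs only a factor $2$, and the right-hand side is at most $2c_1|\mathcal F|\,|W_j(\ell_1,\ell_2)|\,q\log n/n$. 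Since the repetition is not overwhelmingly faulty, $|\mathcal F|\le f_{i,rep}\le c_f n/(q\log n)$, and substituting $c_f\le 1/(4c_1)$ from~\eqref{eq:overfaulty_constant_def} gives $|W_j(\ell_1,\ell_2+1)|\le|W_j(\ell_1,\ell_2)|/2\le\wireLoad/2^{\ell_2}$ by the inductive hypothesis.

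The step I expect to need the most care is the handling of the ceiling in \Cref{def:good_seed}(1): a naive estimate has the additive $+1$ dominate whenever $|W_j|$ is small, which would break the halving, and the case split above is exactly what rescues it --- the halving is only \emph{needed} in the window $\wireLoad/2^{\ell_2}<|W_j(\ell_1,\ell_2)|\le\wireLoad/2^{\ell_2-1}$, and there $2^{\ell_2}|W_j|q/n\ge q\ge1$ makes the ceiling harmless. A secondary point to pin down is the precise definition of $\mathcal F$ (only crashes occurring after the good strings are chosen, so every $\Retrieve$ failure is attributable to a node of $\mathcal F$), together with the observation that these are a subset of the $f_{i,rep}$ crashes of the current repetition and hence still bounded by $c_f n/(q\log n)$.
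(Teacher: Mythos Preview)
Your proposal is correct and follows essentially the same route as the paper: induction on $\ell_2$ with the base case from the load condition, the inductive step via a charging argument (each failed $\Retrieve$ charged to a newly-crashed queried node, the congestion bound of \Cref{def:good_seed}(\ref{item:good_seed_cong}) capping charges per node, and the non-overwhelmingly-faulty assumption bounding $|\mathcal F|$), together with the same case split on $|W_j(\ell_1,\ell_2)|\lessgtr\wireLoad/2^{\ell_2}$ to absorb the ceiling. The only cosmetic difference is that the paper first simplifies the per-node congestion to $c_1\lceil\wireLoad q/n\rceil\log n$ via the inductive hypothesis $2^{\ell_2}|W_j|\le 2\wireLoad$ before running the numbers, whereas you keep the raw form $c_1\lceil 2^{\ell_2}|W_j|q/n\rceil\log n$ and simplify at the end; the arithmetic and the use of $c_f\le 1/(4c_1)$ are otherwise identical.
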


 \begin{proof}
     We prove the claim by induction on $\ell_2$. We notice that the base case of $\ell_2 = 1$ follows by the load condition, which states that $\max_j \lambda_j(\ell_1) \leq \wireLoad$, and in particular $|W_j(\ell_1,1)| \leq \wireLoad$.
     
     Suppose the claim holds for some~$\ell_2$.  By the induction hypothesis, we have $|W_j(\ell_1,\ell_2)|\le \wireLoad/2^{\ell_2-1}$. By definition of the $\BulkRetrieve$ procedure (lines \ref{line:bulk:foreachW}--\ref{line:bulk:foreachEll2} of \Cref{alg:bulkretrieve}), $v_j$ performs $2^{\ell_2}|W_j(\ell_1,\ell_2)|\le 2\wireLoad$ invocations of $\Retrieve$ during this $\inStep$-step.
     
     Each $\Retrieve$ invocation that $v_j$ performs produces $q$ queries. 
     Since the randomness string we use for the $\Retrieve$ is good, it follows from  \Cref{def:good_seed}(\ref{item:good_seed_correctness}) that each \Retrieve succeeds {assuming no new crashes happen during the \Retrieve invocation on any of its queried nodes}. 
     In other words, if none of the $q$ queried nodes crashes during a $\Retrieve$ invocation to retrieve $w$, then $v_j$ learns~$w$. 
     However, some of the queried nodes may crash during a \Retrieve. Suppose that $x$ new crashes occur during any of the rounds of  \Retrieve invocations of this $\inStep$-step. As defined in Algorithm~\ref{alg:retrieve},  a (new) crash of even a single node out of the $q$ nodes that $v_j$ queries leads to the failure of that~$\Retrieve$. 
     Again, since the randomness string that is used is good  (\Cref{def:good_seed}(\ref{item:good_seed_cong})), no node is queried more than $c_1 \lceil \frac{\wireLoad q}{n} \rceil \log n$ times during this $\inStep$-step, where $c_1 > 0$ is the constant of \Cref{def:good_seed}(1). Therefore, each crashing node can fail at most $c_1 \lceil \frac{\wireLoad q}{n} \rceil \log n$ different \Retrieve invocations of node $v_j$. 
  
     It follows that in order for the adversary to make $v_j$ fail in decoding the wire $w\in W_j(\ell_1,\ell_2)$ throughout the entire $\BulkRetrieve$, the adversary must fail all the $2^{\ell_2}$ parallel calls of $\Retrieve$ relating to $w$ (line~\ref{line:BulkRet:retrieve}).  
     By a pigeonhole argument, $x$ new crashes fail at most $x \cdot c_1 \lceil \frac{\wireLoad q}{n} \rceil \log n$ different \Retrieve invocations and, 
     as a consequence, fail the decoding of a number of wires which is at most 
     \begin{equation}
     \label{eq:number_of_failed_wires}
         \frac{x \cdot c_1 \lceil \frac{\wireLoad q}{n} \rceil \log n}{2^{\ell_2}} = O\left(\frac{x}{2^{\ell_2}} \left\lceil \frac{\wireLoad q}{n}\right\rceil  \log n\right).
     \end{equation}

        Recall that at the beginning of this $\inStep$-step, we had $|W_j(\ell_1,\ell_2)| \le \wireLoad/2^{\ell_2-1}$, and our goal is to prove the induction step and show that the number of missing wires at the end of this iteration is $|W_j(\ell_1,\ell_2+1)| \le \wireLoad/2^{\ell_2}$. 

     If $|W_j(\ell_1,\ell_2)| \le \wireLoad/2^{\ell_2}$, then trivially also $|W_j(\ell_1,\ell_2+1)| \le \wireLoad/2^{\ell_2}$. 
     Thus, for the remainder of the proof, we consider the case $|W_j(\ell_1,\ell_2)| > \wireLoad/2^{\ell_2}$. To prove the claim,  we require that the number of wires that $v_j$ retrieves in this step is at least~$|W_j(\ell_1,\ell_2)|/2$.  
     Thus, the $\inStep$-step is successful if the adversary fails
     less than~$|W_j(\ell_1,\ell_2)|/2$ wires.
     
     Yet, in order to fail $|W_j(\ell_1,\ell_2)|/2$ wires the adversary must crash at least $x>\frac{n}{4c_1q\log{n}}$ new nodes. This follows from \Cref{eq:number_of_failed_wires}, which bounds the number of wire failures stemming from~$x$ new crashes. We require,
     \[
     \frac{|W_j(\ell_1,\ell_2)|}{2} \le \frac{x \cdot c_1 \lceil \frac{\wireLoad q}{n} \rceil \log n}{2^{\ell_2}}.
     \]
     Recall that we are in the case where $|W_j(\ell_1,\ell_2)| > \wireLoad/2^{\ell_2}$, thus, we obtain 
     \[
     x \ge \frac{2^{\ell_2} \wireLoad}{2\cdot 2^{\ell_2}} \cdot \frac{ 1}{c_1 \lceil \frac{\wireLoad q}{n} \rceil \log{n}} >\frac{ n}{4 c_1 q \log n} \geq \frac{c_f}{q\log{n}}n,
     \]
     where $c_f$ is the constant of \Cref{def:overfaulty_step}, and the last inequality follows by \Cref{eq:overfaulty_constant_def}. 
     However, if the adversary fails $x> \frac{c_f}{q\log n} n$ new nodes during the execution of $\BulkRetrieve(W_j,\ell_2)$,
     the respective $\layerStep$-step is overwhelmingly faulty (\Cref{def:overfaulty_step}), which is a contradiction to the premise that this repetition of the $\layerStep$-step is not overwhelmingly faulty. 
     This completes the proof since the adversary cannot fail $|W_j(\ell_1,\ell_2)|/2$ wires, and thus $|W_j(\ell_1,\ell_2+1)| \le \wireLoad/2^{\ell_2}$ and the induction step holds.
 \end{proof}

    Next, we conclude that any non-crashed $v_j$ succeeds in storing $U_j$.
  
  \begin{corollary}
  \label{cor:l1iteration}
      For a $\outStep$-step in a non-overwhelmingly faulty repetition of a $\layerStep$-step that satisfies the load condition (Eq.~\eqref{eq:load_condition}), the following holds for all $j \in [n]$: if $v_j$ is not crashed at the end of the $\outStep$-step, then $v_j$ has succeeded in storing its~$U_j$.  
  \end{corollary}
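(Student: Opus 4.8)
The plan is to read off \Cref{cor:l1iteration} essentially for free from \Cref{lem:l2iteration}, followed by a short ``survival'' argument for the concluding $\Store$. Fix the \outStep-step $\ell_1$ in question and a node $v_j$ that is not crashed at the end of it. Since this \outStep-step satisfies the load condition \eqref{eq:load_condition}, we have $\lambda_j(\ell_1)\le\wireLoad$, and in particular $|W_j(\ell_1,1)|\le\wireLoad$. First I would iterate the per-\inStep-step decrease proved in \Cref{lem:l2iteration}: in every one of the $\lceil\log\wireLoad\rceil$ \inStep-steps of this (non-overwhelmingly faulty) repetition, $v_j$ retrieves \emph{strictly more than half} of its still-outstanding wires — this is exactly what the proof of \Cref{lem:l2iteration} establishes, since in order to fail $|W_j(\ell_1,\ell_2)|/2$ wires the adversary would be forced to introduce enough new crashes to make the repetition overwhelmingly faulty (\Cref{def:overfaulty_step}), contradicting the hypothesis. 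Hence the nonnegative integer $|W_j(\ell_1,\ell_2)|$ satisfies the recursion $m\mapsto\lceil m/2\rceil-1$, so starting from at most $\wireLoad$ it reaches $0$ within $\lceil\log\wireLoad\rceil$ \inStep-steps; equivalently, there is an \inStep-step within this \outStep-step, no later than the $\lceil\log\wireLoad\rceil$-th, after whose retrieval-and-removal step $W_j=\emptyset$.

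Next I would trace through \Cref{alg:main}: once $W_j=\emptyset$ after a retrieval step, the condition of \Cref{alg:main:ifAllWjSuccess} holds, so $v_j$ locally computes the output wires $U_j$ of its allocated gates $G_j$ (a zero-round step, using the values it has retrieved) and invokes $\Store(U_j)$. By the load condition, $|U_j|\le\lambda_j(\ell_1)\le\wireLoad$, so by \Cref{lem:store_time} this $\Store$ terminates within $O(\lceil\wireLoad/(\rho n\lfloor\log q\rfloor)\rceil)$ rounds; since \Cref{alg:main} idles for $\maxWaitStore$ rounds after line~\ref{line:main:store} and $\maxWaitStore$ is chosen at least this large (using the load condition, which is why setting $\maxWaitStore$ there is legitimate), the $\Store$ is fully contained inside the current \inStep-step, hence inside the \outStep-step. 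As $v_j$ is not crashed at the end of the \outStep-step, it is in particular not crashed during any round of this $\Store$, so by the definition of ``stored'' following \Cref{alg:store} node $v_j$ has stored $U_j$, which is the claim.

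I do not expect a genuine obstacle here: the entire substance — that each \inStep-step more than halves $|W_j|$ unless the repetition is overwhelmingly faulty — is already carried out in \Cref{lem:l2iteration}. The two points that need care are bookkeeping rather than mathematics: (i) collapsing the fractional halving bound of \Cref{lem:l2iteration} to the \emph{exact} statement $W_j=\emptyset$, which is handled by observing that the retrieval each step is of \emph{strictly} more than half of the outstanding wires (so a lone remaining wire drops to $0$), and that $\lceil\log\wireLoad\rceil$ \inStep-steps suffice for $|W_j(\ell_1,1)|\le\wireLoad$; and (ii) verifying that the final $\Store$ fits within the $\maxWaitStore$ idling window, which follows from $|U_j|\le\wireLoad$ via \Cref{lem:store_time}. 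Both are routine, so the corollary is really just the wrap-up that converts the quantitative progress guarantee of \Cref{lem:l2iteration} into the qualitative ``$v_j$ stores $U_j$'' statement needed downstream.
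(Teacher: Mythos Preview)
Your approach is essentially the paper's, but you take an unnecessary detour. The paper's proof is three lines: apply the statement of \Cref{lem:l2iteration} directly to get $|W_j|\le \wireLoad/2^{\lceil\log\wireLoad\rceil}<1$ after the final \inStep-step, hence $W_j=\emptyset$, so $v_j$ computes and stores $U_j$. You instead dip back into the \emph{proof} of \Cref{lem:l2iteration} to extract a per-step ``strictly more than half retrieved'' claim and set up a recursion $m\mapsto\lceil m/2\rceil-1$. This is both unnecessary and slightly imprecise: that strict-halving only holds in the case $|W_j(\ell_1,\ell_2)|>\wireLoad/2^{\ell_2}$ of the lemma's proof (in the other case nothing is asserted about retrieval, one just already has the target bound). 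The clean fix is to cite the lemma's actual bound $|W_j(\ell_1,\ell_2)|\le\wireLoad/2^{\ell_2-1}$ rather than re-argue its innards.

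Your second paragraph, verifying that the $\Store$ fits inside the $\maxWaitStore$ window via \Cref{lem:store_time} and $|U_j|\le\wireLoad$, is correct and more explicit than the paper (which simply says ``the claim follows''); that extra care is harmless here and indeed matches what the paper later uses in the round-complexity proof.
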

    \begin{proof}
          By \Cref{lem:l2iteration}, after the final $\inStep$-step with $\ell_2=\lceil \log \wireLoad \rceil$, the node $v_j$ has $|W_j| \le \wireLoad/2^{\lceil \log{\wireLoad} \rceil} <1$. 
          It follows that $W_j = \emptyset$. Namely, node $v_j$ computed and stored all wires in $U_j$, and the claim follows. 
    \end{proof}

Next, we prove that every $\outStep$-step satisfies the load condition, which is required for \Cref{lem:l2iteration}. Recall that $\maxWidth$ denotes the width of the circuit, $\Delta$ denotes the maximum fan of a gate, and $\alive(\ell_1)$ denotes the set of non-crashed nodes at the start of $\outStep$-step $\ell_1$.
    
\begin{lemma}
\label{lem:l1invariant}
     Every $\outStep$-step in a non-overwhelmingly faulty repetition of a $\layerStep$-step satisfies the load condition in Eq.~\eqref{eq:load_condition}. 
\end{lemma}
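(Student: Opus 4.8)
The plan is to prove this by induction on the $\outStep$-step index $\ell_1$ (within a fixed non-overwhelmingly faulty repetition of a fixed $\layerStep$-step), using \Cref{lem:alloc_load_balance} to convert a bound on the total fan of not-yet-stored gates into a per-node load bound, and using \Cref{cor:l1iteration} together with the doubling of the multiplicity parameter $\ell_1$ to argue that the pool of unstored gates shrinks fast enough that the per-node allocation never exceeds $\wireLoad$. The base case $\ell_1=1$ is essentially immediate: at the start of the $\outStep$-loop the set of remaining gates $G$ has total fan at most the width $\maxWidth$ (every wire out of layer $i$ is counted at most once at each endpoint, so $\sum_{g\in G}\fantotal(g) = O(\maxWidth)$), so applying \Cref{lem:alloc_load_balance} with $L = \min(2,|\alive|)\le 2$ and $P = O(\maxWidth)$ gives a maximal load of $\max(4P\cdot 2/|\alive|,\Delta) = O(\maxWidth/|\alive|) + \Delta$. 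Since at most $\alpha n$ nodes crash we have $|\alive|\ge(1-\alpha)n$, and with the chosen $\wireLoad = \max(8\maxWidth/((1-\alpha)n),\Delta,n)$ this is at most $\wireLoad$, possibly after absorbing constants — I would double-check the constant $8$ is chosen to swallow the factor-$4$ and factor-$2$ from \Cref{lem:alloc_load_balance}.

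For the inductive step, suppose $\outStep$-step $\ell_1$ satisfies the load condition. Then \Cref{lem:l2iteration} applies, and hence by \Cref{cor:l1iteration} every node that is alive at the end of step $\ell_1$ has stored all of its allocated output wires. Therefore every gate that was allocated in step $\ell_1$ and whose set of $\min(2^{\ell_1},|\alive|)$ assigned nodes contains at least one node surviving to the end of step $\ell_1$ is now in $S$, and only gates all of whose $2^{\ell_1}$ assigned nodes crashed during step $\ell_1$ remain in $G$ at the start of step $\ell_1+1$. The key counting step is: since the repetition is not overwhelmingly faulty, at most $c_f n/(q\log n)$ nodes crash during the \emph{entire} repetition, hence at most that many crash during step $\ell_1$; a gate survives in $G$ only if all $2^{\ell_1}$ of its (distinct) assigned nodes are among these crashed nodes, so by a counting/pigeonhole argument the number of surviving gates — and more importantly their total fan — is bounded. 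The cleanest way is to bound the total fan of surviving gates: each surviving gate "uses up" $2^{\ell_1}$ slots among the $\le c_f n/(q\log n)$ crashed nodes; but a subtlety is that one crashed node can be assigned to many gates. I would instead argue via the load: at the start of step $\ell_1$ each alive node had load $\le \wireLoad$, so the total fan of gates assigned to the crashed nodes (and hence the total fan of gates that can remain) is at most $(c_f n/(q\log n))\cdot \wireLoad$ — but this does not immediately give a bound below $\maxWidth$. The right approach is to observe that the total fan of gates remaining at step $\ell_1+1$ is at most $(\text{\# crashed in step }\ell_1)\cdot(\text{max per-node load in step }\ell_1)/2^{\ell_1}\cdot(\dots)$; more robustly, one notes that a gate is allocated to $\min(2^{\ell_1},|\alive|)$ nodes, and for it to remain, \emph{all} of them must have crashed during this step, and since at most $c_f n/(q\log n) \le n/2 \le 2^{\lceil\log n\rceil-1}$ nodes crash while $|\alive|\ge (1-\alpha)n$, for $\ell_1 = \lceil\log n\rceil$ no gate can remain, and for general $\ell_1$ one gets geometric decay of the remaining total fan.

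I expect the main obstacle to be the middle of the induction: turning "few nodes crashed during step $\ell_1$" into "the remaining gates have total fan at most $\wireLoad$" in a way that is uniform across all $\ell_1 \in \{1,\dots,\lceil\log n\rceil\}$. The honest route is probably to prove, by induction, the stronger statement that the total fan of the gate set $G$ at the start of $\outStep$-step $\ell_1$ is at most $\maxWidth$ (this holds trivially since $G \subseteq \gates(i)$ always, and $\sum_{g\in\gates(i)}\fantotal(g) = O(\maxWidth)$ — the set $S$ only grows, so $G$ only shrinks, and the total-fan bound is monotone). If that is correct, then the load condition is a one-line consequence of \Cref{lem:alloc_load_balance}: with $P \le O(\maxWidth)$, $L \le 2^{\ell_1}$, and $|\alive| \ge (1-\alpha)n$, we get max load $\le \max(4P\cdot 2^{\ell_1}/((1-\alpha)n),\Delta)$, which is $\le \wireLoad$ precisely when $2^{\ell_1} = O(n)$, i.e. always, since $\ell_1 \le \lceil\log n\rceil$ — and here the $n$ term in the definition $\wireLoad = \max(8\maxWidth/((1-\alpha)n),\Delta,n)$ is exactly what absorbs the $2^{\ell_1}$ growth when $\maxWidth$ is small. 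So the real content is just: (i) $\sum_{g\in\gates(i)}\fantotal(g) \le 2\maxWidth$ by definition of width; (ii) $G$ only shrinks; (iii) plug into \Cref{lem:alloc_load_balance}; (iv) verify the constants against the definition of $\wireLoad$ and the fact $2^{\ell_1}\le 2n$. The subtlety to watch is whether the doubling of $\ell_1$ is even needed for the load condition itself (it is needed for \Cref{lem:layer_step_main}/progress, but the load bound seems to follow from the crude $G\subseteq\gates(i)$ bound), so I would present the clean argument and not over-complicate it.
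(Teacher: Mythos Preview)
Your ``clean argument'' at the end has a genuine gap: the crude bound $P \le O(\maxWidth)$ coming from $G \subseteq \gates(i)$ does \emph{not} suffice to bound the load for $\ell_1 > 1$. By \Cref{lem:alloc_load_balance} the per-node load is $\max(4PL/|\alive|,\Delta)$ with $L = \min(2^{\ell_1},|\alive|)$, so with $P = \Theta(\maxWidth)$ and $L = 2^{\ell_1}$ you get a load of order $\maxWidth \cdot 2^{\ell_1}/n$. Already at $\ell_1 = 2$ this is $\approx 16\maxWidth/((1-\alpha)n)$, which exceeds $\wireLoad = \max(8\maxWidth/((1-\alpha)n),\Delta,n)$ whenever the first term dominates (i.e., whenever $\maxWidth/n$ is large). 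The $n$ term in $\wireLoad$ does not save you here: it absorbs the $2^{\ell_1}$ growth only when $\maxWidth = O(n)$, but the interesting regime (e.g., circuits coming from \cliquefull algorithms via \Cref{lem:cliq_to_circ}, where $\maxWidth = \Theta(Tn^2\log n)$) has $\maxWidth \gg n$, and then your bound blows up by a factor of $2^{\ell_1}$.

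The step you sketched and then dismissed as ``over-complicated'' is in fact the crux of the paper's proof: one must show that the total fan of the \emph{remaining} gates at the start of step $\ell_1+1$ is at most roughly $(\text{crashes during step }\ell_1)\cdot \wireLoad / 2^{\ell_1}$, using that (i) by the inductive load condition plus \Cref{cor:l1iteration}, every surviving node stored its gates, so a remaining gate had all $2^{\ell_1}$ of its copies on nodes that crashed during step $\ell_1$; (ii) each of the at most $c_f n/(q\log n) \le (1-\alpha)n/8$ crashed nodes carried load at most $\wireLoad$; hence the remaining total fan is at most $(1-\alpha)n\wireLoad/(8\cdot 2^{\ell_1})$. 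This $2^{-\ell_1}$ shrinkage is exactly what cancels the $L = 2^{\ell_1+1}$ multiplicity in \Cref{lem:alloc_load_balance}, yielding a load of $\max((1-\alpha)n\wireLoad/|\alive|,\Delta) \le \wireLoad$ uniformly in $\ell_1$. So the doubling \emph{is} needed for the load condition itself, not just for progress.
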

\begin{proof}
    We prove the claim by induction on $\ell_1$. For $\ell_1 = 1$, the claim follows by the fact that the total fan of each layer is bounded by~$\maxWidth$ and the total fan of a single gate is bounded by~$\Delta$.
    Hence, by \Cref{lem:alloc_load_balance}, the total load of each node $v_j$ is at most 
    \[\lambda_j(1) \leq \max \left(\frac{4 \cdot 2^1 \cdot \maxWidth}{|\alive(1)|},\Delta\right) = \max \left(\frac{8\maxWidth}{|\alive(1)|},\Delta \right) \leq \wireLoad.\]

    Now, assume that $\max_{j \in [n]}\lambda_j(\ell_1) \le \wireLoad$ at the start of $\outStep$-step $\ell_1$ for some $1 \leq \ell_1 < \lceil \log{n} \rceil$. By \Cref{cor:l1iteration}, every non-crashed node $v_j$ successfully stores its allocated wires~$U_j(\ell_1)$ by the end of the final $\inStep$-step.
    Since the repetition of the $\layerStep$-step is not overwhelmingly faulty, then all but $\frac{c_fn}{q\log{n}}$ nodes compute and store their allocated wires $U_j(\ell_1)$, where $c_f$ is the constant of \Cref{def:overfaulty_step}. This is less than  $\frac{(1-\alpha)n}{8}$, due to the  definition of $c_f$ (\Cref{eq:overfaulty_constant_def}). 
 
     Any yet-unstored wire $w$ is such that $w$ was allocated to exactly $\min(2^{\ell_1},|\alive(\ell_1)|)$ crashed nodes. If $2^{\ell_1} \geq |\alive(\ell_1)|$, then we are done: it implies that each wire in $U$ is  allocated to all non-crashed nodes, and since not all the remaining nodes may crash then by \Cref{cor:l1iteration}, the wire set $U$ is stored at the end of the step, i.e., all wires of the current layer are stored. 
    
    Otherwise, we bound the sum of total fan of all gates whose outputs are not stored at the end of the step, i.e., we bound $\sum_{g\in \gates(U(\ell_1+1))}\fantotal(g)$. 
    At most $\frac{(1-\alpha)n}{8}$ nodes from $\alive(\ell_1)$ crashed, each with load at most $\wireLoad$. Moreover, each wire is assigned to at least $2^{\ell_1}$ nodes in $\alive(\ell_1)$. It follows that
    \[
    \sum_{g\in G(\ell_1+1)}\fantotal(g) \leq \frac{(1-\alpha)n\Lambda}{8 \cdot 2^{\ell_1}}\text
    {.}
    \]
    The load of each node $v_j$ is therefore at most 
    
    \[
    \lambda_j(\ell_1+1) \leq \max\left(\frac{4\cdot 2^{\ell_1+1}(1-\alpha)n\wireLoad}{8 \cdot 2^{\ell_1}|\alive(\ell_1+1)|},\Delta\right) = \max\left(\frac{(1-\alpha)n\wireLoad}{|\alive(\ell_1+1)|},\Delta\right) \leq \max\left(\frac{(1-\alpha)n\wireLoad}{(1-\alpha)n},\Delta \right) \leq \wireLoad\text{,}
    \]
    where the first inequality follows by \Cref{lem:alloc_load_balance}, and the last inequality follows since $\Delta \leq \wireLoad$. The claim then follows. 
\end{proof}

We are now ready to prove \Cref{lem:l2iteration_main}, which concludes the correctness proof of \Cref{alg:main}.

\begin{proof}[Proof of \Cref{lem:l2iteration_main}]
    By \Cref{lem:l1invariant}, every $\outStep$-step satisfies the load condition in Eq.~\eqref{eq:load_condition}. 
    Hence, by \Cref{cor:l1iteration}, at the end of every $\outStep$-step, each non-crashed node $v_j$ stores~$U_j$.
\end{proof}

\subsection{Round Complexity}
\label{subsec:rounds}
We bound the round complexity of our circuit computation algorithm. Recall that $\wireLoad = \max\left(\tfrac{8\maxWidth}{(1-\alpha)n},\Delta,n\right)$, $q =  2^{O(\sqrt{\log n})}$ 
and  $(1/\rho) = 2^{O(\sqrt{\log{n}}\log\log{n})}$. 
We set $\maxWaitRet=O(\lceil \wireLoad q/n \rceil \log{n})$ and $\maxWaitStore=O(\lceil \wireLoad/(\rho n \log{q}) \rceil)$ for some specifically known, sufficiently large constants, and prove our claimed round complexity.

\LemmaRounds*

\begin{proof}
    Since there can be at most~$ \alpha n$ crashes throughout the computation, the number of overwhelmingly faulty repetitions throughout the algorithm is at most~$\alpha n/(c_f n/(q \log n)) = O(q\log{n})$, where $c_f$ is the constant in \Cref{def:overfaulty_step}. Therefore, there are at most $O(d+q\log{n})$
     repetitions of $\layerStep$-steps in total. Each $\layerStep$-step is comprised only of $O(\log{n} \cdot \log{\wireLoad})$ invocations of $\Store$ and $\BulkRetrieve$ (and local computation). 

    We claim that every invocation of $\Store$ takes at most $\maxWaitStore$ rounds. To see this, fix an iteration associated with $(i,rep,\ell_1,\ell_2)$. By \Cref{lem:l1invariant}, it holds that $|U_j(\ell_1)| \leq \lambda_j(\ell_1) \leq \wireLoad$. Therefore, by \Cref{lem:store_time}, the $\Store$ completes after $O(\lceil \wireLoad /(\rho n \log{q}) \rceil)$ rounds, which is the value of $\maxWaitStore$.

    In addition, we claim that every invocation of $\BulkRetrieve$ takes at most $\maxWaitRet$ rounds. To see this, note that by combining \Cref{lem:l1invariant} and \Cref{lem:l2iteration}, it holds that $|W_j| \leq \wireLoad/2^{\ell_2-1}$.  Therefore, by \Cref{lem:nodeLoad}, the round complexity is at most $O(\lceil \wireLoad q/n \rceil \log{n})$ times, which is the value of $\maxWaitRet$.

    Recall we assume that all the circuit parameters are polynomial in~$n$.
    Thus, $\log{\wireLoad} = O(\log{n})$, and  
    we can conclude that the total round complexity is bounded from above by  
    \begin{align*}
        &O(d+q\log{n}) \cdot O(\log{n} \cdot \log{\wireLoad}) \cdot O(\maxWaitRet+\maxWaitStore) \\
        &=\lceil\maxWidth/n^2+\Delta/n \rceil2^{O(\sqrt{\log{n}}\log\log{n})}\text{.}\qedhere
    \end{align*} 
\end{proof}

\section{Deterministic LDC Decoding with Known Erasures}
\label{sec:derandom}
In this section we show a \emph{deterministic} LDC code resilient to $\alpha$-fraction of erasures, with the good properties required for our robust circuit computation in \Cref{alg:main}. 
Specifically, we provide an LDC with a deterministic decoding algorithm, that given the erasure locations in the codeword, succeeds in decoding any index of the message by performing $q$ queries to the codeword. We remark that knowledge of the erasure locations is crucial for deterministic local decoding, as otherwise an erasure of all of the $q \ll n$ queried indices leaves the decoder without any information and causes any local decoding procedure to fail. 
We further show that we can perform multiple such decodings (possibly of different codewords with different sets of erasures) while maintaining the ``smoothness'' of the queried locations, i.e., without querying any single index too many times across the different codewords. This property translates to maintaining the congestion induced by the LDC decoding of \Cref{alg:main}.

We base our construction on a standard (randomized)  Reed--Muller LDC (see, e.g., \cite{Yekhanin12}) and then show how to de-randomize its decoding algorithm
in a way that achieves good properties (i.e., the ones in \Cref{def:good_seed}) when employed in the \BulkRetrieve procedure.
Our analysis 
is based on a simple probabilistic method argument, which uses the properties of the Reed--Muller code and the fact that the corruption model is restricted to erasures.

\smallskip

We begin with a randomized LDC construction designed for the case of erasures. 
The following is a relatively standard analysis of Reed--Muller-based LDCs, with the required adaptation to erasures. See Proposition 2.3 in~\cite{Yekhanin12} for an analogous analysis assuming substitution corruption. 

\begin{lemma}[Erasure LDC]
\label{lem:reed_muller}
    Let $\delta \in (0,1)$ be a constant and let $r,d \geq 1$ be integers. Let $q$ be a power of a prime  such that $d \leq (1-\delta)(q-1) - 1$. Let $K = \binom{r+d}{d}\mathstrut$ and $N = q^r$. Then there is an LDC with an encoding function $\LDCEnc: \F_q^K \rightarrow \F_q^{N\mathstrut}$ and a randomized non-adaptive local decoding algorithm $\LDCDec:\F_q^N \times [K]\to \F_q$ that performs $q-1$ queries to the (possibly corrupted by erasures) codeword and satisfies the following properties:
    
    \begin{enumerate}
        \item \label{item:rm_threshold} If $\LDCDec$ queries at most $\delta (q-1)$ erased indices, then the decoding algorithm is guaranteed to succeed, namely, $\LDCDec(\LDCEnc(x),i)=x[i]$.
        \item \label{item:rm_smoothness} The probability of an index to be queried is $O(1/q)$ over the choice of the randomness string, i.e.,\@ the code is $(q/2N)$-smooth.
        \item \label{item:rm_expectation} For any constant $\gamma \in (0,1)$, if the (possibly corrupted) codeword has at most a $\gamma$-fraction of erasures, then  $\LDCDec$ queries at most $(1 + \frac{1}{N-1}) \gamma (q-1)$ erased indices, in expectation.
        
    \end{enumerate}
\end{lemma}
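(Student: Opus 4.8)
The plan is to instantiate the classical Reed--Muller LDC and analyze its line-based decoder in the erasure setting; as the paper notes this is ``relatively standard,'' so the work is mostly in fixing the construction so that a \emph{message} symbol (not merely a codeword symbol) can be locally decoded. I would identify a message $x \in \F_q^K$ with the unique $r$-variate polynomial $f_x$ over $\F_q$ of total degree at most $d$ that interpolates $x$ on a fixed set $S = \{s_1,\dots,s_K\} \subseteq \F_q^r$, i.e.\ $f_x(s_i) = x[i]$ for all $i$. Since the hypothesis $d \le (1-\delta)(q-1)-1$ forces $d < q$, one may fix $d+1$ distinct field elements $\xi_0,\dots,\xi_d$ and take $S$ to be the ``triangular grid'' $\{(\xi_{a_1},\dots,\xi_{a_r}) : a_j \in \mathbb{Z}_{\ge 0},\ \sum_j a_j \le d\}$, which has exactly $\binom{r+d}{d}=K$ points and is an interpolating set for total-degree-$\le d$ polynomials (the evaluation-at-$S$ map from that space to $\F_q^S$ is a bijection), so $x \mapsto f_x$ is well defined and bijective. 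The encoding is then $\LDCEnc(x) = (f_x(p))_{p \in \F_q^r}$, a word of length $N = q^r$, and it is injective because a nonzero polynomial of total degree $<q$ cannot vanish on all of $\F_q^r$ (Schwartz--Zippel).

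The decoder on index $i$ samples a uniformly random nonzero direction $v \in \F_q^r$ and queries the $q-1$ points $\{s_i + tv : t \in \F_q^*\}$. These queries depend only on $i$ and the randomness, so the decoder is non-adaptive, and since $s_i$ itself is excluded the query count is exactly $q-1$. The restriction $t \mapsto f_x(s_i + tv)$ is a univariate polynomial of degree at most $d$, so from any $d+1$ of the queried (non-erased) values the decoder reconstructs it by Lagrange interpolation and returns its value at $t = 0$, which is $f_x(s_i) = x[i]$.

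Property~1 is then immediate: if at most $\delta(q-1)$ of the $q-1$ queried positions are erased, at least $(1-\delta)(q-1) \ge d+1$ usable evaluations remain by the hypothesis on $d$, so interpolation succeeds. For Properties~2 and~3 I would use that for each fixed $t \ne 0$ the map $v \mapsto s_i + tv$ is a bijection of $\F_q^r \setminus \{0\}$ onto $\F_q^r \setminus \{s_i\}$; hence each individual query is uniform over the $N-1$ positions other than $s_i$. Consequently a fixed codeword position $p \ne s_i$ is queried precisely when $v$ is one of the $q-1$ nonzero scalar multiples of $p - s_i$, i.e.\ with probability $\tfrac{q-1}{N-1} = O(1/q)$ (and $s_i$ with probability $0$), which yields the smoothness claim; and if a $\gamma$-fraction of the codeword is erased, each query is erased with probability at most $\tfrac{\gamma N}{N-1} = \gamma\bigl(1 + \tfrac{1}{N-1}\bigr)$, so by linearity of expectation over the $q-1$ queries the expected number of erased queries is at most $\bigl(1 + \tfrac{1}{N-1}\bigr)\gamma(q-1)$.

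\textbf{Main obstacle.} Every step above is a one- or two-line check except the claim that the triangular grid $S$ is an interpolating set for total-degree-$\le d$ polynomials --- equivalently, that $\LDCEnc$ is well defined and injective with $K$-dimensional image. This is classical multivariate Lagrange interpolation on a lower set, and I would cite it rather than reprove it; beyond this I expect no real difficulty. The only cosmetic point is matching the stated smoothness constant: the honest bound from this decoder is $\tfrac{q-1}{N-1}$, which agrees with the claimed $q/(2N)$ up to a constant factor and is all the subsequent derandomization argument needs.
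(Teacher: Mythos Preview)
Your proposal is correct and follows the same Reed--Muller line-decoder approach as the paper. You are in fact more careful than the paper in two places: the paper asserts that an \emph{arbitrary} $K$-point set $S\subseteq\F_q^r$ works as an interpolating set for degree-$\le d$ polynomials (which is not true in general), whereas you correctly specify the triangular grid and flag the need to justify unisolvence; and you rightly observe that the honest smoothness bound is $(q-1)/(N-1)$ rather than the stated $q/(2N)$---the paper's own proof has this inequality pointing the wrong way, though as you say only the $O(q/N)$ order matters downstream.
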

\begin{proof}
    The properties above are obtained by using a standard Reed-Muller code, and we give here a succinct proof for the sake of completeness. Fix an arbitrary set $S \subseteq \F_q^r$ of size $|S| = K$, and denote its elements as $S = \{s_1,\dots,s_K\}$.
    We define $\LDCEnc$ as follows: for a message $w \in \F_q^K$ to encode, there is a unique polynomial $f_w:\F_q^r \rightarrow \F_q$ on $r$ variables and degree at most~$d$ such that $f_w(s_i) = w[i]$ 
    (this follows by a simple interpolation argument). 
    We define the encoding of $w$ to be $\LDCEnc(w) = (f_w(x) \mid x \in \F_q^r)$, i.e., the evaluation of $f_w$ on all values of~$\F_q^r$. 

    The randomized function $\LDCDec$ receives as input an  index $i \in [K]$ to decode. We may assume that it also receives as input a randomness string $R$ which is used for any random choice of the function. 
    The function is defined as follows: First, we pick a uniformly random point $x \in \F_q^r\setminus \vec 0$. Let $L$ be the line  
    \[
    L = \{s_i + ax \mid a \in \F_q\}\text{,}
    \]
    and let $g_w: \F_q \rightarrow \F_q$ be the univariate polynomial  defined as
    \[
    g_w(a) = f_w(s_i+ax)\text{.}
    \]
    In particular, we note that the degree of $f_w$ bounds from above the degree of $g_w$. Since $f_w$ has degree at most $d$, it follows that $g_w$ also has degree at most $d$.
    We query from the codeword all points in $L \setminus \{s_i\}$. Since $|L| = q$, then indeed $\LDCDec$ queries $q-1$ coordinates of the codeword. Moreover, we note that the queried symbols  correspond exactly to the evaluation of $g_w$ on all points in~$\F_q \setminus \{0\}$.
    
    If $L \setminus \{s_i\}$ contains more than $\delta(q-1)$ erased indices, we return $\bot$. Otherwise, we have 
    at least $(1-\delta) (q-1)$ positions that  were not erased and their value is correct. 
    By a standard  interpolation argument, since $d\le(1-\delta)(q-1)-1$, there exists at most a single univariate polynomial~$h$ of degree at most~$d$ that agrees with all the non-erased points of~$g_w$.
    Moreover, since $g_w$ has degree at most~$d$, we have that this polynomial must be~$h = g_w$. Hence, the algorithm can compute $g_w:\F_q \rightarrow \F_q$, and return $g_w(0)$. Next, we prove the properties of this algorithm, described above.

    \begin{enumerate}
        \item This follows trivially by definition of $g_w$ that $g_w(0) = f_w(s_i)$.
        \item Since there is exactly one line going through any given two points in $\F_q^r$, then any two lines going through $s_i$ do not intersect on any other point. Since we choose a random line going through $s_i$, each point other than $s_i$ has has probability at most $(q-1)/(N-1) \geq q/2N$ 
        to be queried. Moreover, since $s_i$ is queried with probability zero, it follows that any given point has probability at most $q/2N$ to be queried. In other words, the code is $(q/2N)$-smooth.
        
        \item Consider the lines $L_1,\dots,L_{\frac{N-1}{q-1}}$ going through $s_i$. As stated in the previous paragraph, any two such lines only intersect at $s_i$. Since the total number of erasures is $\gamma N$, it follows that for a random line $L \in \{L_1,\dots,L_{\frac{N-1}{q-1}}\}$, the number of erasures in $L \setminus \{s_i\}$ is $\gamma N \cdot \frac{q-1}{N-1} = (1 + \frac{1}{N-1}) \gamma (q-1)$, in expectation.
        \qedhere
    \end{enumerate}
\end{proof}

We next show a specific choice of parameters for the code presented in \Cref{lem:reed_muller}, that fits best to our \Cref{alg:main}; this is the code used as the LDC instantiation described in \Cref{sec:primitives}.
\begin{lemma}
\label{lem:LDC_muller_augmented}
    Let $\delta \in (0,1)$ be a constant and let $N,q$ be sufficiently large integers, where $2^{\sqrt{\log{N}}} \leq q \leq 2^{2\sqrt{\log{N}}}$ is a power of a prime, and $N$ is a power of $q$. \strut There exists an LDC with encoding function $\LDCEnc: \F_q^K \rightarrow \F_q^N$ with $K = N \cdot 2^{-O(\sqrt{\log{N}}\log\log{N})}$, whose randomized non-adaptive local decoding algorithm satisfies Properties~(\ref{item:rm_threshold})--(\ref{item:rm_expectation}) of \Cref{lem:reed_muller}. 
\end{lemma}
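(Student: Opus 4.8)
The plan is to obtain \Cref{lem:LDC_muller_augmented} as a direct instantiation of the Reed--Muller erasure LDC of \Cref{lem:reed_muller}, optimizing its two free parameters $r$ and $d$ and then checking that the induced message length $K$ meets the claimed bound. Concretely, since $N$ is a power of $q$ I would set $r = \log_q N$, a positive integer; the window $2^{\sqrt{\log N}} \le q \le 2^{2\sqrt{\log N}}$ then forces $\tfrac12\sqrt{\log N} \le r \le \sqrt{\log N}$, i.e.\ $r = \Theta(\sqrt{\log N})$ (only the lower bound $q \ge 2^{\sqrt{\log N}}$, hence $r \le \sqrt{\log N}$, is actually needed below). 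For the degree I would take $d$ as large as \Cref{lem:reed_muller} permits, namely $d = \lfloor (1-\delta)(q-1)\rfloor - 1$; since $\delta \in (0,1)$ is a fixed constant and $q$ is large, this is a positive integer with $d \le (1-\delta)(q-1)-1$ and $d = \Theta(q)$, the implied constants depending only on $\delta$. Feeding $(r,d)$ into \Cref{lem:reed_muller} immediately produces an encoding $\LDCEnc \colon \F_q^K \to \F_q^N$ with $K = \binom{r+d}{d}$, together with a randomized non-adaptive decoder making $q-1$ queries that satisfies Properties (1)--(3) verbatim, since those properties are asserted for precisely this construction.

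What remains is the arithmetic showing $K = \binom{r+d}{d} = N \cdot 2^{-O(\sqrt{\log N}\log\log N)}$, which I would split into an upper and a lower bound. For $K \le N$: as $d < q$, a nonzero $r$-variate polynomial of total degree at most $d$ cannot vanish identically on $\F_q^r$ (Schwartz--Zippel over $\F_q$ gives a vanishing fraction of at most $d/q < 1$), so the evaluation map defining $\LDCEnc$ is injective on degree-$\le d$ polynomials and $K = \binom{r+d}{d} \le q^r = N$; in particular the code has rate at most one. For the lower bound I would use the elementary estimate $\binom{r+d}{d} = \binom{r+d}{r} \ge (d/r)^r$, so $\log_2 K \ge r(\log_2 d - \log_2 r)$; substituting $\log_2 d = \log_2 q - O_\delta(1)$, $\log_2 r \le \tfrac12\log_2\log_2 N$, and $r\log_2 q = \log_2 N$ yields $\log_2 K \ge \log_2 N - O(r\log\log N) = \log_2 N - O(\sqrt{\log N}\log\log N)$, as required. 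Combined with $K \le N$ this pins down $K$ to within the stated factor.

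I do not anticipate a genuine obstacle here: the lemma is essentially a parameter-tuning of \Cref{lem:reed_muller}. The only points requiring care are (i) confirming that the constraint $d \le (1-\delta)(q-1)-1$ still permits $d = \Theta(q)$, which holds because $\delta$ is bounded away from $1$ by a constant; and (ii) tracking that all the $\Theta$'s --- in $r = \Theta(\sqrt{\log N})$ and $d = \Theta(q)$ --- carry constants independent of $N$, so that the deficit $\log_2 N - \log_2 K$ is genuinely $O(\sqrt{\log N}\log\log N)$ rather than larger; this is exactly where the assumed range of $q$ is used. A minor but worth-stating point is the injectivity argument giving $K \le N$, since "LDC with encoding $\F_q^K \to \F_q^N$" implicitly asks for a bona fide code of rate at most one.
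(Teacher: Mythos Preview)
Your proposal is correct and follows essentially the same route as the paper: instantiate \Cref{lem:reed_muller} with $r=\log_q N$ and $d=\lfloor(1-\delta)(q-1)\rfloor-1$, then lower-bound $K=\binom{r+d}{r}\ge (d/r)^r$ and use $r=\Theta(\sqrt{\log N})$, $d=\Theta(q)$ to obtain $K\ge N\cdot 2^{-O(\sqrt{\log N}\log\log N)}$. The only addition you make beyond the paper is the explicit $K\le N$ argument via Schwartz--Zippel; the paper omits this, as the statement $K=N\cdot 2^{-O(\cdot)}$ is read as a lower bound on the rate and injectivity of the evaluation map is implicit in the Reed--Muller construction.
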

\begin{proof}
    We define our LDC by instantiating the code in \Cref{lem:reed_muller} with parameters $\delta',r',d',q'$ set as follows:  $q' = q$, $\delta' = \delta$, 
    $d' = \lfloor (1-\delta')(q'-1) \rfloor - 1$, and $r' = \log_{q}{N}$. In particular, we have that 
    \begin{align*}
        K &= \binom{r'+d'}{d'} =\binom{r'+d'}{r'} \geq
        \left(\frac{d'}{r'}\right)^{r'} \\
        &\geq \left(\frac{(1-\delta')q}{2r'}\right)^{r'} = \frac{N}{((2/(1-\delta))\log_q(N))^{\log_q{N}}} \\ 
        &= \frac{N}{2^{O(\sqrt{\log{N}}\log\log{N})}}\text{.}
    \end{align*}

    Where the final equality follows because $q = 2^{\Theta(\sqrt{\log{N}})}$, hence $\log_q{N} = \Theta(\sqrt{\log{N}})$.
    The local decoding algorithm $\LDCDec$ is defined the same as in \Cref{lem:reed_muller}. Hence Properties~(\ref{item:rm_threshold})--(\ref{item:rm_expectation}), proven in \Cref{lem:reed_muller}, also hold for this code.
\end{proof}

Next, we show a \emph{deterministic}
``bulk decoding'' algorithm 
for the code we constructed in \Cref{lem:LDC_muller_augmented}. 
Specifically, in 
bulk-decoding, we want to locally decode a set of $P$ codewords instead of a single codeword. 

The possibility of derandomizing the LDC decoding in our setting relies on two important differences from standard LDC decoding. First, we only assume erasures (corresponding to node crashes). Second, the indices that are erased are known to the decoder in advance (this fits the information a node has in \Cref{alg:main} about crashed nodes). 
This knowledge allows deterministic decoding: instead of randomly querying $q$ indices (nodes) and trying to decode according to the potentially erased symbols they hold, the decoder goes over the randomness strings given to the decoding algorithm and checks \emph{in advance} whether that randomness results in querying sufficiently many non-crashed nodes so that the decoding succeeds according to Property~(\ref{item:rm_threshold}) of \Cref{lem:reed_muller}.
If not, it can skip to the next randomness string until a good randomness string is found.

This approach can be highly inefficient in our setting when considering bulk-decoding of $P$ different codewords. Suppose that a node wishes to decode $P$ indices of the same codeword, then if we derandomize each decoding separately, all the decodings will query the same set of $q$ nodes because the same set would result from the above exhaustive search for a good randomness string, scaling-up the congestion by~$P$. 

In the following we show how to derandomize bulk-decoding by scaling the congestion by $(P\log N)/N$, i.e., keeping the smoothness of the decoding scheme, up to $(\log N)/N$ terms. 
It is important to mention that each individual decoding is guaranteed to succeed, that is, we only choose a randomness string if it satisfies Property~(\ref{item:rm_threshold}) of \Cref{lem:reed_muller}.
In particular, we show that such randomness strings with good properties \emph{exist}, which means that all nodes can locally compute these strings and use them in~\Cref{alg:main} without the need for randomness. 

Let us set up the notation for the derandomization proof. 
The decoding algorithm has access to $P$ possibly corrupted codewords, $C_1,\ldots, C_p$, where $C_j=\LDCEnc(m_j)$ of some message $m_j$, up to erasures.
For each such codeword~$C_j$ with $j\in[P]$, the algorithm is given as input an index $i_j\in[K]$ to decode.
Further, the algorithm is given a set $I_j\subseteq [N]$ of erased codeword symbols of~$C_j$, where $|I_j|<\alpha N$.

The algorithm returns a collection of randomness strings $R_1,\ldots, R_P$ that imply a respective set of queries $Q_1,\dots,Q_P$  (where for any $j\in[P]$,  we have $Q_j\subset[N]$ and $|Q_j|=q$), with the following properties:
(a) no index $\ell\in [N]$ appears in more than $ O(\lceil Pq/N \rceil \log{N})$ sets of $Q_1,\dots,Q_P$, and 
(b) for any $j\in[P]$, we can locally decode the symbol in index $i_j$ of~$m_j$ by querying the $q$ symbols of~$C_j$ specified by~$Q_j$.

\begin{lemma}
\label{lem:compute_randomness_deter}
There is a deterministic algorithm $\LDCDetDec$, which for any integer $P\ge1$, any constant value $0 < \delta' < \delta$, any sets $I_1,\dots,I_P \subseteq [N]$ such that $|I_j| \leq \delta' N$ for all $1\leq j\leq P$, and  any set of indices  $i_1,\dots,i_P \in [K]$, returns randomness strings $R_1,\dots,R_P$ with the following properties: Let $Q_j \subseteq [N]$ be the queries made by $\LDCDec(.,i_j)$ using randomness strings $R_j$, where $\LDCDec$ is the decoding function of \Cref{lem:LDC_muller_augmented}, then:
\begin{enumerate}
     \item \label{item:query_const_cong} For every $\ell \in [N]$, it holds that $|\{j \in [P] \mid \ell \in Q_j\}| = O(\lceil Pq/N \rceil \log{N})$.
     \item \label{item:query_const_correctness} 
     Fix $j\in[P]$ and $m_j\in[q]^k$. Let $\LDCEnc$ be the encoding function from \Cref{lem:LDC_muller_augmented}, and let $C_j$ be $\LDCEnc(m_j)$ with indices in $I_j$ erased. Then when applying $\LDCDec(C_j,i_j)$ with randomness string $R_j$, the decoding returns the symbol in index $i_j$ of~$m_j$. 
 \end{enumerate}
\end{lemma}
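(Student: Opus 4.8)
The plan is to prove the existence of good randomness strings by a probabilistic argument, and then observe that such strings can be found deterministically (hence the "algorithm" $\LDCDetDec$) by an exhaustive search, since the whole setup is finite and known to the decoder. First, for each $j \in [P]$, I would restrict attention to the randomness strings $R$ such that $\LDCDec(\cdot, i_j)$ with randomness $R$ queries at most $\delta(q-1)$ erased positions of $C_j$. By Property~(\ref{item:rm_expectation}) of \Cref{lem:reed_muller} (applied with $\gamma = \delta' < \delta$), a uniformly random choice of the line through $s_{i_j}$ queries at most $(1 + \tfrac{1}{N-1})\delta'(q-1) < \delta(q-1)$ erased positions \emph{in expectation} (for $N$ large enough, since $\delta' < \delta$ strictly), so by Markov's inequality a constant fraction of the randomness strings are "safe" for codeword $j$ in the sense that the decoding succeeds by Property~(\ref{item:rm_threshold}). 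Restricting to this constant-fraction subset $\mathcal{G}_j$ of safe strings guarantees Item~(\ref{item:query_const_correctness}) automatically, so it remains to choose $R_j \in \mathcal{G}_j$ for each $j$ so that no index $\ell \in [N]$ is over-queried.

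For the congestion bound, I would pick each $R_j$ independently and uniformly at random from $\mathcal{G}_j$, and analyze the resulting query multiplicities via a balls-into-bins argument. The key point is that conditioning on lying in $\mathcal{G}_j$ only blows up the query probability of any fixed index $\ell$ by the constant factor $1/\Pr[R \in \mathcal{G}_j] = O(1)$; combined with the $(q/2N)$-smoothness from Property~(\ref{item:rm_smoothness}), each index $\ell$ is queried by decoding instance $j$ with probability $O(q/N)$ even after this conditioning. Hence the expected number of instances querying a fixed $\ell$ is $O(Pq/N)$. A Chernoff bound then shows the probability that index $\ell$ is queried by more than $c\lceil Pq/N\rceil \log N$ instances is at most $N^{-2}$ (for a suitable constant $c$); this needs the standard trick that when $Pq/N < 1$ one uses the additive form of the Chernoff bound with the $\lceil \cdot \rceil$ absorbing the sub-one expectation. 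Union bounding over all $N$ choices of $\ell$ gives failure probability at most $1/N < 1$, so a choice of $(R_1,\dots,R_P)$ satisfying both items exists. Finally, since everything — the code, the sets $I_j$, the indices $i_j$, and the sets $\mathcal{G}_j$ — is explicitly known and finite, the decoder can enumerate all tuples $(R_1,\dots,R_P)$ and output the first one satisfying Items~(\ref{item:query_const_cong}) and~(\ref{item:query_const_correctness}); this is a deterministic algorithm (with no communication), establishing $\LDCDetDec$.

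I expect the main obstacle to be the congestion tail bound in the regime $Pq/N \ll 1$, where a single codeword already contributes $q$ queries but the "average load" per index is far below one: here one cannot simply invoke a multiplicative Chernoff bound around the mean, and must instead argue that the $\lceil Pq/N\rceil$ term correctly captures the $\Theta(q)$ contribution from individual decodings together with the $\log N$ concentration overhead. A secondary subtlety is making the conditioning argument clean — specifically, showing that restricting $R$ to $\mathcal{G}_j$ (an event of constant probability, \emph{not} probability $1-o(1)$) preserves near-smoothness up to a constant; this follows since for any index $\ell$, $\Pr[\ell \in Q_j \mid R \in \mathcal{G}_j] \le \Pr[\ell \in Q_j]/\Pr[R \in \mathcal{G}_j] = O(q/N)$, but it must be stated carefully because the constant in the congestion bound of \Cref{def:good_seed}(1) depends on it.
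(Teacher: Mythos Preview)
Your proposal is correct and follows essentially the same probabilistic-method strategy as the paper: use Markov on Property~(\ref{item:rm_expectation}) to guarantee a constant fraction of ``safe'' randomness strings per decoding, use smoothness for the per-index expected load, apply a Chernoff tail bound plus a union bound over $\ell\in[N]$, and finally derandomize by exhaustive enumeration.

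The one technical difference worth noting is how the correctness constraint is folded into the congestion analysis. You sample $R_j$ uniformly from the safe set $\mathcal{G}_j$ and absorb the conditioning into the per-index query probability via $\Pr[\ell\in Q_j\mid R\in\mathcal{G}_j]\le \Pr[\ell\in Q_j]/\Pr[R\in\mathcal{G}_j]=O(q/N)$. The paper instead samples $M=\Theta(\log N)$ \emph{unconditional} uniform strings $R_j^1,\dots,R_j^M$ per decoding, lets $R_j$ be the first safe one, and bounds the stronger quantity $|\{(j,t):\ell\in Q_j^t\}|$ over all $PM$ strings; this sidesteps conditioning entirely at the cost of the extra $M$ factor, which is harmless since it is exactly the $\log N$ already appearing in the target bound. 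Your route is slightly more direct and gives the same asymptotic bound; the paper's route keeps the Chernoff step on genuinely i.i.d.\ indicators. Both handle the $Pq/N<1$ regime the same way, via the form $\Pr(X\ge R)\le 2^{-R}$ for $R\ge 6\mu$ (Theorem~\ref{thm:chernoff}(3)), so your anticipated obstacle there is not a real difficulty.
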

\begin{proof}

    We show the existence of $R_1,...,R_P$ satisfying the properties of the lemma using the probabilistic method. This implies a deterministic algorithm in which we can enumerate all possible randomness strings $R_1,\dots,R_P$ and choose the lexicographically smallest set of strings $(R_1,\dots,R_P)$ satisfying the properties above.

    We choose each $R_j$ using the following random process. 
    Let $M = c\log{K}$ for a sufficiently large constant $c >0$. 
    For any $j\in[P]$, let
   $R_j^1,\dots,R_j^M$
    be uniformly random strings in $\{0,1\}^{\poly(K)}$. 
    Let $Q_j^t\subset[N]$ be the queries that $\LDCDec$ performs given the decoding index $i_j$ and randomness string $R_j^t$. Recall that $\LDCDec$ is non-adaptive which means that these indices do not depend on the codeword $C_j$ or the erased symbols~$I_j$.
    Set $R_j=R_j^t$ for the first $t \in [M]$ such that $|Q_j^t \cap I_j| \leq \delta (q-1)$, or $\bot$ if no such string exists. If $R_j \neq \bot$, then item~(\ref{item:query_const_correctness}) of the lemma is guaranteed by Property~(\ref{item:rm_threshold}) of \Cref{lem:LDC_muller_augmented}. Next, we show that $R_j \neq \bot$ with high probability.

    By Property~(\ref{item:rm_expectation}) of \Cref{lem:LDC_muller_augmented}, we have that $E[|Q_j^t \cap I_j|] \leq (1+\frac{1}{N-1})\delta' (q-1)$. By Markov's inequality, we have that 
    \[
    \Pr\left(|Q_j^t \cap I_j| 
    \geq \delta(q-1)\right) 
    \leq  
    \frac{(1+\frac{1}{N-1})\delta' (q-1)}{\delta(q-1)}
    \]
    Since $(1+\frac{1}{N-1})\delta'/\delta$
    is bounded by a constant smaller than $1$ for sufficiently large $N$ and since we have $M = O(\log{K})$ independent such strings by a correct choice of constants, we can guarantee that at least one $Q_j^t$ satisfies $|Q_j^t \cap I_j| \leq \delta'(q-1)$, with high probability.

    Finally, we show that Property~(\ref{item:query_const_cong}) holds for our choice of $R_1,\dots,R_P$. We prove the slightly stronger claim: that for every $\ell \in [N]$, it holds that 
    \[
    |\{(j,t) \in [P] \times [M] \mid \ell \in Q^t_j\}| = O(\lceil Pq/N \rceil \log{N})\text{.}
    \]
    
    Informally, we can think of the queries as a sort of bins-into-balls process. Each local decoding call with randomness string $R_j^t$ throws $q$ balls into $N$ bins. However, for a single call, the locations of the $q$ balls are \emph{dependent}.
        Nevertheless, by the smoothness  property of the LDC (Property~(\ref{item:rm_smoothness}) of \Cref{lem:LDC_muller_augmented}), any specific bin $i$ gets a ball with probability at most $q/2N$, 
        and this probability is independent across different $R_j$'s. 
         
        Fix some index~$\ell \in [N]$. For $j\in[P]$ and $t\in[M]$, denote by $Y_j^t$ the indicator that equals $1$ if and only if  $\ell \in Q_j^t$. 
        Set $Y=\sum_{t=1}^{M} \sum_{j=1}^{P}Y_j^t$. 
        By Property~(\ref{item:rm_smoothness}) of \Cref{lem:LDC_muller_augmented}, we have
     \begin{align*}
        &\forall (j,t), \quad\Pr(Y_j^t=1)\le q/2N, 
        \intertext{hence, by linearity of expectation,}
        &E[Y] \le  PM \cdot q/2N\text{.}
     \end{align*}
     Then, we use a Chernoff inequality  (\Cref{thm:chernoff}(3)) 
     on the $P$ independent indicators, bound the probability that  index~$\ell$ is queried too many times. 
     Specifically,  
     \[
     \Pr( Y > 6\lceil P q/(2N)\rceil M) =  
\Pr( Y > 6c\lceil P q/(2N)\rceil \log{N}) \le 2^{-6c \log N}\text{,}
\]
which follows since $\lceil P q/2 N\rceil\ge 1$.
     Taking the union bound over all indices $\ell \in [N]$ and setting $c$ to be sufficiently large, we get that all indices $i \in [N]$ are queried less than $6c \lceil P q/2 N\rceil \log{N}$ times, with high probability. 
\end{proof}

Our final task is to complete the proof \Cref{lem:good_seeds_computation} (restated below). Recall that in the LDC instantiation of our circuit computation algorithm, we set $N = n$, i.e., the LDC codeword length is set to be the size of the network. 
For convenience, let us also recall the definition of a collection of good randomness strings (\Cref{def:good_seed}).

\GoodSeedDef*
\GoodQueryCompute*
\begin{proof}
Recall that in the \Store procedure, each node $v_i$ stores the $i$-th symbol of each stored LDC codeword. Let $I$ be the set of indices stored in the crashed nodes at the start of $\BulkRetrieve$, i.e., $i \in I$ if and only if $v_i \in V \setminus \alive$. Conceptually, $I$ corresponds to our erasure set when retrieving a stored value.

We apply \Cref{lem:compute_randomness_deter} with the following parameters: we set $P = 2^\ell|W_j|$, and set $\delta' = (\alpha+\delta)/2$, where $\alpha$ is the fraction of node crashes the adversary can cause and $\delta$ is the distance parameter of the LDC. Recall that we set $\delta$ to be a constant $\alpha < \delta$, hence $\alpha < \delta' < \delta$.  For each index $w \in W_j$, we set $2^\ell$ decoding indices $i_{j,w,1},\dots,i_{j,w,2^{\ell}} = w$ and we set all erasure sets $I_1,\dots,I_P$ to be equal to $I$.

Algorithm $\LDCDetDec$ returns a collection of randomness strings $\{R_{v_j,w,k}\}_{w \in W_j,k \in 2^\ell}$ such that the resulting query sets $\{Q_{v_j,w,k}\}_{w \in W_j,k \in 2^\ell}$ have properties as described in \Cref{lem:compute_randomness_deter}. We conclude by showing that these randomness strings are good according to \Cref{def:good_seed}.  

\textbf{Satisfying \Cref{def:good_seed}(\ref{item:good_seed_cong}):}  As stated at the start of the proof, in an invocation of the $\Store$ procedure, each node $v_i$ receives only the $i$'th symbol of each codeword. By  \Cref{lem:compute_randomness_deter}(\ref{item:query_const_cong}), each index $i \in [n]$ is contained in at most $O(\lceil 2^\ell |W_j|q/n \rceil \log{n})$ query sets from $Q_{v_j,w,i}$, it follows that each node is queried at most $O(\lceil 2^\ell |W_j|q/n \rceil \log{n})$ times by $v_j$.

\textbf{Satisfying \Cref{def:good_seed}(\ref{item:good_seed_correctness}):} \Cref{lem:compute_randomness_deter}(\ref{item:query_const_correctness}) promises that for any $w \in W_j$ and $i \in [2^\ell]$, given all queries of $Q_{v_j,w,i}$ to the codeword $U_{(w)}$, one can decode the symbol $w$ from $U_{(w)}$. In particular, if node $v_j$ receives responses to all queries made to nodes of $\alive$, it can decode $w$ from the respective stored codeword $U_{(w)}$ and thus the \Retrieve invocation succeeds.
\end{proof}

 \section*{Acknowledgments} 
 We would like to thank Merav Parter and Noga Ron-Zewi for helpful discussions. R. Gelles is supported in part by the United States -- Israel Binational Science Foundation (BSF), grant No.\@ 2020277. O. Fischer is supported in part by the Israel Science Foundation, grant No. 1042/22 and 800/22. K. Censor-Hillel is supported in part by the Israel Science Foundation, grant No. 529/23.

\bibliography{bibliography}

\appendix

\section{Chernoff Bounds}

\begin{theorem}[Chernoff inequality for independent Bernoulli variables] \label{thm:chernoff}
Let $X_1,\ldots, X_n$ be mutually independent 0--1 random variables with $\Pr(X_i=1)=p_i$. Let $X=\sum_{i=1}^n X_i$ and set $\mu=E[X]$. The following holds,
\begin{enumerate}
    \item for any $\delta>0$, 
    \( 
    \Pr(X\ge (1+\delta)\mu) \le 
        \left( 
            \frac{e^\delta}{(1+\delta)^{(1+\delta)}}
        \right)^{\mu} 
    \)
    \item for $0<\delta\le 1$,
    \(
    \Pr(X\ge (1+\delta)\mu) \le
        e^{-\mu\delta^2/3}
    \)
    \item for $R \ge 6\mu$,
    \(
    \Pr(X\ge R) \le 2^{-R}
    \)
\end{enumerate}
\end{theorem}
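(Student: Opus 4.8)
\textbf{Proof plan for the Chernoff inequality (\Cref{thm:chernoff}).}
The plan is to proceed by the standard exponential-moment (Bernstein--Chernoff) method, establishing the three parts in sequence so that each later part follows from the earlier ones by elementary estimates. First I would prove part~(1), which is the master bound from which everything else is derived. Fix $t>0$ and apply Markov's inequality to the random variable $e^{tX}$: since $X = \sum_i X_i$ with the $X_i$ mutually independent, $E[e^{tX}] = \prod_{i=1}^n E[e^{tX_i}]$. For each Bernoulli term, $E[e^{tX_i}] = 1 + p_i(e^t - 1) \le \exp\!\big(p_i(e^t-1)\big)$ using $1+x\le e^x$. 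Multiplying over $i$ gives $E[e^{tX}] \le \exp\!\big(\mu(e^t-1)\big)$, where $\mu = \sum_i p_i$. Hence for any $a$, $\Pr(X \ge a) \le e^{-ta}\exp(\mu(e^t-1))$. Taking $a = (1+\delta)\mu$ and optimizing over $t>0$ yields the minimizer $t = \ln(1+\delta)$, and substituting back gives exactly $\Pr(X \ge (1+\delta)\mu) \le \big(e^{\delta}/(1+\delta)^{1+\delta}\big)^{\mu}$, which is part~(1).

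Next I would derive part~(2) from part~(1) by bounding the right-hand side for $0<\delta\le 1$. It suffices to show $\frac{e^{\delta}}{(1+\delta)^{1+\delta}} \le e^{-\delta^2/3}$, i.e., after taking logarithms, $\delta - (1+\delta)\ln(1+\delta) \le -\delta^2/3$. Setting $g(\delta) = (1+\delta)\ln(1+\delta) - \delta - \delta^2/3$, I would check $g(0)=0$ and show $g'(\delta) = \ln(1+\delta) - 2\delta/3 \ge 0$ on $[0,1]$; this in turn follows since $g'(0)=0$ and $g''(\delta) = \frac{1}{1+\delta} - \frac{2}{3}$, which is nonnegative exactly when $\delta \le 1/2$ and negative on $(1/2,1]$, so one checks $g'(1) = \ln 2 - 2/3 > 0$ and concludes $g' \ge 0$ throughout the interval by noting $g'$ increases then decreases but stays nonnegative at both endpoints of the decreasing portion. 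Raising both sides to the power $\mu$ gives $\Pr(X \ge (1+\delta)\mu) \le e^{-\mu\delta^2/3}$.

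Finally, part~(3): for $R \ge 6\mu$ I would write $R = (1+\delta)\mu$ with $\delta = R/\mu - 1 \ge 5$ and return to part~(1). For $\delta \ge 5$ one has $1+\delta \ge 2e$ (since $1+5 = 6 > 2e \approx 5.44$), hence $(1+\delta)^{1+\delta} \ge (2e)^{1+\delta}$, so $\frac{e^{\delta}}{(1+\delta)^{1+\delta}} \le \frac{e^{\delta}}{(2e)^{1+\delta}} = \frac{1}{2^{1+\delta} e} \le 2^{-(1+\delta)} \le 2^{-(1+\delta)\mu/\mu}$; more directly, raising to the $\mu$ power gives $\Pr(X\ge R) \le 2^{-(1+\delta)\mu} = 2^{-R}$, using $\mu \ge 1$ is not even needed here since the exponent is $(1+\delta)\mu = R$. (If one prefers the slack $1+\delta \ge 2e$ at $\delta = 5$, the same computation goes through; any constant $\ge 6$ works, and $6$ gives a clean bound.) I do not expect any genuine obstacle here; the only mildly delicate point is the calculus estimate in part~(2) verifying $\delta - (1+\delta)\ln(1+\delta) \le -\delta^2/3$ on $(0,1]$, where one must be slightly careful about the sign of the second derivative of the auxiliary function, but this is entirely routine single-variable analysis.
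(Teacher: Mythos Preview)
Your proof plan is correct and is exactly the standard exponential-moment argument for the Chernoff bound. The paper, however, does not prove this theorem at all: it simply cites Theorem~4.4 of Mitzenmacher--Upfal, so there is no ``paper's own proof'' to compare against. Your derivation is essentially what one finds in that reference, so in substance you are aligned with the source the paper defers to; you have just written it out rather than citing it.

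One tiny remark on part~(3): your substitution $\delta = R/\mu - 1$ tacitly assumes $\mu > 0$. The degenerate case $\mu = 0$ (all $p_i = 0$) is trivial since then $X = 0$ almost surely, but it is worth a one-line mention so the argument is formally complete.
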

For proof, see Theorem~4.4 in \cite{MU17book}.
\end{document}